\documentclass[12pt]{article}
\usepackage{graphicx, xcolor}
\usepackage{caption}
\usepackage{algorithm}
\usepackage{newfloat}
\usepackage{listings}
\usepackage{natbib}
\usepackage{amsmath}
\usepackage{amsthm}
\usepackage{amssymb}
\usepackage[noend]{algorithmic}
\usepackage[capitalise]{cleveref}
\usepackage{comment}
\usepackage{tikz}
\usepackage{tikz-network}
\usetikzlibrary{positioning}

\usepackage{graphicx, xcolor}
\usepackage{algorithm}
\usepackage[noend]{algorithmic}
\usepackage[capitalise]{cleveref}
\usepackage{tikz}
\usepackage{tikz-network}

\newtheorem{theorem}{Theorem}
\newtheorem{corollary}[theorem]{Corollary}
\newtheorem{lemma}[theorem]{Lemma}

\newcommand{\Z}{\mathbb{Z}}

\newcounter{algpre}[algorithm]
\newcounter{algpost}[algorithm]

\newcommand{\PRECOND}{\refstepcounter{algpost}\refstepcounter{algpre}\REQUIRE}

\newcommand{\POSTCOND}{\refstepcounter{algpre}\refstepcounter{algpost}\ENSURE}

\crefname{algpre}{Precondition}{Preconditions}
\crefname{algpost}{Postcondition}{Postconditions}

\newcommand{\target}[1]{\ensuremath{\mathit{target}(#1)}}
\newcommand{\pebble}[1]{\ensuremath{\mathit{pebble}(#1)}}
\newcommand{\demand}[1]{\ensuremath{d(#1)}}
\newcommand{\balance}{\ensuremath{\mathit{balance\_subtrees}}}
\newcommand{\inject}{\ensuremath{\mathit{inject\_pebble}}}
\newcommand{\extract}{\ensuremath{\mathit{extract\_pebble}}}
\newcommand{\move}{\ensuremath{\mathit{move\_pebble}}}
\newcommand{\OPT}{\ensuremath{\mathit{OPT}}}
\newcommand{\directtraffic}
{\ensuremath{\mathit{direct\_traffic}}}
\newcommand{\sendagent}
{\ensuremath{\mathit{send\_agent\_from}}}

\begin{document}
\title{Optimal Unlabeled Pebble Motion on Trees and its Application to Multi-Agent Path Finding}
\author{Annalisa Calvi, Pierre Le Bodic\footnote{pierre.lebodic@monash.edu}, Samuel McGuire, Edward Lam}

\maketitle

\begin{abstract}
    Given a tree, a set of pebbles initially stationed at some nodes of the tree, and a set of target nodes, the Unlabeled Pebble Motion on Trees problem (UPMT) asks to find a plan to move the pebbles one-at-a-time from the starting nodes to the target nodes along the edges of the tree while minimizing the number of moves.
    This paper proposes the first optimal algorithm for UPMT that is asymptotically as fast as possible, as it runs in a time \emph{linear} in the size of the input (the tree) and the size of the output (the optimal plan). We extend this to solve unlabeled Multi-Agent Path Finding (MAPF) in trees, providing novel bounds on optimal makespan, sum of costs, and pebble motion plan length.
\end{abstract}

\section{Introduction}

The input of the Unlabeled Pebble Motion on Trees problem (UPMT) is a tree $T=(V, E)$, with $n = |V|$.
A set of $k$ pebbles is initially located at $k$ distinct nodes of the tree, called \emph{starting nodes}.
There are also $k$ nodes described as \emph{targets}, some of which may coincide with the starting nodes.
A \emph{move} is an action that moves a single pebble from its current node to an adjacent pebble-less node.
A \emph{plan} is a sequence of moves.
A \emph{feasible} plan moves every pebble to a target.
The \emph{length} of a plan is the number of moves it uses.
The length of an optimal UPMT plan is denoted \OPT{}.

In the \emph{labeled} version of the problem, every starting node and every target node is labeled with a specific pebble, and every target node must be reached by its designated pebble.
In \emph{unlabeled} (or \emph{anonymous}) Pebble Motion problems, a target can be reached by any pebble.

Pebble Motion is related and has been used \citep[see e.g.][]{kulich2019} to tackle Multi-Agent Path Finding (MAPF)~\citep{stern2019}, a problem in which pebbles/agents can move synchronously, and hence lends itself well to real-life problems such as automated warehouses. The length of a path taken by an agent in MAPF is the sum of its move actions and wait actions away from its target. Feasible plans in MAPF can be evaluated using the \emph{makespan} or \emph{sum of costs} metrics; makespan is the maximum path length of an agent, while sum of costs is the sum of the path length of all agents. As with the Pebble Motion problem, MAPF can be labeled or unlabeled. This paper discusses unlabeled MAPF in trees.

This paper proposes an optimal algorithm for UPMT with time complexity $O(n\log n + \OPT \log n)$, 
where the encoding size of the input is $O(n \log n)$, and the encoding size of the plan is $O(\OPT{} \log n)$\footnote{This result also appears in the conference paper \cite{lebodic2024}.}. We also propose an algorithm to solve unlabeled MAPF in trees suboptimally, with the same time complexity. This algorithm guarantees a feasible plan with makespan at most $n-k$ and sum of costs at most $k(n-k)$.

We also show that $\OPT{} \leq k(n-k)$.
Complementary to this worst-case result, we establish that, for random trees, the length of an optimal plan is actually much smaller.
Specifically, for a distribution $X_n$ of trees on $n$ nodes, we find that $\mathbb{E}(\OPT{}) \leq \sqrt{Dk(n-k)}$, where $D$ is the average distance between two nodes over trees in $X_n$. Notably, when $X_n$ is the uniform distribution on labeled trees, we have $D \sim \sqrt{\pi n/2}$~\citep{meirmoon1970}, so that $\mathbb{E}(\OPT{}) \leq \sqrt{k(n-k)\sqrt{\pi n/2}}$.

The paper is organized as follows.
%
First, we review related work.
%
Second, we derive a tight lower bound on the length of an optimal plan.
%
Third, we introduce an optimal algorithm using this lower bound.
Next, we extend these ideas to a suboptimal unlabeled MAPF algorithm. Then, we deduce bounds on \OPT{}, makespan, and sum of costs. We go on to examine experimental and theoretical average plan length and how it compares to our bounds.
Naturally, we end with conclusions.

\section{Related Work}
\label{sec:related}
\citet{kornhauser1984} propose a sub-optimal algorithm for UPMT with plans of length $O(n^3)$.
\citet{ardizzoni2023} provide a more readable description, as well as an improvement to path lengths in $O(k n c + n^2)$, where $c$ is the length of a \emph{corridor}, which is at worst $O(n)$.

\citet{auletta1999} provide an algorithm linear in $n$ to determine the feasibility of the labeled problem. In the feasible cases, they also provide an algorithm that runs in $O(n+\OPT_l)$, where $\OPT_l$ refers to the length of a plan for the labeled problem, which is in $O(k^2(n-k))$.

\citet{calinescu2008} study (un)labeled problems similar to the Pebble Motion problem on various classes of graphs, but define a move to be a consecutive sequence of edge moves made by a single pebble.
They also give an optimal algorithm for UPMT.
This algorithm first computes all shortest paths between pebbles and targets, and then solves a minimum-weight matching problem to find an assignment of pebbles to targets that minimizes the total distance between pebbles and targets.
The time complexity of solving a minimum matching in a complete bipartite graph with $2k$ nodes and edge weights of at most $n$ is $O(k^{2.5} \log n)$~\citep{gabow1989} or $O(k^3)$~\citep{edmonds1972}.

\citet{yulavelle2013} use network flow techniques to solve unlabeled MAPF in general graphs $(V, E)$ in $O(kVE)$ time, with plans of makespan at most $V + k - 1$.
\citet{okumura2023} provide the TSWAP algorithm for unlabeled MAPF, which can run online or offline on a general graph $G$. Although the plan output is not necessarily optimal in makespan or sum of costs, the makespan is bounded by $k\cdot \text{diam}(G)$ and the sum of costs is bounded by $k^2 \cdot\text{diam}(G)$.

\section{Lower Bound on the Optimal Plan Length}
\label{sec:lowerbound}

Pick an arbitrary root node $r \in V$.
Let $T_u$ denote the subtree rooted at a node $u \in V$.
We write $v \in T_u$ to mean that $v$ is a node of $T_u$.
We define $\pebble{u} \in \{0,1\}$ and $\target{u} \in \{0,1\}$ as the current number of pebbles and targets at a node $u \in V$, respectively.
Note that $\pebble{u}$ is updated as pebbles move, while $\target{u}$ is a fixed input.
We define the \emph{demand} $d:V\rightarrow \Z$, such that, for $u \in V$,
\begin{equation}
    \demand{u} = \sum_{v \in T_u} \target{v} - \sum_{v \in T_u} \pebble{v},
\label{eq:d_sum_def}
\end{equation}
as the number of targets minus the current number of pebbles in $T_u$. See Figure \ref{fig:DemandExample} for an example.
An equivalent recursive definition is
\begin{equation}
    \demand{u} = \target{u} - \pebble{u} + \sum_{v \text{ child of } u}\demand{v}.
\label{eq:d_rec_def}
\end{equation}
\begin{figure}
    \centering
     \begin{tikzpicture}
     \SetDistanceScale{2}
        \Vertex[label=0, x=0, y=4, color=green, shape = diamond]{A} 
        \Vertex[label=0, x=-1, y=3]{B}
        \Vertex[label=3, x=0, y=3, color=green, shape = diamond]{C}
        \Vertex[label=-2, x=1, y=3, color=green, shape = diamond]{D}
        \Vertex[label=1, x=-1, y=2, color=red, shape = rectangle]{E}
        \Vertex[label=3, x=0, y=2, color=red, shape = rectangle]{F}
        \Vertex[label=-1, x=1, y=2, color=green, shape = diamond]{G}
        \Vertex[label=1, x=-2, y=1, color=red, shape = rectangle]{H}
        \Vertex[label=1, x=-1, y=1, color=red, shape = rectangle]{I}
        \Vertex[label=-1, x=0, y=1, color=green, shape = diamond]{J}
        \Vertex[label=0, x=1, y=1, color=green, shape = diamond]{K}
        \Vertex[label=1, x=2, y=1]{L}
        \Vertex[label=0, x=-2.5, y=0, color=red, shape = rectangle]{M}
        \Vertex[label=0, x=-2.5, y=0, color=green, shape = diamond]{M2}
        \Vertex[label=1, x=-1.5, y=0, color=red, shape = rectangle]{N}
        \Vertex[label=-1, x=-0.5, y=0, color=green, shape = diamond]{O}
        \Vertex[label=1, x=0.5, y=0, color=red, shape = rectangle]{P}
        \Vertex[label=0, x=1.5, y=0]{Q}
        \Vertex[label=1, x=2.5, y=0, color=red, shape = rectangle]{R}
        \Edge(A)(B)
        \Edge(A)(C)
        \Edge(A)(D)
        \Edge(C)(E)
        \Edge(C)(F)
        \Edge(D)(G)
        \Edge(F)(H)
        \Edge(F)(I)
        \Edge(G)(J)
        \Edge(G)(K)
        \Edge(G)(L)
        \Edge(H)(M)
        \Edge(H)(N)
        \Edge(H)(O)
        \Edge(K)(P)
        \Edge(K)(Q)
        \Edge(L)(R)
    \end{tikzpicture}
    \caption{Example configuration of nodes labeled with demands. Green diamonds represent pebbles and red squares represent targets.}
    \label{fig:DemandExample}
\end{figure}
\begin{lemma}
The problem is solved if and only if $\demand{u}=0$ for every $u \in V$.
\label{thm:zero_demand}
\end{lemma}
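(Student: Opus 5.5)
The plan is to read ``solved'' as: every node satisfies $\pebble{u}=\target{u}$. This is the natural meaning of a feasible end state. There are exactly $k$ pebbles and $k$ targets, pebbles occupy distinct nodes, and each pebble must sit on a target. So every target is occupied and every occupied node is a target. Both directions then reduce to comparing the local quantity $\target{u}-\pebble{u}$ with the demand $\demand{u}$ through the recursive definition \eqref{eq:d_rec_def}.

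For the forward direction, assume the problem is solved, so $\target{v}=\pebble{v}$ for every $v\in V$. Substituting into the sum definition \eqref{eq:d_sum_def} gives $\demand{u}=\sum_{v\in T_u}(\target{v}-\pebble{v})=0$ for every $u$, which settles this direction immediately.

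For the converse, assume $\demand{u}=0$ for all $u\in V$. Rearranging \eqref{eq:d_rec_def} gives
\[
\target{u}-\pebble{u}=\demand{u}-\sum_{v \text{ child of } u}\demand{v}=0 ,
\]
so $\pebble{u}=\target{u}$ at every node $u$. Hence every pebble lies on a target, and every target holds a pebble, so the configuration is solved. No induction is needed, since the recursive identity already isolates each node's local term.

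The only genuine subtlety, and the step I would be careful about, is the meaning of ``solved.'' If it is taken to mean only ``every pebble is on a target,'' one must also check that this forces every target to be covered. That follows because pebbles occupy distinct nodes (each $\pebble{u}\in\{0,1\}$) and the numbers of pebbles and targets are both $k$; equivalently, it also follows from $\demand{r}=0$ at the root. With that observation the two readings agree, and the lemma follows.
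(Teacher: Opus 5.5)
Your proof is correct and matches the paper's: the forward direction substitutes $\target{v}=\pebble{v}$ into \eqref{eq:d_sum_def}, and the converse reads $\target{u}=\pebble{u}$ off the recursive identity \eqref{eq:d_rec_def} at each node. The paper frames the converse as an induction from the leaves, but its inductive hypothesis ($\demand{v}=0$ for every child $v$) is already part of the assumption, so your direct node-by-node reading is the same computation, correctly stripped of the unnecessary induction.
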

\begin{proof}
$(\Rightarrow)$ Suppose the problem is solved, i.e. $\target{u} = \pebble{u}~\forall u \in V$, then \eqref{eq:d_sum_def} implies $d=0$.
$(\Leftarrow)$ By induction.
At every leaf $u$, \eqref{eq:d_rec_def} gives $0 = \target{u} - \pebble{u} + 0$.
Furthermore, for every non-leaf node $u$, suppose that $\demand{v}=0$ for each child $v$ of $u$. Again, \eqref{eq:d_rec_def} simplifies to $0 = \target{u} - \pebble{u} + 0$.
Therefore, at every node $u$, $\target{u} = \pebble{u}$.
\end{proof}

\begin{lemma}
\label{thm:lower_bound}
A feasible plan has length at least $\sum_{u \in V} |\demand{u}|$.
\end{lemma}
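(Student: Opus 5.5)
The plan is a potential-function argument, with $\Phi = \sum_{u \in V} |\demand{u}|$ evaluated on the current pebble configuration. By \cref{thm:zero_demand}, every feasible plan ends in a configuration with $\Phi = 0$. The plan starts with the initial value of $\Phi$, which is exactly the claimed bound. So it suffices to show that a single move changes $\Phi$ by at most $1$ in absolute value. Then a plan of length $L$ can reduce $\Phi$ by at most $L$, which forces $L \geq \sum_{u\in V}|\demand{u}|$ measured at the start.

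The key step is analysing one move. Suppose a pebble moves across the tree edge $\{u,v\}$, and without loss of generality $v$ is a child of $u$ in the rooting at $r$. The move is either from $v$ to $u$ or from $u$ to $v$. By the subtree-sum definition \eqref{eq:d_sum_def}, $\demand{w}$ depends only on how many pebbles lie inside $T_w$. I will check, for each node $w$, whether that count changes.

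If $w \notin T_v$, then $T_w$ contains either both $u$ and $v$ or neither. The case $u \notin T_w$ but $v \in T_w$ cannot happen for $w \notin T_v$: if $v \in T_w$ and $w \neq v$, then $w$ is a proper ancestor of $v$, hence an ancestor of $u$. So for $w \notin T_v$ the pebble count in $T_w$ is unchanged.

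If $w \in T_v$ with $w \neq v$, then $T_w$ contains neither $u$ nor $v$, so again nothing changes. The only remaining node is $w = v$. Here $T_v$ contains $v$ but not $u$, so the pebble count in $T_v$ changes by exactly one, and $\demand{v}$ changes by $\pm 1$.

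Hence exactly one demand value changes, by exactly $1$. By the triangle inequality, $\Phi$ changes by at most $1$, which completes the argument by the telescoping observation in the first paragraph.

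The main obstacle, mild as it is, is the careful case check that no ancestor's subtree separates $u$ from $v$. This follows from the rooted-tree structure, since the subtrees containing $v$ but not $u$ are exactly $T_v$ itself.
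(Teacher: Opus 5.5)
Your proof is correct. It rests on the same fact as the paper's proof: a move across the edge joining $v$ to its parent $u$ changes the pebble count of $T_v$ by one, and leaves every other subtree count unchanged. What differs is the bookkeeping.

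The paper counts per edge. Edge $uv$ is the only way into or out of $T_v$, so a feasible plan must cross it at least $|\demand{v}|$ times. Summing over non-root $v$, with $\demand{r}=0$, gives the bound. You count per move instead. The potential $\Phi=\sum_{u}|\demand{u}|$ can drop by at most one per move, and it must reach $0$ by the forward direction of \cref{thm:zero_demand}.

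The paper's version yields the finer per-edge statement directly, though you recover it by isolating the single term $|\demand{v}|$. Your version makes explicit the locality check the paper leaves implicit. It also lines up exactly with \cref{thm:move_decrements_d}, so optimality of \balance{} becomes the statement that the algorithm achieves the maximum possible decrease of $\Phi$ at every move.

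One small point: for $w\notin T_v$ you justify only one half of ``both or neither''. The other half, $u\in T_w\Rightarrow v\in T_w$, is immediate because $v$ is a child of $u$.
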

\begin{proof}
Consider a non-root node $v \in V \setminus \{r\}$ and its parent $u \in V$.
Suppose that $\demand{v} \geq 0$, i.e. there are $\demand{v}$ more targets than pebbles in the subtree $T_v$.
Any feasible plan must include at least $\demand{v}$ moves from $u$ through $v$, otherwise the subtree $T_v$ will have missing pebbles after the execution of the plan.
In the case where $T_v$ has extra pebbles, i.e. $\demand{v} \leq 0$,
there must be $-\demand{v}$ moves from $v$ through $u$.
Therefore, in a feasible plan, the number of moves on edge $uv$ is at least $|\demand{v}|$.
Finally, observe that $\demand{r} = 0$.
\end{proof}

\section{Optimal Unlabeled Pebble Motion on Trees}
\label{sec:algorithm}

We now use the demand $d$ to design an algorithm that makes no unnecessary moves.
To do so, only moves that decrease the lower bound (\cref
{thm:lower_bound}) are performed.
We present a recursive top-down algorithm that realizes this.


\subsection{Algorithm Description}
\cref{alg:balance_subtrees}, \balance{}, starts at a node $u$ with zero demand (\cref{pre:balance_zero_demand}).
The first part of \balance{}, Lines \ref{alg:balance_part_1_start}-\ref{alg:balance_part_1_end}, ensures that every child of $u$ has zero demand, by balancing the pebbles between the children.
A balance between children is achievable because $d(u)=0$.
For the same reason, \balance{} implicitly handles the case of $u$ being a target because ensuring all children are balanced means that a pebble is left on $u$ if and only if $u$ is a target.
Node $u$ interacts with the subtree of a child $v$ via the functions $\inject(v)$ and $\extract(v)$, not by direct moves between $u$ and $v$, as moving pebbles from one subtree to another can require moves inside these subtrees.
The second part of \balance{}, Lines \ref{alg:balance_part_2_start}-\ref{alg:balance_part_2_end}, recursively calls \balance{} on all children to ensure that all descendants have zero demand (\cref{post:balance_zero_demand}).

\cref{alg:inject_pebble}, \inject{}, is called on $v$ by its parent $u$ when a pebble needs to be moved from $u$ to $v$. It handles the case where $v$ already holds a pebble. If so, $v$ must first \inject{} its pebble to one of its children, before the pebble on $u$ is moved to $v$. Because $d(v)>0$ (\cref{pre:inject_need_pebbles}), this is guaranteed to succeed.

\cref{alg:extract_pebble}, \extract{}, moves a pebble from $v$ to its parent $u$. It handles the case where there is no pebble on $v$ by recursively extracting a pebble from within $T_v$. Thanks to \cref{pre:extract_extra_pebbles}, this is guaranteed to succeed.

\cref{alg:move_pebble}, \move{}, updates \pebble{\cdot} and \demand{\cdot}, and finally performs a move.

\begin{algorithm}
    \caption{$\balance(u)$}\label{alg:balance_subtrees}
    \begin{algorithmic}[1]
        \PRECOND A node $u \in V$
        \label{pre:balance_input_node}
        \PRECOND $\demand{u}=0$
        \label{pre:balance_zero_demand}
        \POSTCOND $\demand{v}=0$ for all $v \in T_u$
        \label{post:balance_zero_demand}
        \WHILE{a child $v$ of $u$ with $d(v) \neq 0$ exists}
        \label{alg:balance_part_1_start}\label{alg:balance_while_start}
            \IF{$\pebble{u} = 1$}\label{alg:balance_if}
                \STATE Pick a child $v$ of $u$ with $\demand{v} > 0$ \label{alg:balance_subtrees_pick_v_d_pos}
                \STATE $\inject(v)$
            \ELSE
                \STATE Pick a child $v$ of $u$ with $\demand{v} < 0$ \label{alg:balance_subtrees_pick_v_d_neg}
                \STATE $\extract(v)$
            \ENDIF
        \ENDWHILE
        \label{alg:balance_part_1_end}
        \FOR{$v$ child of $u$}
        \label{alg:balance_part_2_start}
        \label{alg:balance_start_for}
            \STATE $\balance(v)$
        \ENDFOR
        \label{alg:balance_part_2_end}
    \end{algorithmic}
\end{algorithm}

\begin{algorithm}
    \caption{$\inject(v)$}\label{alg:inject_pebble}
    \begin{algorithmic}[1]
        \PRECOND A node $v \in V \setminus \{r\}$ with parent $u$
        \label{pre:inject_v_not_r}
        \PRECOND $\pebble{u} = 1$
        \label{pre:inject_pebble_on_u}
        \PRECOND $\demand{v} > 0$
        \label{pre:inject_need_pebbles}
        \POSTCOND $\pebble{u} = 0$
        \label{post:inject_no_pebble_on_u}
        \POSTCOND $\pebble{v} = 1$
        \label{post:inject_pebble_on_v}
        \IF{$\pebble{v} = 1$}\label{alg:inject_if}
            \STATE Pick a child $w$ of $v$ such that $\demand{w}>0$\label{alg:inject_pick_child}
            \STATE $\inject(w)$\label{alg:inject_recursive_call}
        \ENDIF
        \STATE $\move(u,v)$\label{alg:inject_call_to_move}
    \end{algorithmic}
\end{algorithm}

\begin{algorithm}
    \caption{$\extract(v)$}
    \label{alg:extract_pebble}
    \begin{algorithmic}[1]
        \PRECOND A node $v \in V \setminus \{r\}$ with parent $u$
        \label{pre:extract_v_not_r}
        \PRECOND $\pebble{u} = 0$
        \label{pre:extract_no_pebble}
        \PRECOND $\demand{v} < 0$
        \label{pre:extract_extra_pebbles}
        \POSTCOND $\pebble{u} = 1$
        \POSTCOND $\pebble{v} = 0$
        \IF{$\pebble{v}=0$}
            \STATE Pick a child $w$ of $v$ such that $\demand{w}<0$
            \STATE $\extract(w)$
        \ENDIF
        \STATE $\move(v,u)$
    \end{algorithmic}
\end{algorithm}

\begin{algorithm}
    \caption{$\move(u,v)$}\label{alg:move_pebble}
    \begin{algorithmic}[1]
        \PRECOND A node $u \in V$ adjacent to a node $v \in V$\label{pre:move_u_v_input}
        \PRECOND $\pebble{u} = 1$
        \label{pre:move_pebble_on_u}
        \PRECOND $\pebble{v} = 0$
        \label{pre:move_no_pebble_on_v}
        \PRECOND if $v$ is a child of $u$, then $\demand{v}>0$
        \label{pre:move_v_child}
        \PRECOND if $u$ is a child of $v$, then $\demand{u}<0$
        \label{pre:move_u_child}
        \POSTCOND $\pebble{u} = 0$
        \label{post:move_no_pebble_on_u}
        \POSTCOND $\pebble{v} = 1$
        \label{post:move_pebble_on_v}
        \STATE $\pebble{u}  \gets  0$
        \STATE $\pebble{v}  \gets  1$
        \IF{$v$ is a child of $u$}
            \STATE $\demand{v} \gets \demand{v}-1$
            \label{alg:move_d_v_decrement}
        \ELSE 
            \STATE $\demand{u} \gets \demand{u}+1$
            \label{alg:move_d_u_increment}
        \ENDIF
        \STATE Output move $(u,v)$
    \end{algorithmic}
\end{algorithm}

\subsection{Algorithm Correctness}
We prove that the algorithm terminates after having ensured that all pebbles are located at a target.

\subsubsection{Correctness of \move{}}
\begin{lemma}
A call to \move{} correctly updates $d$.
\end{lemma}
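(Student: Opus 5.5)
The plan is to compare, for every node $w \in V$, the value of $\demand{w}$ computed from the sum definition \eqref{eq:d_sum_def} before and after the move. We then check that the only change is exactly the one performed by \move{}. The argument uses only that \move{} transfers one pebble from $u$ to an adjacent node $v$: $\pebble{u}$ goes from $1$ to $0$, $\pebble{v}$ goes from $0$ to $1$, all other pebble counts are unchanged, and targets are fixed inputs.

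By \eqref{eq:d_sum_def}, $\demand{w}$ changes only if the number of pebbles in $T_w$ changes. That happens exactly when $T_w$ contains one of $u,v$ but not the other, so we split on the edge $uv$.

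\emph{Case $v$ is a child of $u$.} For $w \in T_v$ with $w\neq v$, neither endpoint lies in $T_w$. For $w=v$, $T_v$ contains $v$ but not $u$. For any other $w$, either $T_w$ contains both $u$ and $v$ (if $w$ is $u$ or an ancestor of $u$), or it contains neither. So only $\demand{v}$ changes. Since $T_v$ gains one pebble, $\demand{v}$ decreases by $1$, which is exactly Line~\ref{alg:move_d_v_decrement}.

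\emph{Case $u$ is a child of $v$.} Symmetrically, only $T_u$ separates the two endpoints. $T_u$ loses one pebble, so $\demand{u}$ increases by $1$, matching Line~\ref{alg:move_d_u_increment}. Since $u$ and $v$ are adjacent in a rooted tree (\cref{pre:move_u_v_input}), exactly one of these two cases holds, so the \textbf{if}/\textbf{else} in \move{} covers both.

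There is no real obstacle here. The only point needing care is the claim that no ancestor's demand changes: for an ancestor $w$ of the upper endpoint, $T_w$ contains both endpoints, so its pebble count is unchanged. The stored values of $\pebble{\cdot}$ are also updated consistently with the physical move, so afterwards the stored $\demand{\cdot}$ again agrees with \eqref{eq:d_sum_def}.
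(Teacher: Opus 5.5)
Your proof is correct and follows essentially the same route as the paper: both use the subtree-sum definition \eqref{eq:d_sum_def}, split on whether $v$ is the child or the parent of $u$, and observe that only the subtree of the lower endpoint gains or loses a pebble. Yours is slightly more careful, since the paper simply asserts that only $\demand{u}$ and $\demand{v}$ can change, while you check explicitly that no other subtree contains exactly one of the two endpoints.
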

\begin{proof}
Assuming that $d$ is correct when \move{} is called, we prove it is still correct after the call.
A move from $u$ to $v$ can only change $d(u)$ and $d(v)$.
If $v$ is a child of $u$, moving a pebble from $u$ to $v$ does not change the number of pebbles in $T_u$. But the number of pebbles in $T_v$ increases by 1, therefore the demand $\demand{v}$ decreases by 1.
Similarly, if $v$ is the parent of $u$,  moving a pebble from $u$ to $v$ increases the demand $\demand{u}$ by 1.
\end{proof}

\begin{lemma}\label{thm:move_decrements_d}
A call to \move{} decrements $\sum_{u \in V} |\demand{u}|$ by 1.
\end{lemma}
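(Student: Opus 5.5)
The argument combines the preceding lemma, which says a call to \move{} changes $d$ at exactly one node and by exactly one unit, with the sign conditions that \move{} requires before it is called. Take a call $\move(u,v)$ on adjacent nodes $u$ and $v$. There are two cases, according to whether $v$ is a child of $u$ or $u$ is a child of $v$. In both cases we first fix which single entry of $d$ changes. Then we use the matching precondition of \move{} to show that this entry has the right sign, so that its absolute value drops by exactly one.

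\emph{Case 1: $v$ is a child of $u$.} By the previous lemma, the only entry of $d$ that changes is $\demand{v}$, which becomes $\demand{v}-1$. The demand $\demand{u}$ stays the same because the number of pebbles in $T_u$ is unchanged. Every other node $w$ is also unaffected: either $T_w$ contains both $u$ and $v$, or it contains neither, or $w$ is a descendant of $v$ and neither endpoint lies in $T_w$. By \cref{pre:move_v_child} we have $\demand{v}>0$. Since demands are integers, $\demand{v}\ge 1$. Hence $|\demand{v}-1| = \demand{v}-1 = |\demand{v}|-1$, and the sum $\sum_{w\in V}|\demand{w}|$ drops by exactly $1$.

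\emph{Case 2: $u$ is a child of $v$.} Symmetrically, the only entry that changes is $\demand{u}$, which becomes $\demand{u}+1$. By \cref{pre:move_u_child} we have $\demand{u}<0$, so $\demand{u}\le -1$. Hence $|\demand{u}+1| = -\demand{u}-1 = |\demand{u}|-1$, and again the sum drops by exactly $1$.

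The main point needing care is to justify that the preconditions really hold whenever \move{} is called. Otherwise the lemma only holds conditionally. I would treat the lemma as a statement about calls that meet their preconditions, and note that their validity at each call site is checked when correctness of \inject{} and \extract{} is proved. In $\inject(v)$, the call $\move(u,v)$ comes after the optional recursive $\inject(w)$. That recursive call changes only $\demand{w}$ and deeper entries, so $\demand{v}>0$ from \cref{pre:inject_need_pebbles} still holds. The same reasoning applies to \extract{}. With that observation, the two-case computation above completes the proof.
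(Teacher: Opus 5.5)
Your proof is correct and follows essentially the same route as the paper. Like the paper, you split on whether $v$ is a child of $u$ or the reverse, note that only $\demand{v}$ (resp.\ $\demand{u}$) changes, and use \cref{pre:move_v_child} (resp.\ \cref{pre:move_u_child}) to show that this single entry moves one unit toward zero. Your extra remarks on integrality, on why no other demand changes, and on the preconditions holding at the call sites in \inject{} and \extract{} spell out what the paper leaves implicit or checks separately in \cref{thm:move_precond_inject}.
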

\begin{proof}
If $v$ is a child of $u$, then by \cref{pre:move_v_child}, $\demand{v} > 0$, and it is decremented on Line \ref{alg:move_d_v_decrement}.
Otherwise, by \cref{pre:move_u_child}, $\demand{u} < 0$, and it is incremented on Line \ref{alg:move_d_u_increment}.
The value of $d$ at other nodes are unchanged.
\end{proof}

\begin{lemma}\label{thm:move_postcond}
\cref{post:move_no_pebble_on_u,post:move_pebble_on_v} are satisfied.
\end{lemma}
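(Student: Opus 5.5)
The statement is essentially a direct check of the body of \cref{alg:move_pebble}, so I will argue straight from its code. The plan is to trace what the first two assignments do and then confirm that nothing later in the procedure undoes them. I will treat \pebble{\cdot} as the algorithm's record of the current configuration, and I will use the preconditions only to make sure the move being performed is well defined.

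First, I note that by \cref{pre:move_u_v_input} the nodes $u$ and $v$ are adjacent, and in particular distinct. This matters because the assignments $\pebble{u} \gets 0$ and $\pebble{v} \gets 1$ then write to two different entries, so the second cannot overwrite the first. Right after these two lines, \cref{post:move_no_pebble_on_u} and \cref{post:move_pebble_on_v} therefore hold.

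Next, I check the remaining lines. The conditional block changes only $\demand{v}$ or $\demand{u}$, and the final line only outputs the move; neither touches \pebble{\cdot}. So both postconditions still hold when \move{} returns.

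Finally, I will tie this to the actual configuration. By \cref{pre:move_pebble_on_u} and \cref{pre:move_no_pebble_on_v}, $u$ holds a pebble and $v$ is empty, so the output move $(u,v)$ is a legal move. Performing it really leaves $u$ empty and $v$ occupied, which matches the updated values. There is no real obstacle here; the only subtle point is the distinctness of $u$ and $v$, which adjacency in a tree (no self-loops) supplies.
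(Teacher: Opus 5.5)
Your proposal is correct and takes the same approach as the paper. The paper gives no written proof at all: it treats the lemma as immediate from the two assignments $\mathit{pebble}(u) \gets 0$ and $\mathit{pebble}(v) \gets 1$ in \cref{alg:move_pebble}, and your remarks that $u \neq v$ and that the later lines never touch $\mathit{pebble}(\cdot)$ just spell out the details it leaves implicit.
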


\subsubsection{Correctness of \inject{}}

\begin{lemma}
\Crefrange{pre:inject_v_not_r}{pre:inject_need_pebbles} are satisfied when \inject{} calls itself recursively on Line \ref{alg:inject_recursive_call}.
\end{lemma}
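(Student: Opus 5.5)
We must check that when \inject{}$(v)$ reaches Line \ref{alg:inject_recursive_call} with argument $w$, the three preconditions hold for $w$ (with parent $v$). We assume the preconditions held for the current call $\inject(v)$: $v\neq r$ has parent $u$, $\pebble{u}=1$ and $\demand{v}>0$. The recursive call happens only inside the branch of Line \ref{alg:inject_if}, so $\pebble{v}=1$ at that moment. No move has been made between entry and the recursive call, so $d$ and $\mathit{pebble}$ are unchanged from entry.

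\textbf{Precondition on the pebble.} \Cref{pre:inject_pebble_on_u} for the recursive call asks that the parent of $w$, namely $v$, holds a pebble. This is exactly the branch condition $\pebble{v}=1$.

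\textbf{Existence of $w$.} We show that the child $w$ picked on Line \ref{alg:inject_pick_child} exists. Apply the recursive identity \eqref{eq:d_rec_def} at $v$:
\[
\demand{v} = \target{v} - \pebble{v} + \sum_{w \text{ child of } v}\demand{w}.
\]
Since $\pebble{v}=1$ and $\target{v}\le 1$, we get $\target{v}-\pebble{v}\le 0$. Hence
\[
\sum_{w \text{ child of } v}\demand{w} \ge \demand{v} > 0.
\]
So at least one child $w$ has $\demand{w}>0$. In particular $v$ is not a leaf, and such a $w$ can be picked. This gives \cref{pre:inject_need_pebbles} for the call.

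\textbf{Remaining precondition.} \Cref{pre:inject_v_not_r} holds because $w$ is a child of $v$, so $w\neq r$ and its parent is $v$.

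The only step needing an argument is the existence of $w$. It rests on $d$ being correctly maintained, which is given by the correctness lemma for \move{}, together with the fact that nothing has changed since entry into $\inject(v)$.
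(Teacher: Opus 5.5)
Your proposal is correct and follows essentially the same argument as the paper. The branch condition gives $\pebble{v}=1$, which is the pebble precondition for the recursive call; applying \eqref{eq:d_rec_def} at $v$ with $\target{v}-\pebble{v}\le 0$ then gives $\sum_{w}\demand{w}\ge\demand{v}>0$, so a child with positive demand exists.
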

\begin{proof}
Note that $\pebble{v}=1$ on Line \ref{alg:inject_recursive_call}.
\Cref{pre:inject_v_not_r} and \cref{pre:inject_pebble_on_u} are clearly satisfied.
To show \cref{pre:inject_need_pebbles}, we check that there must be a child $w$ of $v$ such that $\demand{w}>0$ on Line \ref{alg:inject_pick_child}. Evaluating \eqref{eq:d_rec_def} for $v$ gives:
\begin{align*}
& \demand{v} = \target{v} - \pebble{v} + \sum_{w \text{ child of } v}\demand{w} \\
\implies & 0 < \demand{v} \leq \max(0,1) - 1 + \sum_{w \text{ child of } v}\demand{w} \\
\implies & 0 < \sum_{w \text{ child of } v}\demand{w}.
\end{align*}
\end{proof}

\begin{lemma}\label{thm:move_precond_inject}
\Crefrange{pre:move_u_v_input}{pre:move_u_child} are satisfied when \move{} is called by \inject{} on Line \ref{alg:inject_call_to_move}.
\end{lemma}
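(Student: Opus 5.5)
We check the five preconditions of \move{} in turn, for the call $\move(u,v)$ on Line \ref{alg:inject_call_to_move}, where $u$ is the parent of $v$. The argument uses only the preconditions of \inject{}, the postconditions of the recursive call on Line \ref{alg:inject_recursive_call}, and the effect of that call on $d$.

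\Cref{pre:move_u_v_input} is immediate, since $u$ is the parent of $v$ by \cref{pre:inject_v_not_r}. \Cref{pre:move_u_child} is vacuous, because $u$ is not a child of $v$.

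For \cref{pre:move_pebble_on_u,pre:move_no_pebble_on_v}, we split on the test on Line \ref{alg:inject_if}.

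\begin{itemize}
\item If $\pebble{v}=0$, nothing has happened since entry. Then $\pebble{u}=1$ holds by \cref{pre:inject_pebble_on_u}, and $\pebble{v}=0$ holds directly.
\item If $\pebble{v}=1$, the previous lemma shows that the recursive call $\inject(w)$ has its preconditions satisfied. Its postconditions (\cref{post:inject_no_pebble_on_u,post:inject_pebble_on_v}, applied with parent $v$ and child $w$) then give $\pebble{v}=0$ after the call.
\end{itemize}

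In the second case we must also check that $\pebble{u}$ is still $1$. Every move performed inside $\inject(w)$ is between a node of $T_v$ and another node of $T_v$: the recursion only descends into $T_w$, and the final move is from $v$ to $w$. So $\pebble{u}$ is untouched and still equals $1$. To make this fully rigorous, the claim "$\inject(w)$ only moves pebbles along edges within $T_v$" would be proved alongside by induction on the recursion depth, which is finite because the depth of the node strictly increases.

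\Cref{pre:move_v_child} requires $\demand{v}>0$ at the time of the move. On entry, $\demand{v}>0$ by \cref{pre:inject_need_pebbles}. We must show that the recursive call leaves $\demand{v}$ unchanged. By the correctness lemma for \move{}, a move between $a$ and $b$ only changes $d$ at the child endpoint of the edge. Every move inside $\inject(w)$ is along an edge whose child endpoint lies in $T_w$, so it is strictly below $v$. Hence $\demand{v}$ is unchanged, and equivalently the number of pebbles in $T_v$ is preserved because no pebble crosses the edge $uv$. So $\demand{v}>0$ still holds.

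The main obstacle is the invariant that a call to $\inject(w)$ affects only pebbles and demands strictly inside $T_w$, apart from the single move from $w$'s parent into $w$. We would prove this invariant by induction on the height of $T_w$, together with the postconditions of \inject{}. The precondition checks themselves are then immediate.
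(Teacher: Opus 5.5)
Your proof is correct and follows essentially the same route as the paper's. Adjacency comes from the parent relation. $\pebble{u}=1$ and $\demand{v}>0$ persist because the recursive call only moves pebbles inside $T_v$, and $\pebble{v}=0$ follows from the postcondition of the recursive \inject{}. You are slightly more careful than the paper in two places: you note that \cref{pre:move_u_child} is vacuous, and you point out that using the recursive call's postconditions needs an induction on recursion depth, which the paper leaves implicit.
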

\begin{proof}
\Cref{pre:move_u_v_input}: node $u$ is the parent of $v$, therefore they are adjacent.

\Cref{pre:move_pebble_on_u}: \cref{pre:inject_pebble_on_u} remains true until \move{} is called on Line \ref{alg:inject_call_to_move}, as only pebbles within $T_v$ may have moved before Line \ref{alg:inject_call_to_move}.

\Cref{pre:move_no_pebble_on_v}: if $\pebble{v}=1$ on Line \ref{alg:inject_if}, then \inject{} is called on Line \ref{alg:inject_recursive_call}, which, by \cref{post:inject_no_pebble_on_u}, ensures $\pebble{v}=0$ on Line \ref{alg:inject_call_to_move}.

\Cref{pre:move_v_child}: \cref{pre:inject_need_pebbles} remains true until Line \ref{alg:inject_call_to_move}, since the number of pebbles in $T_v$ does not change before Line \ref{alg:inject_call_to_move}.


\end{proof}

\begin{lemma}
\Cref{post:inject_no_pebble_on_u,post:inject_pebble_on_v} are satisfied.
\end{lemma}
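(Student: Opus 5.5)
The plan is to argue by induction on the depth of the recursion of \inject{}, or equivalently on the height of $T_v$. A recursive call $\inject(w)$ is made only on a child $w$ of $v$, and $T_w$ is strictly smaller than $T_v$, so the recursion is well founded. As the inductive hypothesis, we may assume that every recursive call, made with its preconditions satisfied (established in the previous lemma), ends with its postconditions holding.

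The key observation is that the final action of \inject{} is the call $\move(u,v)$ on Line \ref{alg:inject_call_to_move}. By \cref{thm:move_precond_inject}, all preconditions of \move{} hold at that moment. Therefore \cref{thm:move_postcond} applies and gives $\pebble{u}=0$ and $\pebble{v}=1$ right after the move. Since this is the last statement of \inject{}, no further move can change $\pebble{u}$ or $\pebble{v}$ before \inject{} returns. This yields \cref{post:inject_no_pebble_on_u,post:inject_pebble_on_v}.

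The only delicate point is that \cref{thm:move_precond_inject} itself relies on \cref{post:inject_no_pebble_on_u} for the recursive call on Line \ref{alg:inject_recursive_call}. This dependence is not circular, because it concerns a call on a strictly smaller subtree, and that is exactly what the induction hypothesis provides. I would check the base case explicitly. When $\pebble{v}=0$, or when $v$ is a leaf, no recursive call happens. In the leaf case, $\pebble{v}=1$ together with $\demand{v}>0$ is impossible, since a leaf satisfies $\demand{v}=\target{v}-\pebble{v}\le 0$. The argument then reduces directly to \cref{thm:move_precond_inject,thm:move_postcond}.

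Finally, I would note that the recursive call only moves pebbles inside $T_w\cup\{v\}$. It therefore leaves $\pebble{u}=1$ untouched until the final move, which is consistent with the argument already used in \cref{thm:move_precond_inject}.
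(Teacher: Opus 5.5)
Your proof is correct and is essentially the paper's argument: both postconditions follow directly from the final call $\move(u,v)$ on Line~\ref{alg:inject_call_to_move} together with \cref{thm:move_postcond}, whose preconditions are supplied by \cref{thm:move_precond_inject}. Your explicit induction on the height of $T_v$ makes rigorous the mutual dependence between this lemma and \cref{thm:move_precond_inject}, which relies on the postcondition of the recursive call; the paper leaves this dependence implicit.
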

\begin{proof}
Direct consequence of Line \ref{alg:inject_call_to_move} and  \cref{thm:move_postcond}.
\end{proof}

\subsubsection{Correctness of \extract{}}
The results and proofs are similar to those of \inject{}.

\subsubsection{Correctness of \balance{}}

\begin{lemma}\label{thm:inject_precond_balance}
\Crefrange{pre:inject_v_not_r}{pre:inject_need_pebbles} are satisfied when \inject{} is called by \balance{}.
\end{lemma}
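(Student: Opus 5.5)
We check the three preconditions of \inject{} at the call site in \balance{}, namely inside the \textbf{if} branch of the while loop where $\pebble{u}=1$ and a child $v$ with $\demand{v}>0$ has just been picked on Line \ref{alg:balance_subtrees_pick_v_d_pos}. \Cref{pre:inject_v_not_r} is immediate: $v$ is a child of $u$, so $v \neq r$ and its parent is $u$. \Cref{pre:inject_pebble_on_u} is exactly the guard of the \textbf{if} on Line \ref{alg:balance_if}, and nothing moves between evaluating that test and calling $\inject(v)$. \Cref{pre:inject_need_pebbles} holds by the choice of $v$, provided such a child actually exists.

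The only substantive step is therefore to show that, whenever the loop body is entered with $\pebble{u}=1$, some child $v$ of $u$ has $\demand{v}>0$. The first ingredient is an invariant: $\demand{u}=0$ throughout the while loop. It holds on entry by \cref{pre:balance_zero_demand}. Every move performed during an iteration is either inside some $T_v$ or along an edge between $u$ and one of its children, since \inject{} and \extract{} only touch $T_v$ and the edge $uv$. None of these moves changes the number of pebbles in $T_u$, so by \eqref{eq:d_sum_def} $\demand{u}$ is unchanged. Alternatively, one can note that \move{} only ever updates the demand of the child endpoint.

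Now evaluate \eqref{eq:d_rec_def} at $u$ with $\pebble{u}=1$:
\[
0 = \demand{u} = \target{u} - 1 + \sum_{v \text{ child of } u} \demand{v},
\]
so $\sum_{v} \demand{v} = 1 - \target{u} \geq 0$. We split into two cases on $\target{u}$.
\begin{itemize}
\item If $\target{u}=0$, the sum of the children's demands is $1$. Hence some child has positive demand.
\item If $\target{u}=1$, the sum is $0$. The loop condition guarantees some child has $\demand{v}\neq 0$, so the children's demands are not all zero. A zero sum with a nonzero term forces some term to be positive.
\end{itemize}
In both cases the pick on Line \ref{alg:balance_subtrees_pick_v_d_pos} succeeds. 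The main obstacle is really just making the invariant $\demand{u}=0$ explicit across loop iterations; the rest is direct.
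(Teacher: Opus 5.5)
Your proof is correct and follows essentially the same route as the paper. Both arguments combine the invariant $\demand{u}=0$ (from \cref{pre:balance_zero_demand} and the fact that no pebble leaves $T_u$), the recursive formula \eqref{eq:d_rec_def} at $u$ with $\pebble{u}=1$, and the while-loop condition. The only difference is presentation: the paper argues by contradiction with a single inequality chain bounding $\target{u}-\pebble{u}\le 0$, where you do a direct case split on $\target{u}$.
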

\begin{proof}
\Cref{pre:inject_v_not_r} is satisfied, as $v$ has a parent $u$.
\Cref{pre:inject_pebble_on_u} is satisfied, since $\pebble{u}=1$ on Line \ref{alg:balance_if}.
To show that \cref{pre:inject_need_pebbles} is satisfied, we show that there is indeed a child $v$ of $u$ with $\demand{v} > 0$.
Suppose there is not. Then, applying \eqref{eq:d_rec_def} at $u$,
\begin{align*}
\demand{u} &= \target{u} - \pebble{u} + \sum_{v \text{ child of } u}\demand{v} \\
&\leq \max(0,1) - 1 - \sum_{v \text{ child of } u}|\demand{v}|\\
&\leq - \sum_{v \text{ child of } u}|\demand{v}|\\
& < 0,
\end{align*}
where the last inequality holds because the while condition on Line \ref{alg:balance_while_start} is true.
Since $\demand{u}=0$ (by \cref{pre:balance_zero_demand}, and no pebble moving out of $T_u$), this is a contradiction.
\end{proof}

\begin{lemma}
\Crefrange{pre:extract_v_not_r}{pre:extract_extra_pebbles} are satisfied when \extract{} is called by \balance{}.
\end{lemma}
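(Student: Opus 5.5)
This is the mirror image of \cref{thm:inject_precond_balance}, and I would prove it the same way, checking the three preconditions of \extract{} in turn. The call happens in the else-branch of the while loop in \balance{}, so at that moment $\pebble{u}=0$ and the loop condition holds.

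\Cref{pre:extract_v_not_r} is immediate, since $v$ is chosen as a child of $u$ and therefore has parent $u$. \Cref{pre:extract_no_pebble} holds because we are in the else-branch of the test on Line \ref{alg:balance_if}, so $\pebble{u}=0$.

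The substantive step is \cref{pre:extract_extra_pebbles}: I must show that Line \ref{alg:balance_subtrees_pick_v_d_neg} can actually find a child $v$ of $u$ with $\demand{v}<0$. Suppose no such child exists, so every child $v$ has $\demand{v}\ge 0$, i.e.\ $\demand{v}=|\demand{v}|$. Applying \eqref{eq:d_rec_def} at $u$ with $\pebble{u}=0$ and $\target{u}\ge 0$ gives
\begin{align*}
\demand{u} &= \target{u} - \pebble{u} + \sum_{v \text{ child of } u}\demand{v} \\
&\geq 0 - 0 + \sum_{v \text{ child of } u}|\demand{v}| \\
&> 0,
\end{align*}
where the strict inequality holds because the while condition on Line \ref{alg:balance_while_start} guarantees some child has nonzero demand.

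It remains to justify that $\demand{u}=0$ at this point, which yields the contradiction. The value is $0$ on entry by \cref{pre:balance_zero_demand}. During the loop, \inject{} and \extract{} applied to children of $u$ only move pebbles within $T_u$: between $u$ and a child, or deeper inside a child's subtree. So the number of pebbles in $T_u$, and hence $\demand{u}$, is unchanged. This invariance argument is the only point needing care, and it is the same one already used in \cref{thm:inject_precond_balance}.
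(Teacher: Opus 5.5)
Your proof is correct and follows the paper's approach: the paper proves this lemma only by saying it is similar to \cref{thm:inject_precond_balance}, and your argument is that mirror argument written out. You use $\pebble{u}=0$ and $\target{u}\ge 0$ to force $\demand{u}>0$, which contradicts $\demand{u}=0$, and you justify that $\demand{u}$ stays $0$ during the loop because every move stays inside $T_u$.
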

\begin{proof}
The proof is similar to that of \cref{thm:inject_precond_balance}.
\end{proof}

\begin{lemma}
\label{thm:balance_preconditions}
\Cref{pre:balance_input_node,pre:balance_zero_demand} are satisfied for all children $v$ of $u$ when \balance{} calls itself recursively.
\end{lemma}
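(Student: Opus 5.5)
Precondition~1 is immediate, since each child $v$ of $u$ is a node of $V$. The substance lies in Precondition~2. For it, I need to show that when the \texttt{for} loop on Line~\ref{alg:balance_start_for} begins, every child $v$ of $u$ has $\demand{v}=0$. I also need to show that each recursive call $\balance(v)$ leaves the demands of the other children of $u$ unchanged.

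First, I show that the \texttt{while} loop terminates. Each iteration calls either \inject{} or \extract{}, and the preconditions of these calls hold by \cref{thm:inject_precond_balance} and its \extract{} analogue. Each such call performs at least one \move{}, namely the final one. By \cref{thm:move_decrements_d}, every \move{} strictly decreases the nonnegative integer $\sum_{w\in V}|\demand{w}|$, so only finitely many iterations can occur.

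Second, when the loop exits, its condition on Line~\ref{alg:balance_while_start} is false. Hence $\demand{v}=0$ for every child $v$ of $u$ at the moment the \texttt{for} loop starts.

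Third, I show these zeros persist through the \texttt{for} loop. Suppose $\balance(v')$ is called for some child $v'$, and its postcondition holds inductively for the earlier calls. That call only triggers \inject{} and \extract{} on nodes strictly inside $T_{v'}$, so every move it makes is along an edge of $T_{v'}$. Such a move changes neither the number of pebbles nor the number of targets in $T_v$ for any other child $v\neq v'$, since $T_v$ and $T_{v'}$ are disjoint. So, by \eqref{eq:d_sum_def}, $\demand{v}$ stays $0$. The move also keeps the pebble count of $T_{v'}$ fixed, so $\demand{v'}$ remains $0$ until its own call begins.

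The main obstacle is making this locality claim rigorous. I must confirm that a recursive \inject{}/\extract{} started at a child $w$ of $v'$ never moves a pebble across the edge $uv'$. This follows because each such call moves only between a node and its parent, where the parent lies in $T_{v'}$.
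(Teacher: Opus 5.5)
Your proof is correct and follows the paper's own argument. The false while condition gives $\demand{v}=0$ for every child $v$ of $u$, and these demands stay zero for every recursive call because no pebble crosses an edge between $u$ and a child once the for loop starts; your locality argument makes explicit what the paper only asserts, and the termination argument you add is not needed here (the paper handles termination in \cref{thm:balance_postcond}), though it does no harm.
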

\begin{proof}
The while condition on Line \ref{alg:balance_while_start} is false on Line \ref{alg:balance_start_for}, therefore $d(v)=0$ for every child $v$ of $u$.
 No pebble moves between $u$ and any of its children on or after Line \ref{alg:balance_start_for}, hence this remains true for all recursive calls.
\end{proof}

\begin{lemma}\label{thm:balance_postcond}
\cref{post:balance_zero_demand} is satisfied.
\end{lemma}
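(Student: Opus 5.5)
We argue by strong induction on the height of $T_u$, assuming the preconditions of \balance{} hold at the call, i.e.\ $\demand{u}=0$. The claim is that when $\balance(u)$ returns, $\demand{v}=0$ for every $v\in T_u$. Two things must be shown: the while loop on Lines \ref{alg:balance_part_1_start}--\ref{alg:balance_part_1_end} terminates, and the recursive calls on Lines \ref{alg:balance_part_2_start}--\ref{alg:balance_part_2_end} establish zero demand everywhere below.

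First, termination of the while loop. Every iteration calls either \inject{} or \extract{}, whose preconditions hold by \cref{thm:inject_precond_balance} and its \extract{} analogue. Each such call performs at least one \move{}, namely the final one on Line \ref{alg:inject_call_to_move} or its counterpart. Since \inject{} and \extract{} recurse only on strictly deeper nodes, their recursion depth is bounded by the height of the tree, so each call itself terminates. By \cref{thm:move_decrements_d}, each \move{} decreases the nonnegative integer $\sum_{w\in V}|\demand{w}|$ by exactly one. The number of \move{} calls over the whole loop is therefore bounded by the initial value of this sum, so the loop runs finitely many iterations. This potential argument is the step needing the most care. It relies on \move{}'s preconditions being met every time it is called: by \cref{thm:move_precond_inject} inside \inject{}, and by the symmetric statement inside \extract{}. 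It also relies on $\demand{u}$ itself staying $0$, which holds because no pebble crosses the edge between $u$ and its parent during $\balance(u)$.

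Second, when the loop exits its condition is false, so every child $v$ of $u$ has $\demand{v}=0$. Combining this with $\demand{u}=0$ and \eqref{eq:d_rec_def} gives $\target{u}=\pebble{u}$. By \cref{thm:balance_preconditions}, each recursive call $\balance(v)$ satisfies its preconditions, so the induction hypothesis applies: after it returns, $\demand{w}=0$ for all $w\in T_v$. In the base case $u$ is a leaf, the loop is vacuous, the for loop is empty, and the claim is just $\demand{u}=0$.

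It remains to check that later recursive calls do not spoil earlier ones. A call $\balance(v)$ only moves pebbles along edges inside $T_v$. Such moves do not change the pebble count of $T_{v'}$ for any sibling subtree $T_{v'}$, nor of $T_u$. So they leave $\demand{\cdot}$ unchanged outside $T_v$; for ancestors of $v$ this follows from \eqref{eq:d_sum_def}, since those subtrees contain all of $T_v$. After the for loop, therefore, every node of $T_u$ has zero demand, which is \cref{post:balance_zero_demand}.
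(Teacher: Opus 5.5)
Your proof is correct and follows the paper's argument. The while loop terminates because each \move{} decreases the demand potential (\cref{thm:move_decrements_d}), and the recursion then finishes the job once \cref{thm:balance_preconditions} supplies the preconditions. You also make explicit what the paper leaves implicit, namely the induction on subtree height and the check that later recursive calls do not disturb earlier subtrees, and your global potential $\sum_{w\in V}|\demand{w}|$ is exactly what \cref{thm:move_decrements_d} states, whereas the paper's children-only sum needs a small extra remark.
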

\begin{proof}
We first observe that, after each iteration of the while loop, at least one call to \move{} is made via \inject{} or \extract{}, therefore, by \cref{thm:move_decrements_d}, $\sum_{v \text{ child of } u} |\demand{v}|$ is decremented by at least 1.
Hence, Line \ref{alg:balance_start_for} is ultimately reached.
Since \balance{} is then called on all descendants of $u$ via recursion, and since \cref{pre:balance_zero_demand} is satisfied for all these calls (\cref{thm:balance_preconditions}), all nodes $v \in T_u$ satisfy $\demand{v}=0$ at the end of the call.
\end{proof}

\begin{theorem}\label{thm:balance_is_feasible}
Algorithm \balance{} produces a feasible plan for UPMT.
\end{theorem}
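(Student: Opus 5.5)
The plan is to invoke \balance{} on the root $r$ and combine the earlier lemmas. The argument splits into checking that the initial call is legal, that the output is a valid sequence of moves, that the process terminates, and that the final configuration is solved.

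First, the call $\balance(r)$ must satisfy \cref{pre:balance_input_node,pre:balance_zero_demand}. By \eqref{eq:d_sum_def}, $\demand{r}$ equals the total number of targets minus the total number of pebbles. This is $k-k=0$. Hence the initial call is legal.

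Second, every output move must be a legal pebble move. Moves are emitted only by \move{}, so it suffices that every call to \move{} satisfies \crefrange{pre:move_u_v_input}{pre:move_no_pebble_on_v}. That is, the two endpoints are adjacent, the source holds a pebble, and the destination is empty. \Cref{thm:move_precond_inject} gives this for calls from \inject{}, and the symmetric statement for \extract{} was asserted to follow similarly.

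For this to be valid, the preconditions of every \inject{} and \extract{} call must also hold. This follows by induction on the call tree. Top-level calls are covered by \cref{thm:inject_precond_balance} and its extract analogue, recursive calls by the recursive-call lemma, and recursive \balance{} calls by \cref{thm:balance_preconditions}. I will also check that $\pebble{\cdot}$ and $d$ stay correct throughout, which holds because \move{} updates them correctly by the correctness lemma.

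Third, I will argue termination. By \cref{thm:move_decrements_d}, each move decreases the nonnegative integer $\sum_{u}|\demand{u}|$ by exactly one, so at most its initial value of moves occur. Recursion depth is bounded by the tree height, and each \balance{} call makes finitely many recursive calls. Hence the algorithm halts. This step needs care: the while-loop argument in \cref{thm:balance_postcond} already ensures progress, but I must make sure no call chain of \inject{} or \extract{} fails to reach \move{}. That is guaranteed because every recursion descends strictly and the tree is finite.

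Finally, by \cref{thm:balance_postcond} applied to $u=r$, at termination $\demand{v}=0$ for all $v\in T_r=V$. \Cref{thm:zero_demand} then gives that every pebble sits on a target, so the plan is feasible. The main obstacle is not any single computation but ensuring the preconditions hold inductively across nested calls, where state changes deep in a subtree might invalidate an outer precondition. I would handle this with the observation, used in \cref{thm:move_precond_inject}, that inner calls only move pebbles within $T_v$ and so leave $\pebble{u}$ and the pebble count of $T_v$ unchanged.

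Incidentally, combined with \cref{thm:lower_bound} and \cref{thm:move_decrements_d}, the plan length equals $\sum_u|\demand{u}|$, although feasibility alone does not need this.
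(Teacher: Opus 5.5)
Your proposal is correct and follows the paper's proof, which is just \cref{thm:balance_postcond} applied at the root combined with \cref{thm:zero_demand}. The extra checks you spell out ($\demand{r}=0$ for the initial call, legality of every move, termination) are what the paper's preceding precondition and decrement lemmas supply implicitly.
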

\begin{proof}
Direct from \cref{thm:zero_demand} and \cref{thm:balance_postcond}.
\end{proof}

\subsection{Algorithm Optimality}
We show that the plan output by \balance{} has minimum size, as it meets the bound given in \cref{thm:lower_bound}.

\begin{theorem}
\label{thm:optimality}
Algorithm \balance{} outputs a plan of optimal length $\OPT = \sum_{u \in V} |\demand{u}|$.
\end{theorem}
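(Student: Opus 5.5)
The plan is a potential-function argument with $\Phi = \sum_{u \in V} |\demand{u}|$, where $d$ is the demand vector maintained by the algorithm. Let $\Phi_0$ be the value of this sum for the initial configuration. By \cref{thm:lower_bound}, every feasible plan has length at least $\Phi_0$, so $\OPT \geq \Phi_0$. It then suffices to show that the plan output by \balance{} has length exactly $\Phi_0$. Optimality follows, and so does the identity $\OPT = \sum_{u\in V}|\demand{u}|$ with $d$ evaluated at the start.

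The key observation is that the algorithm outputs moves only inside \move{}, and each call outputs exactly one move. By \cref{thm:move_decrements_d}, each call to \move{} decreases $\Phi$ by exactly $1$. This uses the preconditions \cref{pre:move_v_child,pre:move_u_child}, which were verified for calls from \inject{} (\cref{thm:move_precond_inject}) and, symmetrically, for calls from \extract{}. Also, $d$ stays correct throughout, because \move{} updates it correctly and nothing else changes pebble positions. Hence after $m$ output moves we have $\Phi = \Phi_0 - m$.

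When the algorithm terminates, \cref{thm:balance_postcond} gives $\demand{v}=0$ for all $v \in T_r = V$, so $\Phi = 0$. Here we apply it to the top-level call $\balance(r)$, whose precondition $\demand{r}=0$ holds because the numbers of pebbles and targets are both $k$. Termination was already established in the proof of \cref{thm:balance_postcond}; alternatively, it follows directly since $\Phi$ is a nonnegative integer that strictly decreases with each move. Therefore $m = \Phi_0$, and combining with the lower bound gives $m = \OPT = \Phi_0$.

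The main point needing care is that \emph{every} move output by the algorithm passes through \move{} with its preconditions satisfied, including moves made from \extract{}, whose lemmas the paper only states by analogy. I would check explicitly that when \extract{}$(v)$ calls $\move(v,u)$, the number of pebbles in $T_v$ has not changed since entry, so $\demand{v}<0$ still holds. This mirrors the argument in \cref{thm:move_precond_inject}, and then everything else is bookkeeping.
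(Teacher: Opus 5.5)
Your proposal is correct and follows essentially the same route as the paper. Each output move comes from exactly one call to \move{}, which lowers $\sum_{u}|\demand{u}|$ by one, and this is combined with feasibility (\cref{thm:balance_is_feasible}) and the lower bound of \cref{thm:lower_bound}.

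The only difference is cosmetic. You count the moves exactly via $\Phi=0$ at termination, whereas the paper uses only $\Phi\ge 0$ to get length at most $\Phi_0$ and lets the lower bound close the gap. Note that your step $\OPT\le m$ relies on $\Phi=0$ implying the plan is feasible, i.e.\ on \cref{thm:zero_demand}, which you should cite explicitly.
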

\begin{proof}
Each pebble move is the result of exactly one call to \move{}.
\cref{thm:move_decrements_d} ensures that \move{} can be called at most $\sum_{u \in V} |\demand{u}|$ times.
Thus, the plan produced by \balance{} has the length at most $\OPT$.
Since the plan is feasible (\cref{thm:balance_is_feasible}), its length is $\OPT$ (\cref{thm:lower_bound}).
\end{proof}

The results below further characterize optimal plans.

\begin{corollary}
$\OPT \leq k (n-1)$
\end{corollary}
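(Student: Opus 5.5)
We start from \cref{thm:optimality}, which gives $\OPT = \sum_{u \in V} |\demand{u}|$, where $d$ is evaluated in the initial configuration. The plan is to bound this sum term by term. There are $n$ terms, and the root contributes nothing since $\demand{r} = 0$. So it suffices to show that $|\demand{u}| \leq k$ for every non-root node $u$; this gives $\OPT \leq k(n-1)$.

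The per-node bound comes straight from the definition \eqref{eq:d_sum_def}. We have $\demand{u} = \sum_{v \in T_u} \target{v} - \sum_{v \in T_u} \pebble{v}$, and both sums are nonnegative integers. The first counts targets inside $T_u$, so it is at most $k$. The second counts pebbles inside $T_u$, so it is also at most $k$. The difference of two numbers in $[0,k]$ lies in $[-k,k]$, so $|\demand{u}| \leq k$.

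Summing over the $n-1$ non-root nodes then gives $\OPT = \sum_{u \neq r} |\demand{u}| \leq k(n-1)$. I see no real obstacle here; the only point to watch is using $\demand{r}=0$, noted at the end of the proof of \cref{thm:lower_bound}, to drop the root and get $n-1$ rather than $n$. Equivalently, one can read the bound per edge: each edge $uv$ carries exactly $|\demand{v}| \leq k$ moves in the optimal plan, and a tree has $n-1$ edges.
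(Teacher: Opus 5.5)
Your proof is correct and is exactly the paper's argument: the paper also combines $\OPT = \sum_{u \in V} |\demand{u}|$ from \cref{thm:optimality} with $|\demand{u}| \leq k$ for every node and $\demand{r}=0$. Your derivation of $|\demand{u}| \leq k$ from \eqref{eq:d_sum_def}, as a difference of two counts in $[0,k]$, just spells out a step the paper leaves implicit.
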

\begin{proof}
Direct from \cref{thm:optimality}, the fact that $|d(u)| \leq k$ for all $u \in V$, and $d(r)=0$.
\end{proof}
Corollary \ref{thm:OPT-upper-k-n-k} will further improve this bound to $OPT \leq k(n - k)$.

\begin{theorem}
In an optimal plan, every pebble reaches its target via a shortest path.
\end{theorem}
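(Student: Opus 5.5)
The plan is to count the moves on each edge of an optimal plan, show that every edge is only ever traversed in one direction, and then use the fact that in a tree any walk with no back-and-forth traversal of an edge is the unique (hence shortest) path. Throughout, let $d_0$ denote the demand at the start, before any move is made. By \cref{thm:optimality}, an optimal plan has length exactly $\sum_{u \in V} |d_0(u)|$.

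\textbf{Step 1: net flow on each edge.} Fix a non-root node $v$ with parent $u$. Consider any feasible plan, and let $a_v$ and $b_v$ be the numbers of moves along edge $uv$ from $u$ to $v$ and from $v$ to $u$, respectively.
\begin{itemize}
\item Only moves on $uv$ change the number of pebbles in $T_v$. Each move from $u$ to $v$ lowers $\demand{v}$ by $1$, and each move from $v$ to $u$ raises it by $1$.
\item At the end of a feasible plan the demand is identically zero, by \cref{thm:zero_demand}.
\item Hence $a_v - b_v = d_0(v)$.
\end{itemize}
It follows that the number of moves on $uv$ satisfies
\[
a_v + b_v \;\ge\; |a_v - b_v| \;=\; |d_0(v)|,
\]
with equality if and only if $\min(a_v, b_v) = 0$, that is, if and only if the edge is used in one direction only.

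\textbf{Step 2: tightness in an optimal plan.} Summing over all non-root $v$, and recalling that $d_0(r)=0$, the length of the plan is
\[
\sum_{v \ne r} (a_v + b_v) \;\ge\; \sum_{v \in V} |d_0(v)| \;=\; \OPT.
\]
In an optimal plan the left-hand side equals $\OPT$. Since every term satisfies the inequality of Step 1, equality must hold term by term. Therefore every edge is traversed in at most one direction throughout the whole plan.

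\textbf{Step 3: trajectories are simple paths.} Follow one pebble. Its sequence of positions is a walk $W$ in $T$ from its starting node $s$ to its final node $t$, which is a target because the plan is feasible. Suppose $W$ is not the unique $s$–$t$ path of the tree.
\begin{itemize}
\item Then $W$ repeats some vertex, so it contains a nonempty closed subwalk.
\item A tree has no cycles, so a nonempty closed walk in it must traverse some edge in both directions. To see this, take the vertex of the closed walk farthest from its base point, or argue that removing an edge disconnects the tree, so every crossing of that edge out of one side must be matched by a crossing back.
\end{itemize}
This pebble would therefore move along some edge in both directions, contradicting Step 2. Hence $W$ is the unique $s$–$t$ path in $T$, which is a shortest path.

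\textbf{Main obstacle.} The delicate point is Step 1. Demands evolve during the plan, so the argument must be phrased in terms of the initial demand $d_0$ and the final demand $0$, giving the net-flow identity $a_v - b_v = d_0(v)$, rather than in terms of the demand at intermediate times. Once this identity is in place, the edgewise tightness in Step 2 and the tree-walk argument in Step 3 follow directly.
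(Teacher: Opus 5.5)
Your proof is correct. It shares the paper's final step: if no edge is ever traversed in both directions, each pebble's walk is simple and hence is the unique, shortest tree path. However, you establish one-directionality by a different argument.

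The paper reads one-directionality off \cref{pre:move_v_child,pre:move_u_child} of \move{}, which rigorously covers only the plans output by \balance{}. It then extends to arbitrary optimal plans with the informal remark that a pebble doubling back would make an unnecessary move.

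Your per-edge net-flow identity $a_v - b_v = d_0(v)$ holds for every feasible plan. Combined with $\OPT = \sum_{u} |d_0(u)|$ from \cref{thm:optimality}, it turns that remark into a proof: it forces $\min(a_v, b_v) = 0$ on every edge of every optimal plan. It even shows that a feasible plan is optimal if and only if no edge is traversed in both directions. You also justify a step the paper leaves implicit in ``all paths are simple'', namely that a nonempty closed walk in a tree must backtrack over some edge.

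The paper's route is quicker for the algorithm's own output, where one-directionality is immediate from the preconditions. Yours covers every optimal plan, as the statement requires, and uses \balance{} only to know that the lower bound of \cref{thm:lower_bound} is attained.
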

\begin{proof}
\cref{pre:move_v_child,pre:move_u_child} ensure that a move between two nodes can occur in at most one direction throughout the algorithm.
Therefore, since a feasible plan is returned, all paths are simple, i.e. do not cycle.
It remains only to recall that there is a unique simple path between two nodes in a tree~\citep{diestel2017}, and thus it is shortest.

While this proof only shows that optimal plans produced by \balance{} have this property, note that every move in these plans is necessary to return all demands to zero. Then, if a pebble doubles back over an edge, it makes an unnecessary move, in addition to all the necessary moves. Then this plan would not be optimal.
\end{proof}

\subsection{Algorithm Complexity}

We suppose that a reference (e.g. index or pointer) to a node uses $O(\log n)$ bits, and thus reading or writing a reference to a node takes $O(\log n)$ time.
Therefore, the size of the encoding of an n-ary tree on $n$ nodes is $n\log n$.
Furthermore, the encoding of a plan is the number of moves times two node indices, one for each endpoint, so its size is $O(\OPT \log n)$ if it is optimal.

We prove that the runtime of this algorithm is asymptotically optimal, as it runs in a time linear in the encoding size  of the input plus the encoding size of the plan, i.e. $O(n\log n + \OPT \log n)$.

In the proofs that follow, observe that, if, for the analysis, we chose a computational model where all numbers could be encoded in constant size, all $\log$ terms would disappear.
The runtime would then simplify to $O(n + \OPT)$, which, in this model, would also be the encoding size of the input and the encoding size of the output.

\begin{lemma}\label{thm:runtime_of_d}
Function $d$ can be computed in $O(n \log n)$ time.
\end{lemma}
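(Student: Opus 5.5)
The plan is a single post-order traversal of the rooted tree that evaluates the recursive definition \eqref{eq:d_rec_def} bottom-up.

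\textbf{Computing an order.} First, fix the root $r$ and obtain a parent/children representation of $T$. A breadth-first or depth-first search from $r$ visits each node once and each edge a constant number of times. This costs $O(n)$ elementary steps. Each step reads or writes a node reference of $O(\log n)$ bits, so the total is $O(n\log n)$ time. During the search we record the visiting order. To avoid recursion-depth issues we use an explicit stack, or simply reverse the BFS order, which gives an order in which every child precedes its parent.

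\textbf{Accumulating the demands.} Next, initialize $\demand{u} \gets \target{u} - \pebble{u}$ for every $u \in V$. Then process the nodes in the reversed order, and for each non-root node $v$ with parent $u$ perform $\demand{u} \gets \demand{u} + \demand{v}$. When a node $u$ is reached, all its children have already been processed. Since the children are exactly the nodes whose subtrees partition $T_u \setminus \{u\}$, the value stored at $u$ is at that point exactly the right-hand side of \eqref{eq:d_rec_def}. By induction on the order, $d$ is therefore correct everywhere. This step makes $n-1$ additions in total, one per edge.

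\textbf{Cost accounting, the only delicate step.} The demand values satisfy $|\demand{u}| \le k \le n$, so each one fits in $O(\log n)$ bits. Each addition or subtraction thus costs $O(\log n)$, and each parent lookup also costs $O(\log n)$. Summing over the $O(n)$ operations of both phases gives $O(n\log n)$. I expect no real obstacle. The only point needing care is to justify that the intermediate partial sums stay bounded by $n$ in absolute value. This holds because each partial sum equals the number of targets minus the number of pebbles over a subset of $T_u$, so its magnitude is at most $n$, and the $O(\log n)$ cost per arithmetic operation follows.
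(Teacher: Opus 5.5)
Your proposal is correct and is essentially the paper's argument: a bottom-up (post-order) evaluation of \eqref{eq:d_rec_def}, charging $O(\log n)$ for each child access and for each addition on demand values of size $O(\log k) \subseteq O(\log n)$, summed over the $n-1$ parent--child edges. Rooting the tree first and using a reversed BFS order in place of a recursive post-order traversal are only implementation details; note also that your partial sums are actually bounded by $k$ in absolute value (as the paper observes), which is slightly sharper than your bound of $n$.
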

\begin{proof}
Note that $d$ holds numbers of encoding size at most $O(\log k)$.
The base case of the recursive definition \eqref{eq:d_rec_def} of $d$ occurs at the leaves, thus $d$ can be computed by a postorder traversal of $T$.
At each node $u$, we access all of its $n_u$ children, which takes $O(n_u\log n)$ time, and add up the $n_u+2$ terms of \eqref{eq:d_rec_def}, which takes $O(n_u\log k)$ time.
Since $k \leq n$, this is $O(n_u\log n)$ per node.
Since there are $\sum_{u \in V} n_u = n-1$ children in $T$, the runtime is in $O(n\log n)$.
\end{proof}

\begin{lemma}
\move{} runs in $O(\log n)$.
\end{lemma}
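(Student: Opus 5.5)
The plan is to count the elementary operations performed by a single call to \move{} and bound the cost of each under the computational model fixed above. In that model a node reference takes $O(\log n)$ bits, so reading or writing one takes $O(\log n)$ time. \move{} has no loops and no recursion: it performs a fixed, constant number of steps. I will show each step costs $O(\log n)$, so the total is $O(\log n)$.

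The steps, in order, are as follows.

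\emph{Pebble updates.} The two assignments $\pebble{u} \gets 0$ and $\pebble{v} \gets 1$ each write one bit at a node we already hold a reference to. Accessing the node costs $O(\log n)$.

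\emph{Parent test.} The test ``$v$ is a child of $u$'' can be decided in $O(\log n)$ time if each node stores a reference to its parent, which costs $O(n\log n)$ extra space and is computed once during the preprocessing traversal of \cref{thm:runtime_of_d}. We then compare the stored parent of $v$ with $u$, which is a comparison of two $O(\log n)$-bit references. Equivalently, the caller (\inject{} or \extract{}) already knows the direction of the move and can pass it as a flag.

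\emph{Demand update.} We increment or decrement a single entry of $d$. By \cref{thm:runtime_of_d}, every value of $d$ has encoding size $O(\log k) \subseteq O(\log n)$, so adding or subtracting 1 costs $O(\log n)$ even with ripple-carry arithmetic.

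\emph{Output.} Emitting the move $(u,v)$ writes two node references, which costs $O(\log n)$.

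Summing a constant number of $O(\log n)$ terms gives the claim.

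The only delicate point is the parent test. I must make sure it is not implemented by scanning the children of $u$, which could take $O(n_u\log n)$ time. I will resolve this by fixing, as part of the data structure built alongside $d$, a parent pointer for each node, or by having the caller pass the direction as a flag.
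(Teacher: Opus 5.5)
Your proposal is correct and follows the paper's own argument: \move{} performs a constant number of steps (accessing $u$, $v$ and their associated data, emitting the two node references, and an $O(\log k)$ update of $d$), each costing $O(\log n)$. Your treatment of the parent test, via stored parent pointers or a direction flag passed by the caller, makes explicit what the paper covers by ``accessing related data.''
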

\begin{proof}
This is simply the time to output $u$ and $v$, access $u$, $v$ and related data, as well as the addition in $O(\log k)$ time to update $d$.
\end{proof}

\begin{lemma}
The total number of calls to \inject{} and \extract{} is \OPT{}.
\end{lemma}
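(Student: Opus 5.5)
The plan is to set up a bijection between calls to \inject{} and \extract{} and calls to \move{}, and then count the latter using \cref{thm:optimality}.

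First, I will read off the structure of \inject{} (\cref{alg:inject_pebble}). Every invocation $\inject(v)$, whether it comes from \balance{} or from a recursive call, reaches Line \ref{alg:inject_call_to_move} exactly once. This uses the correctness lemmas already established: the preconditions are satisfied, so the recursive call terminates with its postconditions, and the call to \move{} is valid by \cref{thm:move_precond_inject}. Hence each call to \inject{} issues exactly one direct call to \move{}. This count excludes any moves issued by the nested recursive calls, since those are attributed to the nested invocations themselves. The same argument applies verbatim to \extract{}, since its correctness mirrors that of \inject{}. It gives exactly one direct call to $\move(v,u)$ per invocation.

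Next, I will check that \move{} is called only from these two places. Inspecting \balance{}, it never calls \move{} directly; it interacts with its children only through \inject{} and \extract{}. So the map sending each call of \inject{} or \extract{} to the \move{} it directly issues is a bijection onto the set of all \move{} calls. Injectivity holds because distinct invocations issue distinct calls. Surjectivity holds because every \move{} call originates from exactly one such invocation.

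Finally, \cref{thm:optimality} shows that the number of \move{} calls equals the plan length, which is \OPT{}. Each \move{} outputs exactly one move, and the plan has length exactly $\sum_{u}|\demand{u}| = \OPT$. Combining this with the bijection gives the claim.

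The only mildly delicate step is the first one. It requires that termination and the precondition checks guarantee each invocation actually reaches its \move{} line. Without this, a call could in principle recurse without ever moving, and the bijection would fail. That guarantee is supplied by the correctness lemmas, together with \cref{thm:move_decrements_d} for termination, since the potential $\sum_u|\demand{u}|$ is finite.
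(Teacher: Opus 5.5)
Your proposal is correct and is the paper's argument: each invocation of \inject{} or \extract{}, recursive or not, makes exactly one call to \move{}, and \cref{thm:optimality} counts those calls as \OPT{}. Your checks that \balance{} never calls \move{} directly, and that every invocation reaches its \move{} line, make explicit what the paper leaves implicit. One small correction: a chain of recursive \inject{} (or \extract{}) calls ends because each recursive call descends to a child in a finite tree. The finiteness of $\sum_u|\demand{u}|$ is not the reason, since no move is made until the chain ends.
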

\begin{proof}
Both \inject{} and \extract{} (whether it is a recursive call or not) call \move{} exactly once. \Cref{thm:optimality} shows that this happens \OPT{} times.
\end{proof}

\begin{lemma}
\label{thm:runtime_of_picking_child}
At a node $u$, picking a child $v$ with $\demand{v}>0$ or $\demand{v}<0$ takes $O(\log n)$ time.
\end{lemma}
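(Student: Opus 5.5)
The plan is to maintain, at every node $u$, two auxiliary containers: one holding the children $v$ of $u$ with $\demand{v}>0$, and one holding those with $\demand{v}<0$. Each container will be a doubly linked list. In addition, each node $v$ stores pointers to its own list cell (if any) in its parent's lists. Picking a child with positive, respectively negative, demand then amounts to reading the head of the corresponding list. That requires $O(1)$ pointer operations, each reading or writing a node reference of $O(\log n)$ bits, so $O(\log n)$ time in total.

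The key steps, in order, are as follows.
\begin{enumerate}
\item \textbf{Initialization.} Right after computing $d$ by the postorder traversal of \cref{thm:runtime_of_d}, we insert each non-root node $v$ into the positive or negative list of its parent according to the sign of $\demand{v}$. This costs $O(\log n)$ per node, hence $O(n\log n)$ overall, which is absorbed into the preprocessing cost. It does not affect the per-pick bound.
\item \textbf{Updates.} Only \move{} changes $d$, and it changes a single value $\demand{v}$ (or $\demand{u}$). By \cref{thm:move_decrements_d}, that value moves strictly toward $0$: a positive value is decremented or a negative value is incremented. So the sign of the affected value can only change from nonzero to $0$, never flip. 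We augment \move{} so that, when the updated value becomes $0$, the node is unlinked from its parent's list via its stored cell pointer. That is an $O(1)$ pointer splice, again costing $O(\log n)$, so the $O(\log n)$ bound on \move{} is preserved.
\item \textbf{Invariant.} By induction over calls to \move{}, the lists of every node $u$ always contain exactly its children with positive (resp.\ negative) current demand. Combined with the existence arguments already given (\cref{thm:inject_precond_balance} and the corresponding lemma for \inject{}), this guarantees the relevant list is nonempty whenever a pick is requested. The head can therefore be returned directly.
\end{enumerate}

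The main obstacle is ensuring that no operation ever scans the children of $u$. A naive scan would cost $O(n_u\log n)$ per pick and could blow up the total, since a high-degree node may be picked from many times. The monotonicity from \cref{thm:move_decrements_d} (no sign reversals, so nodes only ever leave lists and never need to be reinserted or moved between lists) is what makes constant-pointer maintenance possible. Finally, the while-loop condition in \balance{} ("a child with $d(v)\neq 0$ exists") can be tested the same way, by checking whether either list of $u$ is nonempty in $O(\log n)$ time.
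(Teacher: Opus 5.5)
Your proposal is correct and follows essentially the same approach as the paper. The paper also keeps, at each node $u$, lists of children bucketed by the sign of their current demand, returns the head of the appropriate list, and takes a child out of its list once its demand reaches $0$. Your version is more explicit than the paper's, which additionally keeps a list for zero-demand children and simply asserts $O(1)$ removal. You spell out the doubly linked cells with back-pointers that make the removal constant-time, and you use the fact that \move{} only pushes a demand toward $0$ to justify that no child ever has to migrate between the positive and negative lists.
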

\begin{proof}
When creating the tree, store each child $v$ of $u$ in one of three lists at $u$ according to whether $\demand{v}>0$, $\demand{v}=0$ or $\demand{v}<0$. Return the first node of each list as required, which takes $O(\log n)$. If, after a move, $\demand{v}$ is now 0, pop $v$ out of its list and place it in the $d=0$ list, which is in $O(1)$.

\end{proof}

\begin{corollary}
\label{thm:runtime_of_subroutines}
The total runtime for all calls to \inject{} and \extract{} is in $O(\OPT \log n)$.
\end{corollary}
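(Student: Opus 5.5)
The plan is to charge the work of every call to \inject{} or \extract{} to the single call to \move{} that it makes, and then multiply by the total number of calls. By the preceding lemma, the total number of such calls, recursive or not, is exactly \OPT{}. So it suffices to show that a single call costs $O(\log n)$ time, excluding the time spent inside its recursive sub-call.

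Fix one call of $\inject(v)$; the treatment of $\extract(v)$ is symmetric. Its non-recursive work has four parts:
\begin{itemize}
\item[(i)] test $\pebble{v}=1$ on Line~\ref{alg:inject_if}, which reads one bit at node $v$ after following an $O(\log n)$-bit reference, so it costs $O(\log n)$;
\item[(ii)] if the test succeeds, pick a child $w$ of $v$ with $\demand{w}>0$, which costs $O(\log n)$ by \cref{thm:runtime_of_picking_child};
\item[(iii)] issue the recursive call itself, which is constant overhead plus passing one node reference, $O(\log n)$;
\item[(iv)] call \move{}, which costs $O(\log n)$ by the lemma on the runtime of \move{}.
\end{itemize}
In addition, after the move the value of $\demand{v}$ (or $\demand{u}$ for \extract{}) may have dropped to $0$. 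The list bookkeeping from the proof of \cref{thm:runtime_of_picking_child} must then move that node into the $d=0$ list of its parent, which costs $O(\log n)$. So each call contributes $O(\log n)$ on its own.

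Summing over all \OPT{} calls gives $O(\OPT\log n)$. The sum is legitimate because every call's recursive work is itself one of the counted calls, so nothing is counted twice or missed.

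The one point needing care is that the child lists stay consistent. The values $\demand{\cdot}$ change only inside \move{}, and only at the child endpoint of the moved edge. Each move therefore requires at most one list update, at the parent of that endpoint. This keeps the per-call cost constant in the number of list operations, with no hidden search over children. It also means the lists need to be built only once, in the $O(n\log n)$ preprocessing alongside \cref{thm:runtime_of_d}, and that cost is not charged here.
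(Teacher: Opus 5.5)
Your proposal is correct and follows the same approach as the paper, which leaves this corollary unproved beyond combining three facts: there are exactly \OPT{} calls to \inject{}/\extract{}, each call makes one $O(\log n)$ call to \move{}, and each call makes at most one $O(\log n)$ child selection, for $O(\OPT\log n)$ in total. One small slip: in $\extract(v)$ the demand that changes is $\demand{v}$ of the child endpoint, and it rises to $0$ rather than dropping, not $\demand{u}$ of the parent — as your own final paragraph correctly states.
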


\begin{lemma}
\label{thm:runtime_balance_iterations}
Across all calls, the total number of while (resp. for) loop iterations of \balance{} is at most \OPT{} (resp. $n-1$).
\end{lemma}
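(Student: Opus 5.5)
The two bounds can be proved separately, each by a simple charging argument built on results already established.

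\emph{While loop iterations.} We charge each iteration of the while loop on Line \ref{alg:balance_while_start} to a call to \move{}. In an iteration, either $\pebble{u}=1$ and \balance{} calls $\inject(v)$, or $\pebble{u}=0$ and it calls $\extract(v)$. Both functions end by calling \move{} exactly once at their own level. That top-level move is the move between $u$ and $v$, and it is distinct for each iteration because it occurs inside that iteration. This gives an injective map from while iterations, across all calls of \balance{}, to calls of \move{}. Each call to \move{} decrements $\sum_{u \in V} |\demand{u}|$ by one (\cref{thm:move_decrements_d}), and this quantity starts at $\OPT$ (\cref{thm:optimality}) and never goes below zero. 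So the total number of calls to \move{} is at most $\OPT$, and so is the number of while iterations. An even simpler route is to observe that each iteration makes at least one call to \inject{} or \extract{}, whose total number of calls is exactly $\OPT$ by the preceding lemma.

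\emph{For loop iterations.} The argument rests on showing that \balance{} is called exactly once per node. The root $r$ is called once from the top level. Every other call is made from the for loop of $\balance(u)$ on a child $v$ of $u$, and each call to $\balance(u)$ iterates once over each child of $u$. By induction on depth, each node $u$ therefore receives exactly one call, because its parent receives exactly one call. The for loop of $\balance(u)$ then has exactly $n_u$ iterations, where $n_u$ is the number of children of $u$. Summing gives $\sum_{u \in V} n_u = n-1$, since every non-root node is the child of exactly one node.

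I do not expect a real obstacle. The one point to state carefully is the termination of each while loop, which is guaranteed by \cref{thm:balance_postcond}; without it the loop counts would not be finite and well defined. Beyond that, the proof only needs to confirm that every while iteration triggers at least one move, which is the observation already made in the proof of \cref{thm:balance_postcond}.
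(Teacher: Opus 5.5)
Your proposal is correct and follows the paper's argument for the while loop: the paper's entire proof is that each while iteration causes at least one pebble move, and there are at most $\OPT$ moves in total. The paper gives no argument for the for-loop bound, so your count of exactly one \balance{} call per node, giving $\sum_{u} n_u = n-1$ iterations, supplies the step the paper leaves implicit.
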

\begin{proof}
Each while loop iteration leads to at least one pebble move, of which there is at most \OPT{} across all calls to \balance{}.
\end{proof}

\begin{theorem}
\label{thm:overall_runtime_UPMT}
The runtime of \balance{} is in $O(n\log n + \OPT \log n)$.
\end{theorem}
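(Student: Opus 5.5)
The bound follows by adding up the costs of the separate phases, using the lemmas already established. There are three parts. The preprocessing consists of computing $d$ and building the three child lists at each node used in \cref{thm:runtime_of_picking_child}. The work done by \balance{} itself, not counting its calls to \inject{} and \extract{}, is the second part. The third part is the work done inside \inject{}, \extract{} and \move{}.

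\emph{Preprocessing.} By \cref{thm:runtime_of_d}, $d$ is computed in $O(n\log n)$ time. During the same postorder traversal, each child $v$ of $u$ is appended to one of the three lists at $u$ according to the sign of $\demand{v}$. This adds $O(\log n)$ per child, so $O(n\log n)$ overall.

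\emph{Calls to \inject{} and \extract{}.} By \cref{thm:runtime_of_subroutines}, all these calls together, including their recursive calls and the \move{} calls they make, cost $O(\OPT\log n)$. One point needs checking: the list maintenance of \cref{thm:runtime_of_picking_child} must be charged to \move{}. Each move changes the demand of exactly one node $v$, and that change moves $v$ between lists only at its parent. So the extra cost is $O(\log n)$ per move.

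\emph{Overhead of \balance{}.} By \cref{thm:runtime_balance_iterations}, the while loops run at most \OPT{} iterations in total across all calls. Each iteration does the following, in $O(\log n)$ time:
\begin{itemize}
\item evaluates the loop condition, by checking whether the nonzero-demand lists at $u$ are empty;
\item tests $\pebble{u}$;
\item picks a child, in $O(\log n)$ by \cref{thm:runtime_of_picking_child}.
\end{itemize}
The for loops run at most $n-1$ iterations in total, and there are at most $n$ calls to \balance{}, one per node, since each node is visited once from its parent. Each call has $O(\log n)$ setup cost, and the final failing evaluation of each while condition costs $O(\log n)$. This gives $O(n\log n)$.

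Summing the three parts gives $O(n\log n+\OPT\log n)$.

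The main subtlety is making sure the while-loop condition is not evaluated by scanning all children of $u$ at each iteration, which would cost $O(n_u)$ per iteration and break the bound. The three-list structure answers this test in $O(\log n)$ by checking emptiness. The other point to verify is that recursion bookkeeping, such as stack frames, is proportional to the number of calls. It is: each \inject{} or \extract{} call corresponds to one move, and each \balance{} call to one node.
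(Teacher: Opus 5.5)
Your proposal is correct and follows the same route as the paper: preprocessing via \cref{thm:runtime_of_d}, all \inject{}/\extract{}/\move{} work via \cref{thm:runtime_of_subroutines}, and the \balance{} loop iterations bounded by \cref{thm:runtime_balance_iterations} at $O(\log n)$ each via \cref{thm:runtime_of_picking_child}. Your extra accounting (building the three child lists during the postorder pass, charging list updates to \move{}, and testing the while condition by list emptiness rather than scanning children) makes explicit details that the paper's short proof leaves implicit.
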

\begin{proof}
Reading the input and computing $d$ (\cref{thm:runtime_of_d}) takes $O(n\log n)$ time.
All calls to functions \move{}, \inject{} and \extract{} together take $O(\OPT \log n)$ time (\cref{thm:runtime_of_subroutines}).
The total number of iterations in all calls to \balance{} is $O(\OPT+n)$ (\cref{thm:runtime_balance_iterations}).
\Cref{thm:runtime_of_picking_child} shows that each iteration takes $O(\log n)$ time.
\end{proof}

\section{Bounded Suboptimal Unlabeled Multi-Agent Path Finding on Trees}
\label{sec:mapf-algorithm}
In this section we propose an algorithm to sup-optimally solve the MAPF problem on trees. The algorithm takes a similar top-down approach to \balance{} to find an unlabeled MAPF solution. Agents make the same moves as in the previous algorithm, informed by the demand function.

\subsection{Algorithm Description}
\Cref{alg:process-st} uses similar principles to \Cref{alg:balance_subtrees}. As in \Cref{alg:balance_subtrees}, each agent traverses edges according to the demand function, but now wait actions must also be considered.

We consider the tree to be ordered so that children with negative demand are positioned to the left of their parent, while those with positive demand are to the right. The algorithm performs a single in-order traversal of the tree. At each vertex, agents may be received from the parent, and then agents are pulled from each node on the left by a sequence of recursive calls, then all these agents are sent to the right. To ensure incoming agents do not clash, each new incoming agent is delayed until all previously processed agents have passed before being allowed to move to the vertex. We ensure all wait actions are performed at the start, meaning each agent's path consists of an initial sequence of wait actions followed by uninterrupted movement. 

To illustrate this, consider a node $u$ with two negative-demand children, $v$ and $w$, where $v$ is to the left of $w$. Agents are only moved from $w$ to $u$ after all agents moving from $v$ to $u$ have been sent away from $u$. Suppose further that $v$ is the left-most child of $u$, and $u$ is a positive-demand child of $x$. In this case agents may only move from $v$ to $u$ once all agents moving from $x$ to $u$ have been moved through $u$. In our ordering of the tree, $x$ is left of $v$ which is left of $w$ which is left of $u$. So all agents are moving from left to right, and are giving way to agents coming from further left.

Before execution, each node $u$ in $T$ is assigned a demand $d(u)$, as in the UPMT algorithm. This demand is updated throughout the algorithm. The initial demand of node $u$ is denoted $d_I(u)$. 

Additionally, each vertex $u$ has an associated list $l(u)$ (initially empty) and positive integer value $s(u)$ (initially 0). When $l(u)$ is empty, we define $\max l(u) = 0$ and $\min l(u) = \infty$ for convenience. 
The list $l(u)$ primarily stores the times at which an agent passes through $u$. 
However, if $u$ is a target node and the algorithm has terminated, then $\max l(u)$ is the timestep at which the final agent arrives at $u$. 
In this case, $u$ is occupied by an agent from timestep $\max l(u)$ onwards. 
We will see that algorithms \ref{alg:process-st} and \ref{alg:process-peb} ensure $l(u)$ is always sorted in increasing order, with no duplicates.

The integer $s(u)$ represents the earliest timestep at which we will allow an agent to leave $u$. 
If $d_I(u) \leq 0$, then $s(u) = 0$ for the duration of the algorithm. 
Otherwise, this value is updated when \directtraffic{} is run on $parent(u)$, to avoid collisions with other agents passing through $parent(u)$. 

 Figure \ref{fig:DemandExample2} shows the final values of $l(u)$ and $s(u)$ at each vertex $u$ of an example problem.

\Cref{alg:process-st}, \directtraffic{}, takes in a node $u$. 
It outputs all moves $(v, w, t)$ in the plan with $v \in T_u$. 
A call to \directtraffic{}$(\mathit{r})$ outputs a complete plan.

This algorithm performs an ``in-order'' traversal of the tree. 
A call to $\directtraffic{}(u)$ first processes the children of $u$ with negative demand
(lines \ref{line:neg-child-start}--\ref{line:neg-child-end}), 
then processes all moves leaving $u$ 
(lines \ref{line:self-start}--\ref{line:self-end}), 
then processes the remaining children of $u$ 
(lines \ref{line:pos-child-start}--\ref{line:pos-child-end}).

\Cref{alg:process-peb}, \sendagent{}, is a subroutine of \Cref{alg:process-st}. 
It takes in a node $u$ and a time $t$, and outputs a move $(u, v, t)$ that starts at node $u$ and occurs at time $t$. 
The receiving node $v$ is chosen arbitrarily from all $v$ for which the move $(u, v)$ conforms to the demands.
We add $t+1$ to the list $l(v)$. 
If $u$ is a target node, then the last agent arriving at $u$ will be left at $u$; 
that is,
\sendagent{}$(u, \max l(u))$ will do nothing.

\begin{algorithm}
\caption{\directtraffic{}$(u)$}\label{alg:process-st}
\begin{algorithmic}[1]
\PRECOND $d(u) \leq 0$  \label{cond: init-demand}
\PRECOND If $d(u) < 0$ then $l(u)$ is empty  \label{cond: lu-empty}
\PRECOND If $d(u) < 0$ then $s(u) \ge \max l(parent(u))$ and $s(u) + 1 \ge s(parent(u))$ \label{cond: su-max-l}
\PRECOND $l(u)$ is sorted with no duplicates, $\min l(u) \ge 1$ \label{cond: lu-sorted}
\PRECOND If $d(u) = 0$ then $s(u) = 0$ \label{cond: su-0}
\POSTCOND $d(u) = 0$ \label{cond: post-demand}
\POSTCOND $d(v) = 0$ for each $v$ child of $u$ \label{cond: post-ch-demand}
\POSTCOND $l(u)$ is sorted with no duplicates \label{cond: post-sorted}
\POSTCOND $s(u) \leq \min l(u)$ \label{cond: post-min}
\IF {$agent(u) = 1$}
    \STATE insert $s(u)$ at start of $l(u)$ \label{line:start-agent}
\ENDIF
\FOR {$v$ child of $u$ with $d(v) < 0$} \label{line:neg-child-start}
\STATE $s(v) \gets \max(s(u)-1, \max l(u), 0)$  \label{line: s'}
\STATE $\directtraffic{}(v)$ \label{line:neg-child-end}
\ENDFOR
\FOR {time $t$ in $l(u)$} \label{line:self-start}
\STATE $\sendagent(u, t)$ \label{line:self-end}
\ENDFOR 
\STATE mark $u$ as processed \label{line:mark-processed}
\FOR {$v$ child of $u$ not yet processed} \label{line:pos-child-start}
\STATE $\directtraffic{}(v)$ \label{line:pos-child-end}
\ENDFOR 
\end{algorithmic}
\end{algorithm}

\begin{algorithm}
\caption{$\sendagent(u, t)$}\label{alg:process-peb}
\begin{algorithmic}[1]
\IF {$d(u) < 0$}
\STATE append $t+1$ to $l(parent(u))$ \label{line:up-t+1}
\STATE output move $(u, parent(u), t)$
\STATE $d(u) \gets d(u) + 1$ \label{line: inc-demand}
\ELSIF {a child $v$ of $u$ with $d(v) > 0$ exists}
\STATE pick a child $v$ of $u$ with $d(v) > 0$ \label{line: pick-child}
\STATE append $t+1$ to $l(v)$ \label{line: down-t+1}
\STATE output move $(u, v, t)$
\STATE $d(v) \gets d(v) - 1$ \label{line: red-demand}
\ENDIF
\end{algorithmic}
\end{algorithm}

\begin{figure}
    \centering
     \begin{tikzpicture}
     \SetDistanceScale{2}
        \Vertex[IdAsLabel, x=0, y=2, color=red, shape = rectangle]{D} 
        \Vertex[IdAsLabel, x=-1.5, y=1]{B}
        \Vertex[IdAsLabel, x=-0.5, y=1, color=green, shape = diamond]{C}
        \Vertex[IdAsLabel, x=1.5, y=1, color=red, shape = rectangle]{F}
        \Vertex[IdAsLabel, x=-2, y=0, color=green, shape = diamond]{A}
        \Vertex[IdAsLabel, x=1, y=0, color=green, shape = diamond]{E}
        \Vertex[IdAsLabel, x=2, y=0, color=red, shape = rectangle]{G}
        \Text[position=above, x=0, y=2, distance=0.3, width = 1.2cm]{$s: 0$ $l:2, 3$}
        \Text[position=left, x=-1.5, y=1, distance=0.3, width = 1.2cm]{$s: 0$ $l:1$}
        \Text[position=left, x=-2, y=0, distance=0.3, width = 1.2cm]{$s: 0$ $l:0$}
        \Text[position=below, x=-0.5, y=1, distance=0.3, width = 1.2cm]{$s: 2$ $l:2$}
        \Text[position=right, x=1.5, y=1, distance=0.4, width = 1.2cm]{$s: 0$ $l:3, 4$}
        \Text[position=left, x=1, y=0, distance=0.3, width = 1.2cm]{$s: 3$ $l:3$}
        \Text[position=right, x=2, y=0, distance=0.4, width = 1.2cm]{$s: 0$ $l:4$}
        \Edge(A)(B)
        \Edge(B)(D)
        \Edge(C)(D)
        \Edge(D)(F)
        \Edge(E)(F)
        \Edge(F)(G)
    \end{tikzpicture}
    \caption{Example configuration of nodes, labeled $A - G$ in order of processing. The final values of $s(u)$ and $l(u)$ are shown. The agent at $A$ travels the path $A - B - D - F - G$ with no initial wait. The agent at $C$ waits at timesteps $0$ and $1$, giving way to the agent at $A$. The agent at $E$ waits at timesteps $0$, $1$ and $2$, giving way to the agent at $A$.}
    \label{fig:DemandExample2}
\end{figure}
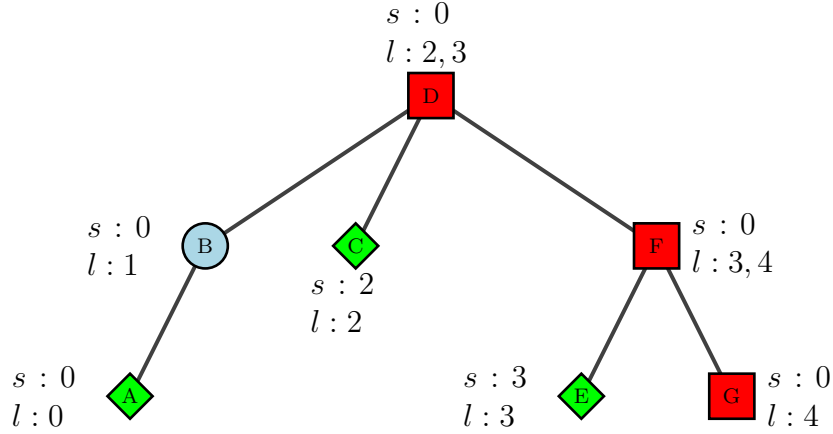

\subsection{Algorithm Correctness}

We define the following predicates on nodes $u$ in $T$.
    \begin{itemize}
        \item $pre(u)$: All preconditions are satisfied when \directtraffic{} is called on $u$.
        \item $neg\_post(u)$: For each child $v$ of $u$ with $d_I(v) < 0$, the postconditions of $\directtraffic{}(v)$ are satisfied.
        \item $get\_up(u)$: For each child $v$ of $u$ with $d_I(v) < 0$, $u$ is updated with $|d_I(v)|$ new items $t+1$ with $t \in l(v)$ when $v$ is processed.
        \item $get\_down(u)$: If $d_I(u) > 0$, then $|l(u)| = d_I(u)$ when \directtraffic{} is called on $u$.
        \item $post(u)$: All postconditions are satisfied when $u$ is processed.
        \item $send\_up(u)$: If $d_I(u) < 0$, then $parent(u)$ is updated with $|d_I(u)|$ new items $t+1$ with $t \in l(u)$ when $u$ is processed.
        \item $send\_down(u)$: For each $w$ child of $u$ with $d_I(w) > 0$, $l(w)$ is updated with $|d(w)|$ new items $t+1$, with $t \in l(u)$, in ascending order with no duplicates, when $u$ is processed.
        \item $leave(u)$: If $target(u) = 0$, an agent passes through $u$ at each time $t \in l(u)$. If $target(u) = 1$, the last agent in $l(u)$ stays at $u$.
    \end{itemize}

\begin{lemma} \label{lem: get-up-neg-post}
    If $post(v)$ and $send\_up(v)$ hold for all children $v$ of $u$ with $d_I(v) < 0$, then $neg\_post(u)$ and $get\_up(u)$ are true.
\end{lemma}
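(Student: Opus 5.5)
This lemma is essentially a bookkeeping step: both conclusions come from unpacking the definitions of $neg\_post(u)$ and $get\_up(u)$ and matching them to the hypotheses on the negative-demand children. The plan is to fix an arbitrary child $v$ of $u$ with $d_I(v) < 0$ and show that the conclusions hold for that $v$. The first task is to identify the moment at which $v$ is \emph{processed}. Since $d_I(v) < 0$, the child $v$ is handled in the first loop of $\directtraffic{}(u)$ (Lines \ref{line:neg-child-start}--\ref{line:neg-child-end}). The key observation is that $d(v)$ has not changed before this call. Indeed, $d(v)$ is modified only by $\sendagent$ at $v$ (Line \ref{line: inc-demand}) or at $u$ (Line \ref{line: red-demand}). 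Neither has happened yet: $v$ has not been processed, and the loop at Line \ref{line:self-start} on $u$ has not yet run. Hence $d(v) = d_I(v) < 0$ when the loop condition is tested, so $\directtraffic{}(v)$ is indeed called from this first loop, and ``when $v$ is processed'' refers to the completion of this call.

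For $neg\_post(u)$, the hypothesis $post(v)$ states exactly that all postconditions of $\directtraffic{}(v)$ hold when $v$ is processed. Since $neg\_post(u)$ asks for precisely these postconditions for every such $v$, it follows immediately, once we have checked that processing $v$ coincides with the end of the call made from the first loop.

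For $get\_up(u)$, the hypothesis $send\_up(v)$ says that, because $d_I(v)<0$, $parent(v)=u$ receives $|d_I(v)|$ new items $t+1$ with $t\in l(v)$ when $v$ is processed. This is literally the clause of $get\_up(u)$ for $v$. The only remaining point is that the items added to $l(u)$ by different children are attributed to the correct child. This holds because the calls in the first loop occur sequentially: each $\directtraffic{}(v)$ appends to $l(u)$ only via Line \ref{line:up-t+1}, executed inside $\sendagent(v,\cdot)$. Its descendants append only to lists inside $T_v$, never to $l(u)$.

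The main obstacle is this identification step: arguing that $d(v)$ still equals $d_I(v)$ when the first loop reaches $v$, so that the initial-demand statement in the lemma matches the current-demand test in the algorithm. I would handle it by a short inspection of every line that modifies $d$. Only Lines \ref{line: inc-demand} and \ref{line: red-demand} do so, and each requires either $v$ or $u$ to be sending agents, which is not yet the case. Once this is established, the rest is a direct restatement of the hypotheses.
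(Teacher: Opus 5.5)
Your proposal is correct and is essentially the paper's argument, which just says the lemma ``follows immediately from definitions'': $neg\_post(u)$ and $get\_up(u)$ are literally $post(v)$ and $send\_up(v)$ (with $parent(v)=u$), quantified over the children $v$ of $u$ with $d_I(v)<0$. Your extra check that $d(v)=d_I(v)$ when the first loop reaches $v$ is sound but not needed for this restatement (the paper makes that observation in Lemma~\ref{lem:neg-precond}), and ``processed'' in the paper means the marking on Line~\ref{line:mark-processed} rather than the end of the call, which changes nothing here.
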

\begin{proof}
    Follows immediately from definitions.
\end{proof}
\begin{lemma} \label{lem:neg-precond}
    If $d_I(u) \leq 0$, then $pre(u)$ is true.
\end{lemma}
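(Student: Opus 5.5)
We need to show that if $d_I(u)\le 0$, then all five preconditions of \directtraffic{} hold at the moment \directtraffic{} is called on $u$. The plan is to split on how the call arises: either $u=r$ (the top-level call), or $u$ is reached from its parent $p$. In the latter case, since $d_I(u)\le 0$, either $d_I(u)<0$, so $u$ is processed in the negative-children loop (Lines \ref{line:neg-child-start}--\ref{line:neg-child-end}) of $\directtraffic(p)$, or $d_I(u)=0$, so $u$ is processed in the final loop (Lines \ref{line:pos-child-start}--\ref{line:pos-child-end}). A basic observation used throughout is that $d(u)$, $l(u)$ and $s(u)$ for $u\neq r$ can only be modified by \sendagent{} called on $u$ or on $p$, or by Line \ref{line: s'} in $\directtraffic(p)$. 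Hence, before $u$ itself is processed, $d(u)$ changes only when $p$ sends an agent down to $u$, which requires $d(u)>0$. Consequently $d(u)=d_I(u)$ at the call, and \cref{cond: init-demand} follows.

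For \cref{cond: lu-empty}: if $d_I(u)<0$, then nothing is ever appended to $l(u)$ from above, because Line \ref{line: down-t+1} requires $d(u)>0$. Nothing is appended from below either, since the children of $u$ are processed only inside $\directtraffic(u)$. So $l(u)$ is empty. The same reasoning shows that if $d_I(u)=0$, then $l(u)$ is still empty and $s(u)$ is never assigned, since Line \ref{line: s'} is applied only to negative children. This gives \cref{cond: su-0}, and in either case \cref{cond: lu-sorted} holds vacuously for the empty list. For the root, $d(r)=0$ by definition, $l(r)$ starts empty, and $s(r)=0$, so every condition holds trivially.

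The step I expect to be the main obstacle is \cref{cond: su-max-l} for a negative child $u$ of $p$. Here $s(u)$ is set at Line \ref{line: s'} to $\max(s(p)-1,\max l(p),0)$, which immediately gives $s(u)+1\ge s(p)$ and $s(u)\ge \max l(p)$ at that instant. The difficulty is that \cref{cond: su-max-l} is needed at the call on the very next line, while $l(p)$ could a priori have grown in between. I would argue that no statement lies between Line \ref{line: s'} and the call. Moreover, $s(p)$ is never changed during $\directtraffic(p)$, and $l(p)$ grows only when children of $p$ send agents up, which happens inside the recursive calls, after the snapshot is taken for $u$. So the inequality holds at the moment of the call, and this is all \cref{cond: su-max-l} requires. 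Finally, I would take care that the for-loop over children with $d(v)<0$ is read as iterating over children with $d_I(v)<0$, since $d(v)$ of an earlier sibling becomes $0$ after it is processed. This matches how $neg\_post$ and $get\_up$ are phrased in terms of $d_I$.
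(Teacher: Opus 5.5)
Your proof is correct and takes essentially the same route as the paper. The paper also splits on $d_I(u)<0$, where $u$ is called from the negative-children loop of its parent, $l(u)$ is still empty, and line \ref{line: s'} immediately gives precondition \ref{cond: su-max-l}, versus $d_I(u)=0$, where no agent is ever sent to $u$ and $s(u)$ is never reassigned, with the remaining preconditions vacuous; you are more careful in treating the root separately and checking that nothing happens between line \ref{line: s'} and the call. One small slip: your ``basic observation'' leaves out that $l(u)$ is also appended to by \sendagent{} on children of $u$, but your next paragraph handles this correctly.
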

\begin{proof}

    Case $d_I(u) < 0$: We call \directtraffic{} on $u$ on line \ref{line:neg-child-end} of the call to $\directtraffic{}(parent(u))$.
    At this stage, $d(u) = d_I(u) < 0$, so precondition \ref{cond: init-demand} is satisfied. We have not yet processed any agents going through $parent(u)$, nor any agents in $T_u$, so $l(u)$ is empty, satisfying \ref{cond: lu-empty}. On line \ref{line: s'} of $\directtraffic{}(parent(u))$, we ensure $s(u) \ge \max l(parent(u))$ and $s(u) + 1 \ge s(parent(u))$, satisfying precondition \ref{cond: su-max-l}. The other preconditions are vacuously true.

     Case $d_I(u) = 0$: As every move must decrease the overall sum of demands, we know no agents have been moved to this node when we begin processing. That is, $d(u) = d_I(u) = 0$, satisfying precondition \ref{cond: init-demand}, and $l(u)$ is empty, satisfying precondition \ref{cond: lu-sorted}. As $d(u)$ is not negative, $s(u)$ is not updated on line \ref{line: s'} of $\directtraffic{}(parent(u))$; therefore $s(u) = 0$, satisfying precondition \ref{cond: su-0}. Preconditions \ref{cond: lu-empty} and \ref{cond: su-max-l} are vacuously true.

     As all preconditions are satisfied, $pre(u)$ is true.
\end{proof}

\begin{lemma} \label{lem: process-neg}
If $pre(u)$ and $neg\_post(u)$ are satisfied, then $\directtraffic{}(u)$ meets postconditions \ref{cond: post-sorted}, \ref{cond: post-min}, and, for $v$ child of $u$ with $d_I(v) < 0$, \ref{cond: post-ch-demand}.
\end{lemma}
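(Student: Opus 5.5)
We need to show that, given $pre(u)$ and $neg\_post(u)$, the call $\directtraffic{}(u)$ guarantees three things: $l(u)$ ends sorted with no duplicates (postcondition \ref{cond: post-sorted}), $s(u)\le \min l(u)$ (postcondition \ref{cond: post-min}), and every negative child ends with zero demand (postcondition \ref{cond: post-ch-demand}). The plan is to follow the execution of $\directtraffic{}(u)$ line by line, tracking how $l(u)$ changes. The three places it can change are line~\ref{line:start-agent}, the recursive calls on negative-demand children (lines \ref{line:neg-child-start}--\ref{line:neg-child-end}), and nothing thereafter. Sending agents from $u$ appends only to lists of the parent or children, never to $l(u)$.

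\textbf{Step 1: postcondition \ref{cond: post-ch-demand} for negative children.} This is immediate from $neg\_post(u)$: each child $v$ with $d_I(v)<0$ satisfies postcondition~\ref{cond: post-demand} of its own call, i.e.\ $d(v)=0$. Afterwards nothing in $\directtraffic{}(u)$ changes $d(v)$. Indeed, $\sendagent(u,\cdot)$ only decrements demands of children with $d>0$, and $d(v)=0$ is unchanged by later calls on the other children.

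\textbf{Step 2: sortedness with no duplicates.} Argue by induction over the sequence of insertions into $l(u)$.

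First, line~\ref{line:start-agent}. By precondition~\ref{cond: lu-empty} and precondition~\ref{cond: su-0}, either $d(u)<0$ with $l(u)$ empty, or $d(u)=0$ with $s(u)=0$. In the latter case $l(u)$ is empty too: since $d(u)=0$, no agents have arrived, as in \cref{lem:neg-precond}. So inserting $s(u)$ at the front keeps $l(u)$ sorted.

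Next, processing the $j$-th negative child $v$. Line~\ref{line: s'} sets $s(v)\ge \max l(u)$, and the postcondition \ref{cond: post-min} of $\directtraffic{}(v)$ (from $neg\_post(u)$) gives $\min l(v)\ge s(v)$. Every item that $v$ appends to $l(u)$ has the form $t+1$ with $t\in l(v)$; this is where I would cite \cref{lem: get-up-neg-post} and $send\_up$. Each such item is therefore at least $\max l(u)+1$, so it exceeds all items already present. Because $\sendagent(v,t)$ iterates over $l(v)$ in increasing order and $l(v)$ has no duplicates, these appends are strictly increasing. Hence $l(u)$ stays sorted and duplicate-free.

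One subtlety needs checking: the items $v$ appends are values taken from $l(v)$ at the time they are sent. $l(v)$ itself receives items from $v$'s own negative children during the call, and sorting is maintained there by the inductive postcondition. This is handled by $neg\_post$.

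\textbf{Step 3: postcondition \ref{cond: post-min}.} There are two cases.

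If $agent(u)=1$, then $s(u)$ is inserted first and is the minimum.

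Otherwise, I claim every item of $l(u)$ is at least $s(u)$. Items inherited before the call satisfy this trivially when $l(u)$ is empty or $s(u)=0$. If $d_I(u)>0$, a subtle case arises. $s(u)$ was set by the parent's line~\ref{line: s'}? No — for positive-demand nodes $s(u)$ stays $0$ according to the algorithm. So I would double-check that only negative nodes ever get nonzero $s$, making the bound trivial.

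For a negative $u$, items come from children $v$. For these, $s(v)\ge s(u)-1$ by line~\ref{line: s'}, and each appended item is $t+1\ge s(v)+1\ge s(u)$.

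\textbf{Main obstacle.} The hardest part will be Step 3 in the case where $u$ has $d_I(u)>0$ with a nonzero $s(u)$. The running description suggests $s(u)$ may be updated when processing the parent, yet Algorithm~\ref{alg:process-st} only sets $s$ for negative children. I would first try to resolve this by reading precondition~\ref{cond: su-0} together with line~\ref{line: s'}, to conclude $s(u)=0$ whenever $d(u)\ge 0$ at call time.
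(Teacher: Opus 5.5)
\textbf{There is a gap in Step 2.} Your overall structure matches the paper's proof. You track $l(u)$ through line~\ref{line:start-agent} and the negative-child loop. You use $\max l(u)\le s(v)\le \min l(v)$ (from line~\ref{line: s'} and postcondition~\ref{cond: post-min} of the child) to show each appended $t+1$ exceeds everything already in $l(u)$. You use $s(u)\le s(v)+1\le \min l(v)+1$ for postcondition~\ref{cond: post-min}.

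The false step is your claim that if $d(u)=0$ at call time then $l(u)$ is empty, \emph{since no agents have arrived, as in} Lemma~\ref{lem:neg-precond}. That reasoning only holds when $d_I(u)=0$. A node with $d_I(u)>0$ is called on line~\ref{line:pos-child-end} of its parent's call \emph{after} the parent's send loop has delivered $d_I(u)$ agents to it. At that moment $d(u)=0$, but $l(u)$ already holds $d_I(u)$ entries; this is exactly $get\_down(u)$.

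This is not a corner case you can drop. Lemma~\ref{lem: ABCDEF}, and through it this lemma, is applied to precisely these nodes in Lemma~\ref{lem:pos-postcond}. Your Step 3 even mentions \emph{items inherited before the call} when $d_I(u)>0$, which contradicts the emptiness claim in Step 2.

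\textbf{The fix} uses a hypothesis you never invoked. Precondition~\ref{cond: lu-sorted}, part of $pre(u)$, says $l(u)$ is sorted with no duplicates and $\min l(u)\ge 1$. Combined with $s(u)=0$ from precondition~\ref{cond: su-0}, this gives two things:
\begin{itemize}
\item prepending $s(u)$ on line~\ref{line:start-agent} keeps $l(u)$ sorted and duplicate-free;
\item $s(u)=0\le \min l(u)$ holds for the inherited entries, so Step 3 is also covered.
\end{itemize}
Your child-append argument goes through unchanged, because line~\ref{line: s'} sets $s(v)$ using $\max l(u)$, which already includes any inherited entries.

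\textbf{The ``main obstacle'' is not one.} Precondition~\ref{cond: su-0} is part of the hypothesis $pre(u)$, so $s(u)=0$ whenever $d(u)=0$ at call time. Precondition~\ref{cond: init-demand} rules out $d(u)>0$. There is no need to re-derive this from the algorithm, or to worry about the prose remark that $s(u)$ is updated for positive nodes.
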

\begin{proof}
     Suppose $pre(u)$ and $neg\_post(u)$ are true. After line \ref{line:start-agent} of $\directtraffic{}(u)$, $l(u)$ may or may not contain $s(u)$.
    
    Case $d(u) < 0$: $l(u)$ was originally empty by precondition \ref{cond: lu-empty}, so it is now sorted with no duplicates and $s(u) \leq \min l(u)$. 
    
    Case $d(u) = 0$: $s(u) = 0$ by precondition \ref{cond: su-0} and all entries in $l(u)$ are greater than 0 by \ref{cond: lu-sorted}, so inserting $s(u)$ at the start of $l(u)$ maintains that $l(u)$ is sorted with no duplicates. Cases end here.
    
    Now we iterate over the children $v$ of $u$ with $d(v) < 0$. By $neg\_post(u)$ and lemma \ref{lem:neg-precond}, $\directtraffic{}(v)$ satisfies all pre and post conditions.
    
    As $\directtraffic{}(v)$ meets precondition \ref{cond: su-max-l}, $\max l(u) \leq s(v)$; then with postcondition \ref{cond: post-min}, we have $\max l(u) \leq s(v) \leq \min l(v)$. 
    We know $l(v)$ is sorted with no duplicates (by postcondition \ref{cond: post-sorted}), and $t + 1 > \max l(u)$ for each $t \in l(v)$ (since $\max l(u) \leq \min l(v)$). Thus, appending $t+1$ to $l(u)$ for $t$ in $l(v)$, as is done in $\sendagent{}(v, t)$, maintains that $l(u)$ is sorted with no duplicates.
    
    This is true on each iteration of line \ref{line:neg-child-start} of $\directtraffic{}(u)$, so when $l(u)$ is finalised, we have satisfied postcondition \ref{cond: post-sorted}. Additionally, again using \ref{cond: su-max-l} and \ref{cond: post-min} for $\directtraffic{}(v)$, we have $s(u) \leq s(v) + 1 \leq \min l(v) + 1$. Therefore, after appending $t+1$ to $l(u)$ for $t \in l(v)$, we maintain $s(u) \leq \min l(u)$ and satisfy postcondition \ref{cond: post-min}. As $\directtraffic{}(v)$ meets postcondition \ref{cond: post-demand}, each child $v$ of $u$ which has been processed now has $d(v) = 0$.
\end{proof}

\begin{lemma} \label{lem: send-up-neg-demand}
    If $d_I(u) \leq 0$ and $get\_up(u)$ is true, then $send\_up(u)$ is true, postcondition \ref{cond: post-demand} of $\directtraffic{}(u)$ is satisfied, and there are
    \(
    target(u) + \sum_{\substack{w \text{ child of } u \\ d_I(w) > 0}} |d_I(w)| 
    \)
    agents that are not sent upwards.
\end{lemma}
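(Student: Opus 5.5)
The plan is a counting argument on $|l(u)|$ at the moment the loop on lines \ref{line:self-start}--\ref{line:self-end} of $\directtraffic{}(u)$ starts, followed by tracking how each $\sendagent{}(u,t)$ call spends that list.

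\textbf{Counting the entries of $l(u)$.} Since $d_I(u)\le 0$, nothing is appended to $l(u)$ from above before the call. I will argue this from the direction constraints of \cref{pre:move_v_child}: the parent only sends agents down to children of positive demand. Hence, before the children with $d_I(v)<0$ are processed, $l(u)$ contains only $s(u)$, and only if $\pebble{u}=1$. By $get\_up(u)$, processing each such child appends exactly $|d_I(v)|$ entries. So the list to be traversed has size
\[
N = \pebble{u} + \sum_{\substack{v \text{ child of } u\\ d_I(v)<0}} |d_I(v)|.
\]
The recursive definition \eqref{eq:d_rec_def} at the start gives
\[
d_I(u) = \target{u} - \pebble{u} + \sum_{d_I(w)>0} d_I(w) - \sum_{d_I(v)<0}|d_I(v)|,
\]
and therefore
\[
N = \target{u} + \sum_{d_I(w)>0}|d_I(w)| + |d_I(u)|.
\]

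\textbf{Tracking the $\sendagent{}$ calls.} Each call $\sendagent{}(u,t)$ first checks whether $d(u)<0$. If so, it sends the agent up and increments $d(u)$. At this point $d(u)=d_I(u)$, because no move across the edge $u$--$parent(u)$ has yet involved $u$. So the first $|d_I(u)|$ calls send agents upward and append $t+1$ to $l(parent(u))$. After these, $d(u)=0$, so postcondition \ref{cond: post-demand} holds, and since these entries are taken from $l(u)$ we obtain $send\_up(u)$. After that, no further upward moves occur. The remaining $N-|d_I(u)|=\target{u}+\sum_{d_I(w)>0}|d_I(w)|$ entries are exactly the agents not sent upwards, which is the claimed count.

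\textbf{Main obstacle.} The delicate step is showing that the positive-demand children do not already have their demand satisfied by earlier activity. This is what justifies using $d_I(w)$ rather than the current $d(w)$, and what guarantees that the non-upward agents go either to such children or, for the last one when $\target{u}=1$, stay at $u$. I would handle this by noting that a positive-demand child $w$ of $u$ can only receive agents from $u$ itself, and $u$ has not yet been processed. Hence $d(w)=d_I(w)$ throughout the loop, and the count of downward sends is bounded exactly by $\sum d_I(w)$ together with one stay at $u$ when $\target{u}=1$.
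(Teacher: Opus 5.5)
Your proposal is correct and follows the paper's proof essentially step for step. Both count $|l(u)|$ at line \ref{line:self-start} as $agent(u)$ plus the $|d_I(v)|$ entries supplied by $get\_up(u)$, rewrite this via the recursive definition of $d_I(u)$ as $|d_I(u)| + target(u) + \sum_{d_I(w)>0}|d_I(w)|$, and observe that the first $|d_I(u)|$ calls to \sendagent{} go upward and bring $d(u)$ to $0$. The facts you justify explicitly (nothing reaches $l(u)$ from the parent, and $d(u)=d_I(u)$ when the loop starts) are left implicit in the paper. Two small corrections: the downward-direction rule comes from line \ref{line: pick-child} of \sendagent{}, not from the UPMT precondition \cref{pre:move_v_child} you cite; and in your ``main obstacle'' paragraph, $d(w)=d_I(w)$ holds only at the start of the loop, not throughout. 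That paragraph is in any case not needed here, since where the remaining agents go is the content of Lemma \ref{lem: ABCDEF}, not of this counting statement.
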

\begin{proof}
    Suppose $d_I(u) \leq 0$ and $get\_up(u)$ is true. 
    Then, when executing line \ref{line:self-start} of $\directtraffic{}(u)$, for each child $v$ of $u$ with $d_I(v) < 0$, we have appended $|d_I(v)|$ values to $l(u)$, as well as potentially including $s(u)$ if $agent(u) = 1$. Therefore,
    \[|l(u)| = agent(u) + \sum_{\substack{v \text{ child of } u \\ d_I(v) < 0}} |d_I(v)|,\]
    so that
    \[
    |l(u)| = |d_I(u)| + target(u) + \sum_{\substack{w \text{ child of } u \\ d_I(w) > 0}} |d_I(w)| 
    \]
    by definition of $d_I(u)$. This gives the number of times \sendagent{} is called on line \ref{line:self-end}. 
    
    We can see that in the first $|d_I(u)|$ calls to \sendagent{}, an agent is sent up to $parent(u)$. That is, $t+1$ is appended to $l(parent(u))$ for $|d(u)|$ values of $t$ in $l(u)$. So $send\_up(u)$ is true. Each of these times, we increment $d(u)$ on line \ref{line: inc-demand}, so that after these moves we have $d(u) = 0$, satisfying postcondition \ref{cond: post-demand}. 

    Finally, as $|d_I(u)|$ of $|l(u)|$ agents have been sent upwards, there are $target(u) + \sum_{\substack{w \text{ child of } u \\ d_I(w) > 0}} |d_I(w)|$ agents \emph{not} sent upwards.
\end{proof}

\begin{lemma} \label{lem: send-down-pos-demand}
If $d_I(u) > 0$ and $get\_up(u)$ and $get\_down(u)$ are true, then there are
    \(
    target(u) + \sum_{\substack{w \text{ child of } u \\ d_I(w) > 0}} |d_I(w)| 
    \)
    agents that are not sent upwards.
\end{lemma}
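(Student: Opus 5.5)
The plan is to mirror the proof of \cref{lem: send-up-neg-demand}: count the entries of $l(u)$ at the moment line \ref{line:self-start} of $\directtraffic{}(u)$ is reached, and then observe that when $d_I(u)>0$ no agent is ever sent upwards from $u$. Every agent whose time is recorded in $l(u)$ is therefore either left at $u$ or sent down to a child.

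First, I fix the state of $l(u)$ when $\directtraffic{}(u)$ is called. Since $d_I(u)>0$, $u$ is a child of $parent(u)$ reached on line \ref{line:pos-child-end}. By $get\_down(u)$, at that moment $|l(u)| = d_I(u)$. These are the agents that arrived from $parent(u)$.

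Next, I add the agents that line \ref{line:start-agent} and the loop over negative children contribute:
\begin{itemize}
\item line \ref{line:start-agent} inserts one entry if $agent(u)=1$;
\item for each child $v$ of $u$ with $d_I(v)<0$, $get\_up(u)$ guarantees exactly $|d_I(v)|$ new entries appended to $l(u)$.
\end{itemize}
I need to check that these lists really have these sizes with no entries lost. Inserting at the start and appending are both plain list operations, so the counts simply add. Hence, when line \ref{line:self-start} is reached,
\[
|l(u)| = d_I(u) + agent(u) + \sum_{\substack{v \text{ child of } u \\ d_I(v) < 0}} |d_I(v)|.
\]

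Now I rewrite this with the recursive definition \eqref{eq:d_rec_def} at the initial configuration, where $pebble = agent$:
\[
d_I(u) = target(u) - agent(u) + \sum_{\substack{w \text{ child of } u \\ d_I(w) > 0}} |d_I(w)| - \sum_{\substack{v \text{ child of } u \\ d_I(v) < 0}} |d_I(v)|.
\]
Substituting gives
\[
|l(u)| = target(u) + \sum_{\substack{w \text{ child of } u \\ d_I(w) > 0}} |d_I(w)|.
\]

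Finally, I argue that none of these agents is sent upwards. In \sendagent{}, an upward move happens only when $d(u)<0$. The demand $d(u)$ starts at $d_I(u)>0$. It is only decreased, by line \ref{line: red-demand} in the parent's calls, and only while it stays positive, since the precondition $d(v)>0$ is checked before picking $v$. It is never changed by $u$'s own calls. So $d(u)\ge 0$ throughout, the first branch of \sendagent{}$(u,\cdot)$ never fires, and all $|l(u)|$ agents are not sent upwards.

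The main obstacle is the last step, together with justifying that $get\_down(u)$ counts exactly the agents arriving from the parent. Concretely, I must confirm that all $d_I(u)$ downward sends into $u$ from $parent(u)$ complete before $\directtraffic{}(u)$ is invoked. This holds because $parent(u)$ processes its whole list $l(parent(u))$ on lines \ref{line:self-start}--\ref{line:self-end} before the loop on line \ref{line:pos-child-start}.
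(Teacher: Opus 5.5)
Your proposal is correct and follows the paper's proof essentially step for step. You count $|l(u)|$ at line~\ref{line:self-start} using $get\_down(u)$, the initial agent and $get\_up(u)$, rewrite the count with the recursive definition of $d_I(u)$, and observe that the upward branch of \sendagent{} never fires. Your justification of that last step via $d(u)\ge 0$ is in fact more accurate than the paper's remark ``as $d(u)>0$'', since $parent(u)$ has already reduced $d(u)$ to $0$ by the time $u$ is processed.
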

\begin{proof}
    Suppose $d_I(u) > 0$ and $get\_down(u)$ is true.
    By $get\_down(u)$, $|l(u)|$ is initially equal to $|d(u)|$. After processing $agent(u)$ and each child $v$ of $u$ with negative demand, by $get\_up(u)$ we have
    \[|l(u)| = |d(u)| + agent(u) + \sum_{\substack{v \text{ child of } u \\ d(v) < 0}} |d(v)|,\]
    so that
    \[
    |l(u)| = target(u) + \sum_{\substack{w \text{ child of } u \\ d(w) > 0}} |d(w)|
    \]
    by definition of $d_I(u)$. As $d(u) > 0$, no agents are sent upwards (see $\sendagent{}(u, t)$), so indeed we have \(
    target(u) + \sum_{\substack{w \text{ child of } u \\ d_I(w) > 0}} |d_I(w)| 
    \)
    agents that are not sent upwards.
\end{proof}

\begin{lemma} \label{lem: ABCDEF}
    If $pre(u)$, $neg\_post(u)$, $get\_up(u)$ and $get\_down(u)$ are true, then $post(u)$, $send\_up(u)$, $send\_down(u)$ and $leave(u)$ are true.
\end{lemma}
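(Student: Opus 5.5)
We assemble the conclusion from the preceding lemmas, splitting on the sign of $d_I(u)$. Lemma~\ref{lem: process-neg} applies directly under $pre(u)$ and $neg\_post(u)$. It gives postconditions~\ref{cond: post-sorted} and~\ref{cond: post-min}, and postcondition~\ref{cond: post-ch-demand} for the children with $d_I(v)<0$.

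\emph{Case $d_I(u)\le 0$.} Lemma~\ref{lem: send-up-neg-demand} gives $send\_up(u)$ and postcondition~\ref{cond: post-demand}. \emph{Case $d_I(u)>0$.} Here $send\_up(u)$ is vacuous. For postcondition~\ref{cond: post-demand}, we observe that every entry of $l(u)$ present on entry was produced by a move from $parent(u)$ into $u$, and each such move decremented $d(u)$ on line~\ref{line: red-demand}. By $get\_down(u)$ there are exactly $d_I(u)$ such entries, so $d(u)=0$ at the call. Since $\sendagent(u,\cdot)$ never sends upward when $d(u)\ge 0$, this value is preserved.

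In both cases, Lemmas~\ref{lem: send-up-neg-demand} and~\ref{lem: send-down-pos-demand} say that after the upward moves, exactly $target(u)+\sum_{w:\,d_I(w)>0} d_I(w)$ entries of $l(u)$ remain. The corresponding calls $\sendagent(u,t)$ occur in increasing order of $t$, because $l(u)$ is sorted with no duplicates by postcondition~\ref{cond: post-sorted}. Each call either finds a child $w$ with $d(w)>0$, appends $t+1$ to $l(w)$ and decrements $d(w)$, or finds none. Counting, the first $\sum_w d_I(w)$ calls all find such a child, since the total remaining positive child demand is positive until then. This drives every positive child to $d(w)=0$, which completes postcondition~\ref{cond: post-ch-demand} and, as stated in the algorithm description, requires that the children with $d_I(w)>0$ are not processed before $u$. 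These calls also append to each $l(w)$ exactly $d_I(w)$ values $t+1$ with $t\in l(u)$, in ascending order and with no duplicates, since the $t$'s are increasing; this is $send\_down(u)$. If $target(u)=1$, exactly one entry is left over, namely the last, $\max l(u)$, for which no positive child remains. Hence that agent stays at $u$, while every other $t\in l(u)$ produces a move out of $u$ at time $t$, which gives $leave(u)$. If $target(u)=0$, every entry produces a departing move. Together with the previous paragraph this establishes $post(u)$.

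The main obstacle is making precise the claim that in the $d_I(u)>0$ case the entries already in $l(u)$ account exactly for the demand $d_I(u)$, and that inserting $s(u)$ (when $agent(u)=1$) does not break sortedness. The sortedness part is exactly what Lemma~\ref{lem: process-neg} gives. For the matching between entries and demand, I would argue by bookkeeping: each append to $l(u)$ from the parent coincides with one decrement of $d(u)$, so $d(u)=d_I(u)-|l(u)|$ before line~\ref{line:start-agent}, and $get\_down(u)$ then gives $d(u)=0$. A secondary subtlety is confirming that the moves along each edge go in only one direction. This holds because $\sendagent$ selects targets by the sign of the current demand, as in \move{}.
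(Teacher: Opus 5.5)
Your proof is correct and follows essentially the same route as the paper's. It uses Lemma~\ref{lem: process-neg} for the sortedness and minimum postconditions and for the negative-demand children, and a case split on the sign of $d_I(u)$ for $send\_up(u)$ and postcondition~\ref{cond: post-demand}. It then uses the counts from Lemmas~\ref{lem: send-up-neg-demand} and~\ref{lem: send-down-pos-demand} to drive every positive child to zero demand and leave exactly $target(u)$ agents at $u$, and the sortedness of $l(u)$ to obtain $send\_down(u)$. Your case split is the correctly oriented one, since the paper's text swaps the inequalities and invokes Lemma~\ref{lem: send-up-neg-demand} for $d_I(u)>0$. Your bookkeeping $d(u)=d_I(u)-|l(u)|=0$ via $get\_down(u)$ also makes explicit a step the paper calls trivial.
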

\begin{proof}
    Suppose $pre(u)$, $neg\_post(u)$, $get\_up(u)$ and $get\_down(u)$ are true. 

    Note that, if $d_I(u) \leq 0$, then $d(u) = 0$ when $\directtraffic{}(u)$ is called, so that postcondition \ref{cond: post-demand} is trivially satisfied. Additionally, $send\_up(u)$ is vacuously true. Lemma \ref{lem: send-up-neg-demand} handles the case $d_I(u) > 0$, so $send\_up(u)$ and postcondition \ref{cond: post-demand} are always satisfied.

    By lemmas \ref{lem: send-up-neg-demand} and \ref{lem: send-down-pos-demand}, at any node $u$ there are $
    target(u) + \sum_{\substack{w \text{ child of } u \\ d_I(w) > 0}} |d_I(w)| 
    $
    agents passing through $u$ that are not sent upwards. This results in sending $|d(w)|$ agents down into each child $w$ of $u$ with $d(w) > 0$, decrementing $d(w)$ on line \ref{line: red-demand} of $\sendagent{}(u, t)$ each time $w$ receives an agent. This leaves exactly $target(u)$ agents at node $u$, satisfying $leave(u)$. 
    
    Consequently, all children of $w$ of $u$ with $d_I(w) > 0$ have zero demand after $u$ is processed, satisfying half of postcondition \ref{cond: post-ch-demand}. Incorporating lemma \ref{lem: process-neg}, postcondition \ref{cond: post-ch-demand} is fully satisfied, as are postconditions \ref{cond: post-sorted} and \ref{cond: post-min}.
    Therefore, all postconditions are satisfied after \directtraffic{} is called on $u$, so $post(u)$ is true.

    As postcondition \ref{cond: post-sorted} is satisfied, by line \ref{line:self-start} of $\directtraffic{}(u)$, $l(u)$ is sorted with no duplicate values. When processing agents in lines \ref{line:self-start} - \ref{line:self-end}, the times $t \in l(u)$ are processed in increasing order. Consequently, for each $w$ child of $u$ with $d(w) > 0$, we will append $|d_I(w)|$ times to $l(w)$ in increasing order with no duplicates. Therefore $send\_down(u)$ is true.
\end{proof}

\begin{lemma} \label{lem:neg-postcond}
    The statements $post(u)$, $send\_up(u)$, $send\_down(u)$ and $leave(u)$ are true for all nodes $u$ with $d_I(u) < 0$.
\end{lemma}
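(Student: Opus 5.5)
We argue by strong induction on the height of the subtree $T_u$, restricting attention to nodes with $d_I(u) < 0$. The induction is well founded within this class for a simple reason. If $d_I(u) < 0$, then $u$ receives nothing from its parent, so $get\_down(u)$ is vacuous. The only inputs to $u$ therefore come from its children $v$ with $d_I(v) < 0$, and these are again nodes of the class we are treating, with strictly smaller height.

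\textbf{Base case.} Let $u$ be a leaf with $d_I(u) < 0$. It has no children, so $neg\_post(u)$ and $get\_up(u)$ hold vacuously. By \cref{lem:neg-precond}, $pre(u)$ holds because $d_I(u) \leq 0$. Since $d_I(u) \le 0$, the hypothesis $get\_down(u)$ (``if $d_I(u)>0$ then \dots'') is vacuously true. \Cref{lem: ABCDEF} then yields $post(u)$, $send\_up(u)$, $send\_down(u)$ and $leave(u)$.

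\textbf{Inductive step.} Let $u$ satisfy $d_I(u) < 0$, and assume the statement for every node of smaller height with negative initial demand. In particular, every child $v$ of $u$ with $d_I(v) < 0$ satisfies $post(v)$ and $send\_up(v)$. By \cref{lem: get-up-neg-post}, this gives $neg\_post(u)$ and $get\_up(u)$. As before, \cref{lem:neg-precond} gives $pre(u)$ and $get\_down(u)$ is vacuous. Applying \cref{lem: ABCDEF} concludes the step.

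\textbf{Main subtlety.} The delicate point is the order of execution. The hypotheses for the children $v$ must be established at the moment $\directtraffic(v)$ is called, namely on line \ref{line:neg-child-end} of $\directtraffic(u)$. We must also check that the children's conclusions are in force when $u$ is processed. This is consistent with the recursion: all negative-demand children are processed before lines \ref{line:self-start}--\ref{line:self-end} at $u$. Moreover, \cref{lem:neg-precond} for $v$ depends only on $s(v)$ being set on line \ref{line: s'}, which precedes the recursive call. A second point is that \cref{lem: ABCDEF} is stated without a sign condition on $d_I(u)$, whereas the present lemma is restricted to negative demand. This restriction is harmless, since it is exactly what makes $get\_down(u)$ vacuous. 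I expect nodes with $d_I(u) \ge 0$ to be handled afterwards by a separate top-down argument.
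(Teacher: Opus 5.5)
Your proof is correct and is essentially the paper's argument. You induct over nodes with $d_I(u)<0$, obtain $pre(u)$ from Lemma~\ref{lem:neg-precond}, obtain $neg\_post(u)$ and $get\_up(u)$ from the negative-demand children via Lemma~\ref{lem: get-up-neg-post}, note that $get\_down(u)$ is vacuous, and conclude with Lemma~\ref{lem: ABCDEF}. The only difference is cosmetic: you induct on subtree height with leaves as the base case, whereas the paper's base case is nodes with no negative-demand children, which your strong induction covers in the inductive step anyway.
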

\begin{proof}
    By induction. Assume $d_I(u) < 0$.
    
    Base case: Suppose $u$ has no children with negative demand. By lemma \ref{lem:neg-precond}, all preconditions are satisfied, so $pre(u)$ is true. As there are no children with negative demand, $neg\_post(u)$ and $get\_up(u)$ are vacuously true. As $d(u) < 0$, $get\_down(u)$ is vacuously true also. Applying lemma \ref{lem: ABCDEF} shows $post(u)$, $send\_up(u)$, $send\_down(u)$ and $leave(u)$ are true.

    Induction: Suppose that for each $v$ child of $u$ with $d(v) < 0$, $post(v)$, $send\_up(v)$, $send\_down(v)$ and $leave(v)$ are true.

    We have $pre(u)$ by lemma \ref{lem:neg-precond}. By lemma \ref{lem: get-up-neg-post}, $neg\_post(u)$ and $get\_up(u)$ are true.
    
     Again, $get\_down(u)$ is vacuously true. Then, by lemma \ref{lem: ABCDEF}, $post(u)$, $send\_up(u)$, $send\_down(u)$ and $leave(u)$ are true.
\end{proof}
\begin{corollary} \label{cor: BC-allu}
    The statements $neg\_post(u)$ and $get\_up(u)$ are true for all nodes $u$.
\end{corollary}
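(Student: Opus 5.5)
The corollary follows by combining \cref{lem:neg-postcond} with \cref{lem: get-up-neg-post}; no new induction is needed. Fix an arbitrary node $u$ and let $C^-(u)$ denote the set of children $v$ of $u$ with $d_I(v) < 0$. Both $neg\_post(u)$ and $get\_up(u)$ are statements quantified only over $C^-(u)$, so it suffices to verify the hypotheses of \cref{lem: get-up-neg-post} for exactly these children.

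First, suppose $C^-(u)$ is empty. This covers leaves, and also any node all of whose children have nonnegative initial demand. In this case both predicates hold vacuously, as do the hypotheses of \cref{lem: get-up-neg-post}.

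Otherwise, take any $v \in C^-(u)$. Since $d_I(v) < 0$, \cref{lem:neg-postcond} applies directly to $v$ and gives $post(v)$ and $send\_up(v)$, along with $send\_down(v)$ and $leave(v)$, which we do not need. Therefore the hypothesis of \cref{lem: get-up-neg-post} holds for every $v \in C^-(u)$, and that lemma yields $neg\_post(u)$ and $get\_up(u)$. Since $u$ was arbitrary, this includes the root and nodes with $d_I(u) \geq 0$.

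The only point needing care is that \cref{lem:neg-postcond} is stated for \emph{all} nodes with negative initial demand, regardless of the sign of their parent's demand. One should check that its induction (on the subtree of negative-demand descendants) does not implicitly depend on properties of the parent beyond $pre(v)$, which \cref{lem:neg-precond} supplies using only $d_I(v) \leq 0$. With that confirmed, there is no real obstacle: the corollary just lifts a per-child property to a per-parent property.
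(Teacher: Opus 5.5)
Your proposal is correct and is essentially the paper's proof: for every child $v$ with $d_I(v)<0$, Lemma~\ref{lem:neg-postcond} gives $post(v)$ and $send\_up(v)$, and Lemma~\ref{lem: get-up-neg-post} then yields $neg\_post(u)$ and $get\_up(u)$. Your extra remarks, on the vacuous case and on $pre(v)$ being supplied by Lemma~\ref{lem:neg-precond}, are consistent with the paper and add nothing beyond what those lemmas already state.
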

\begin{proof}
    Follows from the above lemma and lemma \ref{lem: get-up-neg-post}.
\end{proof}

\begin{lemma}\label{lem:pos-postcond}
    The statements $post(u)$, $send\_up(u)$, $send\_down(u)$ and $leave(u)$ are true for all nodes $u$ with $d_I(u) \ge 0$.
\end{lemma}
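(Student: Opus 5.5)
We argue by induction on the depth of $u$ in $T$, going top-down from the root. For each $u$ with $d_I(u)\ge 0$ we establish the four hypotheses of \cref{lem: ABCDEF}, namely $pre(u)$, $neg\_post(u)$, $get\_up(u)$ and $get\_down(u)$, and then conclude from that lemma. Two of these hold for every node by \cref{cor: BC-allu}: $neg\_post(u)$ and $get\_up(u)$. So only $pre(u)$ and $get\_down(u)$ remain. The induction hypothesis is that the four statements hold for $parent(u)$, whatever the sign of $d_I(parent(u))$. Nodes with negative parent demand are covered by \cref{lem:neg-postcond}, and nodes with nonnegative parent demand are covered by the induction itself.

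\textbf{Base case: $u=r$.} We have $d_I(r)=0$. \Cref{lem:neg-precond} gives $pre(r)$, and $get\_down(r)$ is vacuous. Hence \cref{lem: ABCDEF} gives the claim.

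\textbf{Case $d_I(u)=0$, $u\neq r$.} Again \cref{lem:neg-precond} gives $pre(u)$, and $get\_down(u)$ is vacuous.

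\textbf{Case $d_I(u)>0$.} This is the main obstacle. Here $u$ is called on line \ref{line:pos-child-end} of $\directtraffic{}(parent(u))$, after $parent(u)$ has been processed.
\begin{itemize}
\item By the induction hypothesis, $send\_down(parent(u))$ holds. So $l(u)$ has received $|d_I(u)|$ new items $t+1$ with $t\in l(parent(u))$, in ascending order and without duplicates. Since $l(u)$ was empty before (no agent enters $T_u$ except through $parent(u)$), we get $|l(u)|=d_I(u)$, which is $get\_down(u)$.
\item The same fact gives precondition \ref{cond: lu-sorted}. The list $l(u)$ is sorted without duplicates. Also $\min l(u)\ge 1$, because every entry has the form $t+1$ with $t\ge 0$; entries in the $l$ lists are times, hence nonnegative, starting from $s(\cdot)\ge 0$.
\end{itemize}
The remaining preconditions must hold at call time.
\begin{itemize}
\item Precondition \ref{cond: init-demand} requires $d(u)\le 0$. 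By line \ref{line: red-demand}, each agent sent into $u$ decrements $d(u)$. \Cref{lem: ABCDEF} applied at $parent(u)$ gives that exactly $|d_I(u)|$ agents are sent down. Hence $d(u)=0$ when $\directtraffic{}(u)$ is called.
\item Preconditions \ref{cond: lu-empty} and \ref{cond: su-max-l} are then vacuous, since $d(u)=0$.
\item Precondition \ref{cond: su-0} requires $s(u)=0$. This holds because $s(u)$ is only modified on line \ref{line: s'}, and only for children with negative demand at that moment. At that time $d(u)=d_I(u)>0$, so $s(u)$ was never touched.
\end{itemize}
The delicate point is the ordering of calls. We must check that the loop on line \ref{line:pos-child-start} runs only after the loop on line \ref{line:self-start} has completed all sends into $u$, so that the demand and list counts are final. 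This follows from the line order in \cref{alg:process-st}.

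With all four hypotheses established, \cref{lem: ABCDEF} yields $post(u)$, $send\_up(u)$, $send\_down(u)$ and $leave(u)$. This completes the induction.
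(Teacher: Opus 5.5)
Your proposal is correct and follows essentially the same route as the paper. Zero-demand nodes get $pre(u)$ from Lemma~\ref{lem:neg-precond} and $neg\_post(u)$, $get\_up(u)$ from Corollary~\ref{cor: BC-allu}. Positive-demand nodes get $pre(u)$ and $get\_down(u)$ from $send\_down(parent(u))$ and the parent's postconditions, after which Lemma~\ref{lem: ABCDEF} applies. Your explicit top-down induction on depth, with a negative-demand parent covered by Lemma~\ref{lem:neg-postcond}, spells out what the paper's induction step leaves implicit.
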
 
\begin{proof}
    By induction. Assume $d_I(u) \ge 0$.

    Base case: Suppose $u$ satisfies $d_I(u) = 0$. At least one such $u$ exists, as $d_I(r)=0$, where $r$ is the root node. Then $pre(u)$ is true by lemma \ref{lem:neg-precond}, $neg\_post(u)$ and $get\_up(u)$ are true by corollary \ref{cor: BC-allu}, and $get\_down(u)$ is vacuously true. Therefore $post(u)$, $send\_up(u)$, $send\_down(u)$ and $leave(u)$ are all true, by lemma \ref{lem: ABCDEF}.

    Induction: Suppose $post(parent(u))$, $send\_up(parent(u))$, $send\_down(parent(u))$ and $leave(parent(u))$ are true. By corollary \ref{cor: BC-allu}, $neg\_post(u)$ and $get\_up(u)$ are true. 
    
    As $parent(u)$ satisfies postcondition \ref{cond: post-ch-demand}, $d(u) = 0$ when $u$ is processed, satisfying precondition \ref{cond: init-demand}. Preconditions \ref{cond: lu-empty} and \ref{cond: su-max-l} are vacuously true. We can see that $s(u)$ is not updated before calling \directtraffic{}$(u)$ on line \ref{line:pos-child-end} of $\directtraffic{}(parent(u))$, so $s(u) = 0$, satisfying precondition \ref{cond: su-0}. Finally, by $send\_down(parent(u))$, $l(u)$ is initially sorted in increasing order with no duplicates, and contains $|d(u)|$ items, satisfying $get\_down(u)$. As all times $t$ in $l(parent(u))$ are non-negative, all times $t+1$ in $l(u)$ satisfy $t+1 \ge 1$. Therefore, precondition \ref{cond: lu-sorted} is satisfied, so $pre(u)$ is true.

    By lemma \ref{lem: ABCDEF}, we conclude that $post(u)$, $send\_up(u)$, $send\_down(u)$ and $leave(u)$ are true. 
\end{proof}

\begin{theorem} \label{thm:feasible}
    Algorithm \directtraffic{} produces a feasible plan for unlabeled MAPF on trees.
\end{theorem}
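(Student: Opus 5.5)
The plan is to combine the two inductive lemmas into a statement about every node, and then check the three ingredients of MAPF feasibility. These are: every agent ends on a target; no two agents occupy the same vertex at the same timestep; and no two agents swap along an edge.

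\textbf{Step 1: all postconditions hold everywhere.} Every node satisfies either $d_I(u)<0$ or $d_I(u)\ge 0$. So \cref{lem:neg-postcond} and \cref{lem:pos-postcond} give $post(u)$, $send\_up(u)$, $send\_down(u)$ and $leave(u)$ for all $u\in V$. In particular, postconditions \ref{cond: post-demand} and \ref{cond: post-ch-demand} give $d(u)=0$ for every node once $\directtraffic{}(r)$ returns. Every call to \sendagent{} either raises a negative $d(u)$ or lowers a positive $d(v)$, exactly as \move{} does. Hence the multiset of agent positions at the end is the result of applying these moves, and \cref{thm:zero_demand} shows that every agent ends on a target. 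Equivalently, $leave(u)$ says that exactly $target(u)$ agents remain at $u$.

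\textbf{Step 2: trajectories are well defined.} Each agent starting at $u$ enters $l(u)$ at time $s(u)$ (line \ref{line:start-agent}). Each subsequent move $(v,w,t)$ appends $t+1$ to $l(w)$, and $l(w)$ is itself consumed by \sendagent{}. So each agent waits until its first departure time and then moves once per timestep until it is left at a target. Using $leave(u)$, each entry of $l(u)$ is exactly one agent's arrival at $u$, and that agent departs at the same time value. We must also check that an agent waiting at its start $u$ with $s(u)>0$ does not block anyone. Postcondition \ref{cond: post-min} gives $s(u)\le\min l(u)$, so no other agent arrives at $u$ before it leaves. Lines \ref{line: s'} and precondition \ref{cond: su-max-l} give $s(u)\ge \max l(parent(u))$, so nothing is pushed into $u$ from above while it waits.

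\textbf{Step 3: no vertex or edge conflicts.} Vertex conflicts at a pass-through node $u$ are excluded because postcondition \ref{cond: post-sorted} makes $l(u)$ free of duplicates. Every agent occupies $u$ only at a single time $t\in l(u)$, arriving at $t$ and leaving in the move $(u,\cdot,t)$. Therefore two agents are at $u$ simultaneously only if they share a time in $l(u)$. The only exception is the final agent at a target, which stays from $\max l(u)$ onward. No later arrival can occur, since $\max l(u)$ is the last entry. Edge swaps along $uv$ are impossible because moves on an edge occur in only one direction, as determined by the sign of the demand, exactly as in the UPMT algorithm.

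\textbf{Main obstacle.} The delicate point is the stationary initial waiting agents and the target-resident agents, which are not strictly represented as single times in $l(\cdot)$. For these I would argue directly from the inequalities $\max l(parent(u))\le s(u)\le\min l(u)$ and $s(u)+1\ge s(parent(u))$. The goal is to show that their occupancy interval is disjoint from all recorded passage times at $u$. I would handle this first, before the routine duplicate-free argument.
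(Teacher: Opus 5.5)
Your proposal is correct and takes essentially the paper's route. Both arguments use Lemmas~\ref{lem:neg-postcond} and~\ref{lem:pos-postcond} to get zero demand and duplicate-free lists $l(u)$ everywhere, rule out vertex conflicts by the absence of duplicates, and rule out swaps by one-directional edge traversal; your explicit treatment of agents waiting at their start node ($s(u)$ is the strictly smallest entry of the duplicate-free $l(u)$) covers a point the paper's proof glosses over, though $s(u)\ge\max l(parent(u))$ is not what stops arrivals from above (a node with $d_I(u)<0$ never receives agents from its parent) but what keeps $l(parent(u))$ duplicate-free.
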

\begin{proof}
     By lemmas \ref{lem:neg-postcond} and \ref{lem:pos-postcond}, all nodes $u$ in $T$ satisfy $post(u)$, and $leave(u)$. That is, after running $\directtraffic{}(r)$, all demands are 0, and $l(u)$ is sorted with no duplicates for all $u$. 
     
     As all demands are 0, we conclude that all agents are at a target when our plan concludes, although this may not be reflected by $agent(u)$ and $target(u)$, which were not updated. 
     
     By $leave(u)$, the last value in $l(u)$ may represent an agent staying at $u$; otherwise, for each $t \in l(u)$, the agent arrives at timestep $t$ and leaves at timestep $t + 1$. By postcondition \ref{cond: post-sorted}, this list contains no duplicate values. Therefore, no two agents are arriving at $u$ at the same time; as the agents leave immediately, this means no agent moves to a node which is occupied by another agent. Finally, as every edge move decreases the sum of demands, edges are only traversed in one direction. Therefore, no two agents will collide along an edge. 

     As all agents have reached a target by the end of the plan, and there are no collisions, algorithm $\directtraffic{}(r)$ outputs a feasible plan for unlabeled MAPF.
\end{proof}

\subsection{Algorithm Suboptimality}
Recall that makespan refers to the maximum time taken for an agent to reach its target, whereas sum of costs is the sum of the time taken for each agent to reach its target.

\begin{theorem}
    The algorithm does not always give optimal solutions for the makespan or sum of costs objectives. 
\end{theorem}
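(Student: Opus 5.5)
The plan is to prove the statement with an explicit counterexample: a small tree on which we trace \directtraffic{} by hand and exhibit a feasible plan that is strictly better in both makespan and sum of costs. The key observation is that the waiting rule on Line~\ref{line: s'}, $s(v) \gets \max(s(u)-1, \max l(u), 0)$, makes every later negative child of $u$ wait until \emph{all} agents already recorded in $l(u)$ have passed. This holds even when those agents are far away and the later agent is adjacent to $u$. The order in which the loop on Line~\ref{line:neg-child-start} visits the negative-demand children is not specified by the algorithm, so it suffices to exhibit one admissible execution that is suboptimal.

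\textbf{Construction.} Take the root $u$, which is a target, with three children $v$, $w$ and $x$:
\begin{itemize}
\item $v$ is the top of a path $v = p_1, p_2, \dots, p_L$, with one agent at the leaf $p_L$ and no targets on the path;
\item $w$ is a leaf holding an agent;
\item $x$ is a leaf target.
\end{itemize}
Then $d_I(v) = d_I(w) = -1$, $d_I(x) = 1$ and $d_I(u) = 0$, so the instance is valid with $k = 2$.

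\textbf{Tracing the algorithm.} Suppose the loop visits $v$ before $w$.
\begin{itemize}
\item The far agent walks up the path with no waiting and enters $u$ at timestep $L$, so $l(u) = [L]$.
\item Line~\ref{line: s'} then sets $s(w) \ge L$. The agent at $w$ therefore waits until time $L$ and arrives at $u$ at $L+1$, giving $l(u) = [L, L+1]$.
\item In \sendagent{}, the agent with time $L$ is sent to $x$, arriving at $L+1$. The agent with time $L+1$ stays at $u$.
\end{itemize}
This gives makespan $L+1$ and sum of costs $(L+1) + (L+1) = 2L+2$.

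\textbf{A better plan.} Move $w$'s agent to $u$ at time $0$ and on to $x$ at time $1$, arriving at time $2$. Meanwhile the far agent walks up and reaches $u$ at time $L$, with no conflict for $L \ge 3$. This plan has makespan $L$ and sum of costs $L + 2$. Both are strictly smaller than the algorithm's values, so the algorithm is suboptimal for both objectives. For concreteness one can fix $L = 3$ and list the two plans explicitly.

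\textbf{Main obstacle.} The delicate point is the dependence on the child ordering. If a referee insists that the ordering be fixed, the remedy is to check, and if necessary symmetrize, the gadget. One option is to give each negative child its own far-away agent stuck behind a nearby one, so that whichever child is processed first, some adjacent agent still waits behind a distant one. I would first try the simple adversarial-ordering argument, since the pseudocode says only ``for $v$ child of $u$ with $d(v) < 0$'' and leaves the order free.
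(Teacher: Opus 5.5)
Your proposal is correct and follows essentially the same route as the paper: an explicit counterexample in which the waiting rule $s(v) \gets \max(s(u)-1, \max l(u), 0)$ forces an agent adjacent to the root to wait for an agent arriving from a deeper branch, compared against a strictly better hand-built plan. The paper's gadget in Figure~\ref{fig:AlgorithmSubOptimalExample} has the same shape (giving $4$ vs.\ $3$ and $8$ vs.\ $6$), and it likewise only works when the deeper branch $B$ is processed before $C$, so it relies on the same choice of child order that you state explicitly.
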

\begin{proof}
    In Figure \ref{fig:AlgorithmSubOptimalExample}, the optimal solution is given by the agent on node $E$ taking the path, $E-B-A-D$, and for $C$ to take $C-A-D-F$, giving a makespan and sum of costs of 3 and 6 respectively. However, \Cref{alg:process-st} would output the sequence of moves for $E$ as $E-B-A-D-F$, and for $C$, $C-C-C-A-D$. This gives a makespan and sum of costs of 4 and 8 respectively. An infinite family of such counterexamples can be found by subdividing the edges $AC$ and $BE$ an equal number of times. Hence, the algorithm is suboptimal in makespan and sum-of-costs. 
    \begin{figure}[ht]
    \centering
     \begin{tikzpicture}
        \Vertex[IdAsLabel]{A}
        \Vertex[IdAsLabel, x=-1, y=-1]{B}
        \Vertex[IdAsLabel, x=0, y=-1, color=green, shape = diamond]{C}
        \Vertex[IdAsLabel, x=1, y=-1, color=red, shape = rectangle]{D}
        \Vertex[IdAsLabel, x=-1, y=-2, color=green, shape = diamond]{E}
        \Vertex[IdAsLabel, x=1, y=-2, color=red, shape = rectangle]{F}
        \Edge(A)(D)
        \Edge(A)(B)
        \Edge(A)(C)
        \Edge(B)(E)
        \Edge(D)(F)
    \end{tikzpicture}
    \caption{}
    \label{fig:AlgorithmSubOptimalExample}
\end{figure}
\end{proof}

\begin{theorem}
    An optimal makespan or sum of costs solution may require edges to be traversed in both directions.
\end{theorem}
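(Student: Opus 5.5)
The plan is to prove the statement with an explicit counterexample, in the style of the previous theorem (Figure~\ref{fig:AlgorithmSubOptimalExample}). We will construct a small tree instance and show two things: every plan in which each edge is traversed in only one direction has makespan (and sum of costs) at least some value $M$, and there is a plan that traverses some edge in both directions and achieves strictly less than $M$. Because the statement says ``may require'', one instance for each objective suffices; ideally a single instance will serve for both.

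The construction I would try first uses a \emph{passing bay}. Take a junction node $u$ with a leaf $x$ attached to it, a left branch $L$ and a right branch $R$, both paths hanging off $u$. Place one agent on $u$ and a convoy of agents deep in $L$. Put targets far out in $R$ for the convoy, and one target inside $L$ near $u$ (for example at the neighbour $l_1$ of $u$).

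The intended good plan runs as follows. The agent on $u$ steps into $x$ at time $0$. The convoy streams through $u$ into $R$ without waiting, and the last convoy agent stops at the target $l_1$. So the edge $ux$ is used in one direction only, and the bidirectional gain has to come from elsewhere. I would adjust the placement of targets so that it is cheaper for the bay agent to come \emph{back} out of $x$ and head to a target, making $ux$ traversed both ways. Concretely, put a target on $x$'s far side or on $u$ itself. The agent on $u$ then temporarily vacates $u$ into $x$ while the stream passes and returns afterwards, with the stream's timing forced by the length of $L$. The competing unidirectional plans must instead send the $u$-agent forward into $R$ and have the convoy's front agent stop at $u$; I would arrange distances so that this costs a longer path for a convoy member.

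The key steps, in order, are these. First, fix the instance with concrete path lengths parametrised by $m$. Second, compute the cost of the bidirectional plan directly. Third, lower-bound all unidirectional plans. For the third step I would use the observation from the proof of the optimality theorem: if each edge $uv$ is used in one direction only, then it is used exactly $|\demand{v}|$ times, so the multiset of moves is forced to coincide with that of an optimal UPMT plan. It would then remain to bound the timing, since only which agent takes which route and when it waits are free. Finally, subdivide edges, as in the previous theorem, to obtain an infinite family.

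I expect the third step, lower-bounding every unidirectional plan, to be the main obstacle. Even with the move multiset fixed, the assignment of agents to routes and the waiting schedule have to be examined. I would handle this by a short case analysis on which agent occupies $u$ first after time $0$, using vertex-conflict constraints at $u$. Tuning the instance so that the gap is strict under the paper's conflict model (can an agent enter a node in the same timestep another leaves it?) may require a couple of attempts at choosing lengths.
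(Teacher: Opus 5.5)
There is a genuine gap. A proof of this statement \emph{is} a concrete instance together with a verification of both costs, and you never fix one. More importantly, the family you sketch cannot supply one.

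Take the setting you describe: a path $\dots,l_2,l_1,u,r_1,r_2,\dots$ carrying all agents and targets, plus a pendant leaf $x$ at $u$ with no target. Match agents to targets in left-to-right order and let every agent walk straight to its target from time $0$.
\begin{itemize}
\item The left-to-right order of the agents is preserved at every step, so there are no vertex or swap conflicts. The paper's model allows following, as consecutive entries of the lists $l(u)$ in its MAPF algorithm show.
\item No agent ever waits, each edge is crossed in one direction only, and $x$ is never used.
\item Hence this unidirectional plan has sum of costs exactly $\sum_v|d(v)|=\mathit{OPT}$, which lower-bounds the sum of costs of \emph{every} plan.
\item Its makespan equals the bottleneck of the sorted matching, which lower-bounds the makespan of every plan.
\end{itemize}
So no choice of lengths or target positions in this family separates bidirectional from unidirectional plans. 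The bay manoeuvre only adds two moves and a wait, whereas, agents being unlabeled, the blocking agent on $u$ can simply ride ahead of the convoy. Your observation that a unidirectional plan crosses the edge above $v$ exactly $|d(v)|$ times is correct. It also tells you what a counterexample needs: extra moves that buy fewer waits or a shorter longest route, which a same-agent detour into a pocket does not deliver.

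The missing idea, which the paper uses, is a cross-over on a zero-demand edge by two \emph{different} agents. Root the instance of Figure~\ref{fig:MakespanSOCCounterexample} at $A$. Then $d(B)=0$, so a unidirectional plan never uses $AB$.
\begin{itemize}
\item The agent on $F$ is forced along $F$--$C$--$A$--$D$--$G$--$M$--$O$, which is 6 moves.
\item The four agents of $T_B$ (on $B,L,N,P$) must pass through $E$ at distinct times with lower bounds $1,1,2,3$, hence at times at least $1,2,3,4$, before stepping to $H,I,J,K$.
\item So every unidirectional plan has makespan $\geq 6$ and sum of costs $\geq 6+10+4=20$.
\end{itemize}
Now send the agent on $B$ to $O$ (5 moves), while the agent on $F$ enters $T_B$ and reaches $E$ at time $4$, directly behind the agents from $L,N,P$. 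This gives makespan $5$ and sum of costs $19$, even though it uses $19>17=\mathit{OPT}$ moves. The lower bound for unidirectional plans, which you expected to be the hard step, is immediate here, because the forced flows pin down both the long route and the queue at $E$.
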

\begin{proof}
    If edges are only allowed to be traversed in one direction, then the optimal solution in Figure \ref{fig:MakespanSOCCounterexample} will have the agent at $F$ go to target $O$. This gives a makespan of 6, and a sum of cost of 20. If edges may be traversed in both directions, then the agent at $B$ will finish at $O$ instead, with the edge $AB$ being traversed by agents $B$ and $F$ in opposite directions. This gives a makespan of 5, and sum of costs of 19. Therefore an optimal solution may require edges to be traversed in different directions.
    \begin{figure}
    \centering
     \begin{tikzpicture}
        \Vertex[IdAsLabel]{A}
        \Vertex[IdAsLabel, x=-2, y=-1, color=green, shape = diamond]{B}
        \Vertex[IdAsLabel, x=-3, y=-2]{E}
        \Vertex[IdAsLabel, x=-5, y=-3, color=red, shape = rectangle]{H}
        \Vertex[IdAsLabel, x=-4, y=-3, color=red, shape = rectangle]{I}
        \Vertex[IdAsLabel, x=-3, y=-3, color=red, shape = rectangle]{J}
        \Vertex[IdAsLabel, x=-2, y=-3, color=red, shape = rectangle]{K}
        \Vertex[IdAsLabel, x=-1, y=-3, color=green, shape = diamond]{L}
        \Vertex[IdAsLabel, x=-1, y=-4, color=green, shape = diamond]{N}
        \Vertex[IdAsLabel, x=-1, y=-5, color=green, shape = diamond]{P}
        \Vertex[IdAsLabel, y=-1]{C}
        \Vertex[IdAsLabel, y=-2, color=green, shape = diamond]{F}
        \Vertex[IdAsLabel, x=2, y=-1]{D}
        \Vertex[IdAsLabel, x=2, y=-2]{G}
        \Vertex[IdAsLabel, x=2, y=-3]{M}
        \Vertex[IdAsLabel, x=2, y=-4, color=red, shape = rectangle]{O}

        \Edge(A)(C)
        \Edge(C)(F)
        \Edge(A)(B)
        \Edge(B)(E)
        \Edge(E)(L)
        \Edge(L)(N)
        \Edge(N)(P)
        \Edge(A)(D)
        \Edge(D)(G)
        \Edge(G)(M)
        \Edge(M)(O)
        \Edge(E)(H)
        \Edge(E)(I)
        \Edge(E)(J)
        \Edge(E)(K)
    \end{tikzpicture}
    \caption{}
    \label{fig:MakespanSOCCounterexample}
\end{figure}
\end{proof}

\subsection{Algorithm Complexity}
\Cref{alg:process-peb} runs in time asymptotically equivalent to that of reading the input and writing the output. Moreover, the output format is minimal, recording only moves along an edge and their corresponding times, rather than tracking each agent's position at every timestep.

\begin{lemma}
    \Cref{alg:process-peb} runs in $O(\log n)$.
\end{lemma}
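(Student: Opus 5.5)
Since \sendagent{} contains no loops and no recursive calls, the plan is to go through its branches one at a time and bound each primitive operation, in the same cost model used for \move{}: a node reference costs $O(\log n)$ to read or write, and a number bounded by $n$ costs $O(\log n)$. A time value $t$ is at most the makespan, which we will bound by $n-k \le n$, so it also has $O(\log n)$ bits.

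\textbf{First branch.} Testing $d(u)<0$ compares a number of size $O(\log k)$ with $0$. We then look up $parent(u)$, a stored reference, and compute $t+1$ in $O(\log n)$. Appending $t+1$ to $l(parent(u))$ costs $O(1)$ pointer work plus writing an $O(\log n)$-bit value, provided each $l(\cdot)$ is a linked list or dynamic array with a tail pointer. Outputting the move $(u,parent(u),t)$ writes two node references and one time, which is $O(\log n)$. Incrementing $d(u)$ is $O(\log k)$.

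\textbf{Second branch.} This branch has to decide whether $u$ has a child $v$ with $d(v)>0$ and, if so, return one. I would reuse the three-list data structure from \cref{thm:runtime_of_picking_child}: every node keeps its children bucketed by the sign of their demand. The test and the choice then amount to checking that one list is nonempty and reading its head, which costs $O(\log n)$. The remaining steps are the append to $l(v)$, the output, and the decrement of $d(v)$, which cost the same as in the first branch. When $d(v)$ drops to $0$, $v$ is moved from the positive list to the zero list in $O(1)$, exactly as in \cref{thm:runtime_of_picking_child}. If neither branch applies, the call only performs constant-size tests.

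The one step that needs care is the cost of picking a child with positive demand. A naive scan over the children of $u$ would take $O(\deg(u)\log n)$ per call, so the bucket lists must be initialised during the $O(n\log n)$ preprocessing and kept up to date on every demand change. Demand changes happen only at lines \ref{line: inc-demand} and \ref{line: red-demand}, each time by $\pm1$. Each change therefore triggers at most one $O(1)$ relocation of a node between the lists of its parent, and the $O(\log n)$ bound holds for every call.
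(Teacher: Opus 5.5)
Your proposal is correct and follows the paper's argument. The paper likewise bounds the cost of outputting $u$, $v$ and $t$ (all at most $n$, via the makespan bound), the $O(\log k)$ demand update, and the child selection at line~\ref{line: pick-child} via the three-list structure of \cref{thm:runtime_of_picking_child}. Your extra accounting of the append to $l(\cdot)$ and of how the bucket lists are maintained fills in details the paper leaves implicit.
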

\begin{proof}
    This is the time to output $u$, $v$ and $t$, each bounded above by $n$; as well as $O(\log k)$ time to update $d$. Line \ref{line: pick-child} can be executed in $O(\log n)$ time by lemma \ref{thm:runtime_of_picking_child}.
\end{proof}

\begin{lemma}
    The total number of calls to \sendagent{} is $OPT + k$.
\end{lemma}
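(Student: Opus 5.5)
Every call to \sendagent{} is issued by the loop on Lines \ref{line:self-start}--\ref{line:self-end} of \directtraffic{}$(u)$, once for each entry $t \in l(u)$. Each node $u$ is processed exactly once, so the total number of calls is $\sum_{u \in V} |l(u)|$, where $|l(u)|$ means the final length of the list. I will count these lengths by charging each entry of $l(u)$ either to an incoming move or to an agent that starts at $u$.

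The plan is as follows. An entry is added to $l(u)$ in only two ways. The first is Line \ref{line:start-agent}, which inserts $s(u)$ exactly when $agent(u)=1$; summed over all nodes this gives $k$ entries, one per agent. The second is a move into $u$ made by \sendagent{}, on Line \ref{line:up-t+1} or Line \ref{line: down-t+1}, and each such append coincides with exactly one output move $(\cdot, u, \cdot)$.

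\begin{itemize}
\item For $d_I(u) > 0$, $get\_down(u)$ together with Lemma \ref{lem:pos-postcond} shows that the entries arriving before \directtraffic{}$(u)$ runs number exactly $d_I(u)$, one per move into $u$.
\item By Corollary \ref{cor: BC-allu}, $get\_up(u)$ holds, so the children $v$ with $d_I(v)<0$ contribute exactly $|d_I(v)|$ entries each.
\end{itemize}

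Summed over all nodes, the entries created by moves are therefore exactly the moves output. By Lemmas \ref{lem: send-up-neg-demand}, \ref{lem: send-down-pos-demand} and \ref{lem: ABCDEF}, every move on the edge between $v$ and its parent changes $d(v)$ by one toward zero, and $d(v)$ ends at zero. So the number of moves is $\sum_{v \ne r} |d_I(v)| = \OPT$, by Theorem \ref{thm:optimality} and $d_I(r)=0$. Hence $\sum_u |l(u)| = \OPT + k$.

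The main obstacle is that the call count and the move count differ, because \sendagent{}$(u,\max l(u))$ outputs nothing when the agent stays at a target. This is why the count is $\OPT + k$ rather than $\OPT$. I must check that the two branches of \sendagent{} output exactly $|l(u)| - target(u)$ moves at $u$, which is the count in Lemma \ref{lem: send-up-neg-demand} and $leave(u)$. Even so, the bookkeeping above counts calls directly through list entries, so that check only confirms consistency with $\sum_u target(u) = k$.

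I also need to confirm that no entry of $l(u)$ is appended after its loop on Line \ref{line:self-start} has already run. Such a late entry would exist in the list without generating a call. This follows from the in-order structure. Children with negative demand are fully processed before Line \ref{line:self-start}. The parent of a node with positive demand is processed before the node, so $l(u)$ is complete when \directtraffic{}$(u)$ begins. If $d_I(u)\le 0$, nothing is ever sent down to $u$.
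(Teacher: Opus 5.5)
Your proof is correct, but it splits the count differently from the paper. The paper classifies each call to \sendagent{} by its \emph{effect}. A call either outputs a move, lowering $\sum_u |d(u)|$ by one, which gives \OPT{} such calls. Otherwise it is the silent call $\sendagent(u,\max l(u))$ at a target, which happens once per target and gives $k$ such calls. Confirming that these are the only silent calls relies on $leave(u)$.

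You instead classify calls by the \emph{origin} of the list entry that triggers them. An entry of $l(u)$ is either inserted on the start-agent line, once per agent and so $k$ in total. Or it is appended together with exactly one output move, \OPT{} in total, since each move across the edge $\{v, parent(v)\}$ moves $d(v)$ one step toward $0$ and $d(v)$ ends at $0$. So your $k$ comes from the starting positions, not from the targets.

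Your route never needs to know which calls are silent, so the final-arrival case at a target plays no role. Your ``consistency check'' paragraph is therefore genuinely optional. The price is the extra check that each $l(u)$ is complete before its loop starts and is not modified during it. The latter holds because $\sendagent(u,t)$ appends only to $l(parent(u))$ or $l(v)$ for a child $v$, never to $l(u)$. You handle the former correctly via the in-order structure.

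Two small remarks. First, ``each move pushes $d(v)$ toward $0$'' is read directly off the two branches of \sendagent{}. The fact that $d(v)$ ends at $0$ comes from $post(\cdot)$ in Lemmas \ref{lem:neg-postcond} and \ref{lem:pos-postcond}, not from Lemmas \ref{lem: send-up-neg-demand}--\ref{lem: ABCDEF}. Second, your two bullets already finish the proof on their own. They give $|l(u)| = agent(u) + \max(d_I(u),0) + \sum_{v \text{ child of } u,\; d_I(v)<0} |d_I(v)|$. Summing over $u$ counts each $|d_I(v)|$ exactly once, either in $l(v)$ or in $l(parent(v))$, which yields $k+\OPT{}$ without a separate move count.
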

\begin{proof}
    Almost every time \sendagent{} is called, the sum of demands is decremented. The only exception is when $t$ is the last time an agent ``passes through'' $u$, and $target(u) = 1$; this occurs once for each target, making $k$ times in total. So the total number of calls is $OPT + k$.
\end{proof}

\begin{theorem} \label{thm:time}
    The runtime of \directtraffic{} is in $O(n \log n + OPT \log n)$.
\end{theorem}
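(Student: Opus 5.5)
The plan mirrors the proof of \cref{thm:overall_runtime_UPMT}: charge every unit of work either to reading the input or to an output move. There are four kinds of work to account for: preprocessing, the body of each call to \directtraffic{} excluding its loops, the loop iterations, and the calls to \sendagent{}.

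\emph{Preprocessing.} Reading the tree and computing $d_I$ takes $O(n\log n)$ by \cref{thm:runtime_of_d}. During the same traversal, each node $u$ stores its children in three lists according to the sign of $d(v)$, as in \cref{thm:runtime_of_picking_child}. This lets the loop on line \ref{line:neg-child-start} enumerate exactly the negative-demand children, and lets the loop on line \ref{line:pos-child-start} enumerate the unprocessed children, each in $O(\log n)$ per child. Every node receives exactly one call to \directtraffic{}, since the procedure is an in-order traversal. Each call does constant extra work besides its loops: line \ref{line:start-agent} and line \ref{line:mark-processed}, each costing $O(\log n)$ for reading or writing references and times bounded by $n$. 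Summed over all nodes, the per-node overhead and the child iterations (there are $n-1$ children in total) cost $O(n\log n)$. Line \ref{line: s'} is also $O(\log n)$, provided $\max l(u)$ is read as the last element of a list kept sorted (postcondition \ref{cond: post-sorted}), so each list should be stored as an array or linked list with a tail pointer.

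\emph{Sending agents.} The loop on line \ref{line:self-start} calls \sendagent{} once per element of $l(u)$. Each element of any $l(\cdot)$ is created by one move, or is the initial $s(u)$ inserted for an agent. The total number of iterations is therefore the total number of \sendagent{} calls, which is $\OPT+k$ by the preceding lemma. Each call costs $O(\log n)$ by the lemma on \cref{alg:process-peb}, giving $O((\OPT+k)\log n)$. Appending $t+1$ to a list is $O(\log n)$ with a tail pointer.

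\emph{Main obstacle: absorbing the $k$ term and bounding time values.} The $k\log n$ term is absorbed because $k\le n$, so it is $O(n\log n)$. The remaining concern is that timestamps $t$ must have encoding size $O(\log n)$. By the structure of the algorithm, each time equals the waiting delay plus the distance travelled. I would argue $t\le n$, or at worst $t\le \OPT+n$, by noting that times are assigned strictly increasingly along each node's list and each increment corresponds to a distinct agent passage. The makespan bound $n-k$ claimed in the introduction gives this directly if it is available. If only $t\le \OPT+n\le n^2$ were provable, $\log t=O(\log n)$ would still hold, so the cost estimates stand. Summing the three parts yields $O(n\log n+\OPT\log n)$.
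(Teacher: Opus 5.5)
Your proposal is correct and follows essentially the same accounting as the paper. The paper charges $O(n\log n)$ to reading the input and computing $d$, makes one call of \directtraffic{} per node with $O(\log n)$ work on line \ref{line: s'} (justified by the time bound of Theorem \ref{thm:makespan-upper-bound}), and counts $\OPT+k$ calls to \sendagent{} at $O(\log n)$ each, with $k\le n$ absorbing the extra term. Your tentative $t\le \OPT+n$ fallback is unnecessary: the $n-k$ makespan bound you also invoke is proved in the paper, and it is exactly what the paper uses.
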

\begin{proof}
    Reading the input and computing $d$ takes $O(n\log n)$ time (lemma \ref{thm:runtime_of_d}). Algorithm \directtraffic{} is called exactly once on each node. Line \ref{line: s'} takes $O(\log n)$, as $s(u) - 1$ and $\max l(u)$ are bounded above by $n$ (see Theorem \ref{thm:makespan-upper-bound}); this is called once for each node with negative demand, taking $O(n \log n)$ time in total. Finally, \sendagent{} is called $OPT + k$ times, with all calls together taking $O(OPT\log n + k\log n)$ time. 

    Therefore \directtraffic{} runs in $O(n \log n + OPT \log n)$ time.
\end{proof}
\subsection{Subsequent bounds on optimal solutions}
Let the nodes be ordered $v_1, \ldots, v_n$ where $v_p$ is marked as processed before $v_{p+1}$ in \Cref{alg:process-st}. We can write $v_i < v_j$ whenever $i < j$. Then let $t_p$ be the total number of targets located at nodes $v_1, \ldots, v_{p-1}$. Recursively, we could define $t_1 = 0$ and $t_{p+1} = t_p + target(v_p)$.

After a node $u$ is marked as processed on line \ref{line:mark-processed} of $\directtraffic{}(u)$, we next call $\directtraffic{}(r)$ on on some child $r$ of $u$ not yet processed. If such a child does not exist, we resume processing $parent(u)$ at line \ref{line:neg-child-start}. Let $r(u)$ be this child $r$ if it exists, else $r(u) := parent(u)$. We can see that $r(u)$ is marked as processed after $u$, that is, $r(u) > u$. 

Let $ch(u)$ denote the first child of $u$ with negative demand that has not yet been processed. This is only well defined when such a child exists. We can see that $ch(u)$ is processed before $u$; we write $ch(u) < u$. As it is, by definition, the next child to be processed, it is processed before any remaining siblings. As $ch(u) < u$ for all $u$, we have
\[
ch \circ \ldots \circ ch(u) < \ldots < ch \circ ch(u) < ch(u) < u
\]
so long as $ch \circ \ldots \circ ch(u)$ exists. Let $gc(u) := ch \circ \ldots \circ ch(u)$, using the longest possible composition of $ch$ that allows this to be well-defined. If $u$ does not have any remaining children with negative demand, define $gc(u) := u$. 

\begin{lemma} \label{lem: next-v}
    For all $1 \leq p < n$, $v_{p+1} = gc(r(v_p))$.
\end{lemma}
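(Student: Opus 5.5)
We argue directly from the control flow of \directtraffic{}, viewing the whole run as a sequence of call frames, each of which marks its node as processed exactly once, at line \ref{line:mark-processed}.

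\emph{First step: find the next frame after $v_p$.} Immediately after $v_p$ is marked, control continues in the same frame at line \ref{line:pos-child-start}. There are two cases.
\begin{itemize}
\item If $v_p$ has a child not yet processed, the next call is $\directtraffic{}$ on such a child, which by definition is $r(v_p)$.
\item Otherwise the frame of $v_p$ returns. Since $v_p$ has already been marked, its children with negative demand were handled in its own lines \ref{line:neg-child-start}--\ref{line:neg-child-end}, and they were handled before the mark. So the remaining work in the caller is the frame of $parent(v_p)=r(v_p)$.
\end{itemize}
In this second case we must check that the caller is still inside its loop at line \ref{line:neg-child-start} and has not yet marked itself. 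We split on how $v_p$ was called. If it was called from line \ref{line:neg-child-end}, then $d_I(v_p)<0$ and the parent is unmarked and still in its negative-children loop. If it was called from line \ref{line:pos-child-end}, the parent is already marked; this case needs care (see the last paragraph).

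\emph{Second step: from $r(v_p)$ to the next marked node.} We show that when control is at $u=r(v_p)$, either at the start of its call or resuming its negative-children loop, the next node marked is $gc(u)$. We prove this by induction on the length of the $ch$-chain.
\begin{itemize}
\item If $u$ has no remaining negative-demand child, then $gc(u)=u$. No recursive call happens before line \ref{line:self-start}, since the sends at $u$ mark nothing, so $u$ itself is marked next.
\item Otherwise the loop next calls $\directtraffic{}$ on $ch(u)$, the first unprocessed negative child. Freshly entering $ch(u)$, all of its children are unprocessed, so the same reasoning applies to $ch(u)$ with $ch(ch(u))$, and so on. Because $gc(ch(u))=gc(u)$, the induction closes.
\end{itemize}
Here one checks that the loop's iteration order is the one meant by ``first'' in the definition of $ch$. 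The ``$d(v)<0$'' test is evaluated with current demands, but an unprocessed child's demand equals its initial demand, so the test matches the definition.

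\emph{Main obstacle: when $v_p$ has no unprocessed child and was reached as a positive child.} Then $parent(v_p)$ is already marked, and returning continues the parent's loop at line \ref{line:pos-child-start}, not line \ref{line:neg-child-start}. A literal reading of $r(v_p)=parent(v_p)$ would then give $v_{p+1}=gc(parent(v_p))=parent(v_p)$, which is an already marked node. I would first try to resolve this by interpreting $r(u)$ as defined after unwinding: climb ancestors until one has an unprocessed child, and then take that child. Returning frames mark nothing, so the unwinding adds no marks. I would state this interpretation explicitly and verify that the paper's informal phrase ``resume processing $parent(u)$'' intends it. If the intended reading is strictly the parent, the lemma holds only in the case $d_I(v_p)<0$, and I would restrict the argument accordingly. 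Finally, for $p<n$ some node remains unmarked, so the unwinding terminates at a genuine next call, which guarantees that $v_{p+1}$ exists.
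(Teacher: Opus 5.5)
Your Steps 1 and 2 are the paper's own argument, done with more care: after $v_p$, control resumes at $r(v_p)$ and then descends the $ch$-chain to $gc(r(v_p))$. The obstacle you raise is also genuine. The paper's definition of $r$ asserts that when $u$ has no unprocessed child, execution resumes $parent(u)$ inside its negative-children loop, but that is true only if $d_I(u)<0$. For a non-root $u$ with $d_I(u)\ge 0$ the parent is already marked, so the lemma as literally stated is false. Take a root $R$ with an agent whose children are a leaf $A$ with an agent and two target leaves $X,Y$: the order is $A,R,X,Y$, yet $gc(r(X))=R$. The paper's short proof, like its claim $r(u)>u$, silently skips this case.

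\textbf{The gap is in your repair.} The rule ``climb ancestors until one has an unprocessed child, then take that child'' ignores that the climb can reach an ancestor $g$ that is still \emph{unmarked}. This happens as soon as the node being left is a negative-demand child of $g$. Then $g$ resumes its negative-children loop, and if $g$ has no remaining negative child, the next marked node is $g$ itself. Your rule instead climbs past $g$ (if $g$ has no unprocessed child) or returns a positive child of $g$, which is processed after $g$.

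Concretely, let the root $R$ carry a target and have a single child $W$ carrying an agent, and let $W$ have two leaf children, $Y$ with an agent and $X$ with a target. Then $d_I(Y)=d_I(W)=-1$, $d_I(X)=1$, and the order is $Y,W,X,R$. After $X$, no ancestor has an unprocessed child, yet $R$ comes next. So the unwinding ends at a resumed frame, not at ``a genuine next call''. If instead $R$ carries no target and has a second child $P$ that is a target leaf, the order is $Y,W,X,R,P$, and your rule returns $P$ while $v_4=R$.

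The right definition is recursive:
\begin{itemize}
\item $r(u)$ is the first unprocessed child of $u$ if one exists;
\item otherwise $r(u)=parent(u)$ when $d_I(u)<0$, since the parent is then unmarked and inside its negative-children loop;
\item otherwise $r(u)=r(parent(u))$, evaluated at that moment.
\end{itemize}
With this $r$, your Step 2 applies verbatim and gives the lemma for every $p<n$, with $r(u)>u$. Your fallback of keeping $r=parent$ and restricting to $d_I(v_p)<0$ is therefore unnecessary. It would also be too weak, since the subsequent bound on $\max l(v_p)$ invokes the lemma for every $p$.
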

\begin{proof}
    We saw that, after processing $v_p$, we resume processing $r(v_p)$. As $gc(r(v_p))$ has no children with negative demand, it will be processed before any of its children. Then, tracing \Cref{alg:process-st}, we can see that $gc(r(v_p))$ will be the first node processed when we resume processing $r(v_p)$. Therefore $gc(r(v_p)) = v_{p+1}$. This is a similar process to finding the successor of a node in a binary search tree.
\end{proof}

\begin{lemma} \label{lem: preserve-max}
    If $l(u)$ is nonempty or $d_I(u) = 0$ then $\max l(gc(u)) \leq \max l(u)$ when $gc(u)$ is processed.
\end{lemma}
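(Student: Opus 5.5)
We prove the statement by following the chain of negative-demand children from $u$ down to $gc(u)$. The key quantity is the value $s(\cdot)$ that each node on the chain receives on line \ref{line: s'} of \Cref{alg:process-st}. Write $u_0 = u$, $u_1 = ch(u_0)$, \ldots, $u_m = ch(u_{m-1}) = gc(u)$. If $m = 0$ then $gc(u) = u$ and there is nothing to prove, so assume $m \ge 1$.

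\textbf{Step 1: timing of $l(u)$.} No entry coming from $u_1$ can be appended to $l(u_0)$ before $gc(u)$ is processed. Appending happens only in $\sendagent{}(u_1,\cdot)$, during the loop on lines \ref{line:self-start}--\ref{line:self-end} of $\directtraffic{}(u_1)$. That loop runs after the recursive calls on the negative children of $u_1$, and $u_m$ is processed inside those calls (or is $u_1$ itself, processed at line \ref{line:mark-processed}, right after that loop). Entries from $u_1$'s send loop only make $\max l(u)$ larger, so in the case $u_m = u_1$ they can only help. Hence we may take $\max l(u)$ to be its value at the moment $\directtraffic{}(u_1)$ is called, which is exactly the value used on line \ref{line: s'}.

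\textbf{Step 2: the base value $s(u_1)$.} Line \ref{line: s'} gives $s(u_1) = \max(s(u_0)-1, \max l(u_0), 0)$. We claim $s(u_1) = \max l(u_0)$.
\begin{itemize}
\item If $l(u_0)$ is nonempty, then $s(u_0) \le \min l(u_0) \le \max l(u_0)$. This holds because $s(u_0)$ was inserted at the head of a sorted list, as in the proof of \cref{lem: process-neg}, and that argument is maintained while the negative children are processed.
\item If instead $d_I(u_0)=0$, then $s(u_0)=0$ by precondition \ref{cond: su-0}, so $s(u_1) = \max l(u_0)$ again, using the convention $\max \emptyset = 0$.
\end{itemize}

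\textbf{Step 3: induction down the chain.} Because $gc$ is evaluated before any of $u_1,\dots,u_m$ has started processing, every $u_i$ with $i\ge 1$ is untouched when the chain is formed. Each such $u_i$ has $d(u_i)<0$, so $l(u_i)$ starts empty by precondition \ref{cond: lu-empty}. When $\directtraffic{}(u_{i+1})$ is called, $u_{i+1}$ is the first negative child of $u_i$, so no child of $u_i$ has yet contributed to $l(u_i)$. Therefore $l(u_i) \subseteq \{s(u_i)\}$ at that moment, coming from line \ref{line:start-agent} only. It follows that
\[
s(u_{i+1}) = \max(s(u_i)-1, \max l(u_i), 0) \le s(u_i),
\]
and by induction $s(u_i) \le s(u_1)$ for all $i \ge 1$.

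\textbf{Step 4: the endpoint.} Since $gc(u_m) = u_m$, the node $u_m$ has no negative-demand children at all. When it is processed, its list therefore contains at most the entry $s(u_m)$. Combining the steps,
\[
\max l(gc(u)) \le s(u_m) \le s(u_1) = \max l(u).
\]

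The main obstacle is the bookkeeping of when each list is read relative to when it is modified (Step 1, and the claim $l(u_i) \subseteq \{s(u_i)\}$ in Step 3). I would settle it by tracing the recursion order of \Cref{alg:process-st}: a node's list receives entries from a child only during that child's send loop, which runs strictly after the child's own descendants in the chain are processed.
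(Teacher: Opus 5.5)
Your proof is correct and takes the same route as the paper. You set $s(ch(u)) = \max l(u)$ using $s(u) \le \min l(u)$ (or $s(u)=0$ when $d_I(u)=0$), show that $s$ is non-increasing down the chain of first negative-demand children, and conclude $\max l(gc(u)) \le s(gc(u))$ because $gc(u)$ has no negative-demand children. Your Step 3 is slightly more careful than the paper, which asserts that $l(ch(u))$ is empty and so overlooks the possible entry $s(ch(u))$ when an intermediate node holds an agent; your bound $l(u_i) \subseteq \{s(u_i)\}$ covers this case.
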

\begin{proof}
    Case $d_I(u) = 0$: $s(u) = 0$ by precondition \ref{cond: su-0}. If $l(u)$ is empty, then we have not yet processed any children of $u$ with negative demand. As $\max l(u) = 0$ and $s(u) = 0$, on the first iteration of line \ref{line:neg-child-start} we assign $s(ch(u)) = 0$. Recursing down, we continue to set $s(ch \circ \ldots \circ ch(u)) = 0$, so that $s(gc(u)) = 0$ also. As $gc(u)$ has no negative demand children, either $agent(gc(u)) = 1$ and $l(gc(u)) = \{0\}$ or $agent(gc(u)) = 0$ and $l(gc(u)) = \{\}$. Either way, we write $\max l(gc(u)) = 0 = \max l(u)$.

    Case $l(u)$ is nonempty: we have $\max l(u) \ge \min l(u) \ge s(u)$. Therefore, when recursing into $ch(u)$ we assign $s(ch(u)) = \max l(u)$. Now, $l(ch(u))$ is empty by precondition \ref{cond: lu-empty}, so $s(ch\circ ch(u)) := \max (s(ch(u))-1, 0)$ on line \ref{line: s'}. Therefore $s(ch\circ ch(u)) \leq s(ch(u)) \leq \max l(u)$, and continuing by induction we get $s(gc(u)) \leq \max l(u)$. As $gc(u)$ has no children with negative demand, either $agent(gc(u)) = 0$ and $\max l(gc(u)) = 0 \leq \max l(u)$ or $agent(gc(u)) = 1$ and $\max l(gc(u)) = s(gc(u)) \leq \max l(u)$. 

\end{proof}

\begin{theorem}
    For all $1 \leq p \leq n$, we have $\max l(v_p) \leq p - 1 - t_p$.
\end{theorem}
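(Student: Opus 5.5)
The plan is induction on $p$, using \cref{lem: next-v} to pass from $v_p$ to $v_{p+1}=gc(r(v_p))$ and \cref{lem: preserve-max} to carry a bound on $\max l(r(v_p))$ down to $\max l(gc(r(v_p)))$. Throughout, $\max l(v_p)$ is the value when $v_p$ is processed, i.e. when its list is final. I will repeatedly use two facts.
\begin{itemize}
\item[(a)] Each list $l(\cdot)$ is sorted with no duplicates (postcondition \ref{cond: post-sorted} together with \cref{lem:neg-postcond,lem:pos-postcond}).
\item[(b)] If $target(u)=1$, the entry $\max l(u)$ is the agent that stays at $u$ ($leave(u)$). So every time $t$ at which an agent is actually sent out of $u$ satisfies $t\le \max l(u)-target(u)$.
\end{itemize}
Consequently, every time appended to a neighbour's list by $\sendagent(u,\cdot)$ is at most $\max l(u)+1-target(u)$.

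\textbf{Base case $p=1$.} Processing starts at the root $r$, so $v_1=gc(r)$. Since $d_I(r)=0$, \cref{lem: preserve-max} gives $\max l(v_1)\le \max l(r)$. At that moment $l(r)$ is empty or $\{0\}$, so $\max l(v_1)\le 0=p-1-t_1$.

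\textbf{Inductive step.} Assume $\max l(v_p)\le p-1-t_p$. The target is
\[
\max l(r(v_p)) \le \max l(v_p)+1-target(v_p) \le p-t_{p+1}
\]
at the moment $gc(r(v_p))$ is processed, together with the hypothesis of \cref{lem: preserve-max} ($l(r(v_p))$ nonempty or $d_I(r(v_p))=0$). Then \cref{lem: preserve-max} and \cref{lem: next-v} give $\max l(v_{p+1})\le p-t_{p+1}$, as required. I split on what $r(v_p)$ is.
\begin{itemize}
\item \emph{$r(v_p)=w$ is an unprocessed child of $v_p$ with $d_I(w)\ge 0$.} If $d_I(w)=0$, then $l(w)$ is empty. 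The bound $0\le p-t_{p+1}$ holds because $t_{p+1}\le p$, and \cref{lem: preserve-max} applies. If $d_I(w)>0$, then $l(w)$ is nonempty by $get\_down(w)$, and its entries are $t+1$ for sent times $t\in l(v_p)$. By (b) these are at most $\max l(v_p)+1-target(v_p)$.
\item \emph{$r(v_p)=w=parent(v_p)$ with $d_I(v_p)<0$.} The times $t+1$ for the $|d_I(v_p)|$ agents sent up are appended to $l(w)$. This list is nonempty, and by $send\_up$ and (a) its current maximum is one of these new entries. The sent-up agents use the first $|d_I(v_p)|$ entries of $l(v_p)$, strictly before any retained final entry. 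Hence (b) again gives $\max l(w)\le \max l(v_p)+1-target(v_p)$.
\end{itemize}

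\textbf{Expected obstacle.} The delicate point is the climb-up case: $v_p$ has no unprocessed children, and processing resumes at an ancestor several levels up. This happens when $v_p$ was a nonnegative-demand child, so its parent is already processed and control returns through finished for-loops. I would make precise that $r(v_p)$ is then the first ancestor $w$ still in its negative-children loop. I would then argue that every entry of $l(w)$ is at most the values inherited along the processed path, so the chain of inequalities still yields $\max l(w)\le p-t_{p+1}$. If that fails, I would strengthen the induction hypothesis. The stronger invariant would be that every list entry recorded at any node processed so far, or pending in an unfinished ancestor, is at most $p-1-t_p$.
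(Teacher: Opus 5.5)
Your route is the paper's: induction on $p$, bounding $\max l(r(v_p))$ by $\max l(v_p)+1-target(v_p)$, then applying Lemmas~\ref{lem: next-v} and~\ref{lem: preserve-max}. Your base case and your two explicit cases are correct.

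The gap is the case you flag but do not close: $d_I(v_p)\ge 0$ and $v_p$ has no unprocessed child. Your guess about its structure is also wrong. In this case $v_p$ was called from its parent $x$'s \emph{positive}-children loop, so $x$ is already processed, and control returns to that loop. If $x$ has another unprocessed child $y$ (necessarily $d_I(y)\ge 0$), then $v_{p+1}=gc(y)$ lies in a sibling subtree. It is not below an ancestor still in its negative loop; you only reach such an ancestor if no such $y$ exists at any level. Your fallback invariant, as phrased, does not control the pending list $l(y)$ either.

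The paper's proof has the same hole. It takes $r(v_p)=parent(v_p)=x$, but $x$ has no unprocessed negative children, so $gc(x)=x\neq v_{p+1}$. So Lemma~\ref{lem: next-v} is false as stated in this case. For example, take a root $x$ holding an agent, with children $a$ (agent), $y_1$ (target) and $y_2$ (target). The processing order is $a,x,y_1,y_2$, yet $gc(r(y_1))=x$. Your instinct was right, but the argument still has to be supplied.

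What closes it is strong induction plus monotonicity. Assume $\max l(v_q)\le q-1-t_q$ for all $q\le p$. Note that $q\mapsto q-1-t_q$ is non-decreasing, because $t_{q+1}-t_q=target(v_q)\le 1$; the paper observes this in the proof of Theorem~\ref{thm:makespan-upper-bound}. Unwind the call stack from $v_p$ to the first frame with work left. There are two possibilities.

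\emph{First possibility.} The frame belongs to a processed node $x=v_q$, $q\le p$, with an unprocessed child $y$. Every entry of $l(y)$ is $t+1$ for a time $t\in l(x)$ at which an agent was actually sent. So by your fact (b), $\max l(y)\le \max l(x)+1-target(x)\le q-t_{q+1}$. Also $l(y)$ is nonempty or $d_I(y)=0$, so Lemma~\ref{lem: preserve-max} applies at $y$.

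\emph{Second possibility.} The frame belongs to an unprocessed node $w$ in its negative loop, whose most recently finished child is $c=v_q$, $q\le p$, with $d_I(c)<0$. Then $c$ appended $|d_I(c)|\ge 1$ entries to $l(w)$. By precondition~\ref{cond: su-max-l} and postcondition~\ref{cond: post-min} applied to $c$, these exceed all older entries of $l(w)$. Hence $\max l(w)\le \max l(c)+1-target(c)\le q-t_{q+1}$.

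With $q=p$ these two possibilities are exactly your two cases. In all cases monotonicity gives $q-t_{q+1}\le p-t_{p+1}$. Lemma~\ref{lem: preserve-max} at $y$ or $w$, with $v_{p+1}=gc(y)$ or $gc(w)$ respectively, then yields $\max l(v_{p+1})\le (p+1)-1-t_{p+1}$.
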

\begin{proof}
The proof is by induction on $p$. 

Base case: We know $\max l(v_1) = 0$ because we have not moved any agents yet, so there cannot be any other agents passing through $v_1$. So $\max l(v_1) = 0 = 1 - 1 - t_1$, as $t_1 = 0$.

Induction step: Suppose $\max l(v_p) \leq p - 1 - t_p$ for some $1 \leq p < n$. If $target(v_p) = 0$, then at worst we are sending the last agent in $l(v_p)$ into $r(v_p)$, so that \[\max l(r(v_p)) \leq \max l(v_p) + 1 \leq p-t_p = p-t_p - target(v_p).\] However, if $target(v_p) = 1$, then the last agent in $l(v_p)$ stays at $v_p$, so does not transition to $r(v_p)$. In this case, an agent can be leaving $v_p$ no later than time $\max l(v_p) - 1$. We then get \[\max l(r(v_p)) \leq \max l(v_p) - 1 + 1 \leq p-t_p - 1 = p-t_p - target(v_p).\] 

Either $v_p$ sends an agent to $r(v_p)$, or $d(r(v_p)) = 0$ (potentially both if $r(v_p) = parent(v_p)$). If an agent is indeed sent, this is processed with $v_p$, so $l(r(v_p))$ is updated before we process $gc(r(v_p))$. Therefore we can invoke lemmas \ref{lem: next-v} and \ref{lem: preserve-max}, so that, after $v_{p+1}$ is processed, 
\begin{align*}
\max l(v_{p+1}) &= \max l(gc(r(v_p)))\\
&\leq \max l(r(v_p)) \\
&\leq p-t_p - target(v_p)\\
&= (p+1) - 1 -t_{p+1}
\end{align*}
using the recursive definition of $t_p$. So the induction holds.
\end{proof}

\begin{lemma} \label{lem:makespan}
    The makespan $M$ is equal to $\max_{u \in T} \max l(u)$.
\end{lemma}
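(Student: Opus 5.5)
We prove the equality by establishing the two inequalities $M \le \max_{u} \max l(u)$ and $M \ge \max_u \max l(u)$, using the interpretation of the lists $l(u)$ given by $leave(u)$ together with the time-stamping performed by \sendagent{}.

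The first step is to make precise what an entry of $l(u)$ means. Every entry of $l(u)$ is created in one of two ways. Either it is $s(u)$, inserted on Line~\ref{line:start-agent} when $agent(u)=1$, which marks the time the initial agent at $u$ starts (or is deemed to be) at $u$ after its waits. Or it is $t+1$, appended by $\sendagent{}(w,t)$ for a neighbour $w$ of $u$, which marks the time at which some agent arrives at $u$ via the output move $(w,u,t)$. By $leave(u)$ and the proof of \cref{thm:feasible}, an agent with time $t\in l(u)$ occupies $u$ at time $t$ and leaves at time $t+1$, except that when $target(u)=1$ the agent with time $\max l(u)$ stays at $u$ forever. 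Since every agent's path consists of initial waits followed by uninterrupted movement, and the algorithm outputs no moves beyond those made by \sendagent{}, each agent's path length equals the timestep at which it reaches its final target node.

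For $M \le \max_u \max l(u)$, consider any agent and let $u$ be the target where it finishes. By $leave(u)$, the agent that stays at $u$ is the one recorded at time $\max l(u)$, so its arrival time is $\max l(u)$. (If the agent started at $u$ and never moved, this value is $0$ or $s(u)$; the agent is at its target from time $0$, so its cost is at most this value.) Hence every agent's cost is bounded by $\max l(u)$ for its target $u$, and therefore by $\max_u \max l(u)$.

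For $M \ge \max_u \max l(u)$, fix any $u$ and any $t \in l(u)$ with $t>0$, and let $a$ be the agent recorded at time $t$. This agent is at $u$ at time $t$. If $u$ is its final target, then its cost is $t$. Otherwise it leaves at $t+1$ and arrives elsewhere at $t+1$, and because its movement is uninterrupted, following it to its target shows its arrival time there is at least $t$. Either way, the makespan is at least $t$, which gives $M \ge \max l(u)$. The case $t=0$ is trivial.

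The main obstacle is the degenerate entry $s(u)$ for an agent that starts at a target and might not move: we must confirm that such an entry does not exceed the makespan spuriously. This entry equals $s(u)$, which may be positive if $d_I(u)<0$. In that case the agent does move, since $u$ sends agents upward, so its waiting time before departure counts toward its path length. If instead $d_I(u)\ge 0$, then $s(u)=0$ by precondition~\ref{cond: su-0} (for $d_I(u)>0$, check that $s(u)$ is never raised from $0$). We will verify these cases against Line~\ref{line: s'}.
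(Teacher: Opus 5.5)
Your argument is correct and is essentially the paper's. The paper just observes that $\max l(u)$ is the latest time an agent arrives at $u$, so taking the maximum over all nodes gives the makespan; your two inequalities are a careful expansion of that one line, and your handling of the entries $s(u)$ (a departure after initial waiting rather than an arrival, positive only when $d_I(u)<0$, in which case that agent is sent upward) covers a detail the paper's proof passes over silently.
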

\begin{proof}
    Recall that $\max l(u)$ is the latest time that an agent arrives at $u$. Therefore, taking the maximum over all nodes $u$ gives the makespan.
\end{proof}

\begin{theorem} \label{thm:makespan-upper-bound}
    Algorithm $process\_subtree$ gives a plan with makespan $M \leq n - k$.
\end{theorem}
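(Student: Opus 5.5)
We combine \cref{lem:makespan} with the preceding theorem, which gives $\max l(v_p) \leq p-1-t_p$ for every $1 \leq p \leq n$, so that $M = \max_p \max l(v_p) \leq \max_p (p-1-t_p)$. It then suffices to bound $p-1-t_p$ by $n-k$ for each node carrying a nonempty list; nodes with empty $l$ contribute $0 \leq n-k$ by the convention $\max l(u)=0$.

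The key observation is that $p-1-t_p$ counts the non-target nodes among $v_1,\dots,v_{p-1}$, since $t_p$ counts the targets among them. This quantity is nondecreasing in $p$ and is at most the total number of non-target nodes, namely $n-k$. Hence $p-1-t_p \leq n-k$ for every $p$, and the theorem follows immediately.

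The main point requiring care is that the preceding theorem is genuinely applied to the final value of $l(v_p)$. That is, entries added after $v_p$ is processed must not break the bound. We will check this from the algorithm: by $get\_down$, $get\_up$ and the order of \Cref{alg:process-st}, every entry of $l(u)$ is appended before $u$ itself iterates over $l(u)$ at line \ref{line:self-start}. After that point, $u$ is marked processed and receives nothing further, because agents only enter $u$ from its parent (before $u$ is called) or from children with negative demand (processed before $u$). Thus $\max l(v_p)$ is final when $v_p$ is processed, and the bound of the preceding theorem refers to the final list. This gives the final list the same bound used by \cref{lem:makespan}, yielding $M \leq n-k$.
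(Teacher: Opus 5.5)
Your proposal is correct and is essentially the paper's argument. The paper observes that $p-1-t_p$ is nondecreasing in $p$, reduces to $n-1-t_n$, and uses $t_n \geq k-1$; this is the same as your reading of $p-1-t_p$ as the number of non-target nodes among $v_1,\dots,v_{p-1}$, hence at most $n-k$. Your extra check that $l(v_p)$ receives no new entries once $v_p$ is marked processed is sound, and it makes explicit a point the paper leaves implicit when combining the preceding theorem with \cref{lem:makespan}.
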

\begin{proof}
    For all $1 \leq p < n$, we have that $p - 1 - t_p \leq (p + 1) - 1 - t_{p+1}$, so that $\max_{1\leq p\leq n} p - 1 - t_p = n - 1 - t_n$. Therefore, $\max_{u\in T} \max l(u) \leq n - 1 - t_n$. As $t_n$ is the number of targets in $v_1, \ldots, v_{n-1}$, we have $t_n \geq k - 1$. Finally, by lemma \ref{lem:makespan},
    \[
    M = \max_{u\in T} \max l(u) \leq n - t_n - 1 \leq n-k.
    \]
\end{proof}
\begin{corollary} \label{thm:sum-of-costs-upper-bound}
    Algorithm $process\_subtree$ gives a plan with sum of costs $S \leq k(n-k)$.
\end{corollary}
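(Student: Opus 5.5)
The plan is to bound each agent's cost by the makespan and then sum over agents. There are exactly $k$ agents. By the definition recalled before the suboptimality results, an agent's path length (its cost) counts its move actions and its wait actions away from its target. It therefore suffices to show that each agent's cost is at most the time at which it last arrives at its target, and that this arrival time is at most $M$.

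The first step is to use the structure established for \directtraffic{}: every agent's path is an initial sequence of wait actions followed by uninterrupted movement. Waits happen at the start node, via $s(u)$ and the insertion of $s(u)$ into $l(u)$ on Line~\ref{line:start-agent}. Movement proceeds one edge per timestep, because \sendagent{} appends $t+1$ to the list of the receiving node. Once an agent reaches the node where it is left, which is the final entry of that node's list by $leave(u)$, it never moves again. So the cost of an agent equals its arrival time at its final node. In the case where an agent starts on a target and never leaves, its cost is $0$.

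Each arrival time is an entry of some list $l(u)$, so it is at most $\max_{u} \max l(u)$. By \cref{lem:makespan} this equals $M$, and by \cref{thm:makespan-upper-bound} we have $M \le n-k$. Summing over the $k$ agents then gives $S \le kM \le k(n-k)$.

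The main subtle point is the first step: confirming that no agent incurs cost after reaching its final target. Such cost could arise if an agent sat at a target that it later leaves, which would be a wait counted as ``away from its target''. I will handle this with $leave(u)$: only the last agent to arrive at a target stays there, and every other agent passing through leaves at the next timestep. Hence any wait an agent performs occurs before its movement, and all of it is bounded by its final arrival time, which is at most $M$.
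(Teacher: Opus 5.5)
Your proposal is correct and follows essentially the same route as the paper: each of the $k$ agents has cost at most the makespan $M \le n-k$ (Theorem~\ref{thm:makespan-upper-bound}), so summing gives $S \le kM \le k(n-k)$. Your extra justification, via $leave(u)$, that an agent's cost is bounded by its final arrival time spells out what the paper simply asserts in one line.
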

\begin{proof}
    The makespan $M$ gives the longest time it takes for any agent to reach its target, which is equivalent to the number of actions for one agent in the sum of costs objective. There are $k$ agents, each with a number of actions less than or equal to $n-k$, and so the total number of actions $S\leq k(n-k)$.
\end{proof}
\begin{corollary} \label{thm:OPT-upper-k-n-k}
    The plan length $OPT$ of UPMT is bounded by $OPT \leq k(n-k)$.
\end{corollary}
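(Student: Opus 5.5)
The plan is to derive $\OPT \leq k(n-k)$ by comparing the UPMT optimum with the MAPF plan produced by \directtraffic{}, whose sum of costs is at most $k(n-k)$ by \cref{thm:sum-of-costs-upper-bound}.

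First, I would recall from \cref{thm:optimality} that $\OPT = \sum_{u\in V} |d_I(u)|$, where $d_I$ denotes the initial demand. This quantity also equals the number of edge traversals in the plan output by \directtraffic{}. In that algorithm, every move is issued by \sendagent{}, and each output move either increments a negative demand (Line \ref{line: inc-demand}) or decrements a positive one (Line \ref{line: red-demand}). In both cases $\sum_u |d(u)|$ drops by exactly one. Moreover, at termination all demands are zero by \cref{thm:feasible}. Hence the number of moves output by \directtraffic{} is exactly $\sum_u |d_I(u)| = \OPT$. Each move along an edge is performed by some agent, so $\OPT$ equals the total number of move actions, summed over all agents.

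Next, I would relate the number of moves to the sum of costs. In the MAPF setting, an agent's path length counts both its move actions and its wait actions away from its target, so that path length is at least its number of moves. Summing over the $k$ agents gives $\OPT \leq S$. Applying \cref{thm:sum-of-costs-upper-bound} then yields $\OPT \leq S \leq k(n-k)$.

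The only delicate step is the claim that the MAPF plan contains exactly $\OPT$ moves. This requires confirming that \sendagent{} outputs a move only on the two branches that change the demand. The third case, in which the last agent stays at a target, outputs nothing, consistent with the counting lemma that there are $\OPT + k$ calls but only $\OPT$ of them produce moves. Alternatively, one can avoid exact equality and use only the lower bound of \cref{thm:lower_bound}: the MAPF plan, read as a sequence of single moves, is a feasible plan. It is collision-free, so ordering simultaneous moves by their target times gives a valid sequential plan, because each move goes into a node vacated at an earlier time. Therefore $\OPT \leq \#\text{moves} \leq S$.
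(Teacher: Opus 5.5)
Your proposal is correct and is essentially the paper's own argument: the \directtraffic{} plan contains exactly $\OPT$ moves, and its sum of costs is that number plus non-negative waits, so $\OPT \le S \le k(n-k)$. The paper simply asserts the move count, whereas you justify it properly via the demand bookkeeping in \sendagent{}, $\OPT=\sum_{u}|d_I(u)|$, and zero final demands.

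Your optional sequentialization remark is not needed, and as phrased it is slightly off. In this plan an agent can enter a node in the same timestep another agent leaves it (e.g.\ node $D$ at time $2$ in Figure~\ref{fig:DemandExample2}), so ordering by time does not work. Instead, simultaneous moves must be ordered so that each exit precedes the corresponding entry, which is possible because such following chains cannot cycle in a tree.
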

\begin{proof}
    The sum of costs $S$ consists of $OPT$ moving actions plus a non-negative number of wait actions. So $OPT \leq S \leq k(n-k)$.
\end{proof}

\begin{theorem} \label{thm:bounds-tight}
    The upper bounds in Theorem \ref{thm:makespan-upper-bound} and Corollaries \ref{thm:sum-of-costs-upper-bound} and \ref{thm:OPT-upper-k-n-k} are tight for all $n, k$.
\end{theorem}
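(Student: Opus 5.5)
Tightness means that for every admissible pair $(n,k)$ with $1 \le k \le n-1$ (the case $k=n$ or $k=0$ being trivial, since all bounds are $0$) we must exhibit an instance where the bounds are attained. Since $\OPT \le S \le k\cdot M$ and $M \le n-k$, it suffices to construct, for each $(n,k)$, a single instance in which $\OPT = k(n-k)$. Then every feasible plan has sum of costs at least $\OPT = k(n-k)$, and the plan output by \directtraffic{} therefore has $S = k(n-k)$. Moreover, some agent must make at least $n-k$ moves (by averaging $k(n-k)$ moves over $k$ agents), so $M \ge n-k$ for every plan, including the algorithm's.

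The construction I would use is the path $v_1 - v_2 - \cdots - v_n$. Place the $k$ pebbles on $v_1,\dots,v_k$ and the $k$ targets on $v_{n-k+1},\dots,v_n$. Root the tree at $v_n$. By \cref{thm:optimality}, $\OPT = \sum_u |\demand{u}|$, so I compute the demands directly. For the subtree $T_{v_i} = \{v_1,\dots,v_i\}$ the demand is (number of targets among $v_1..v_i$) minus (number of pebbles among $v_1..v_i$). This gives:
\begin{itemize}
\item $-\min(i,k)$ for $i \le n-k$;
\item $-k + (i-(n-k))$ for $i > n-k$.
\end{itemize}
Summing the absolute values, the first range contributes $\sum_{i=1}^{n-k}\min(i,k)$ and the second contributes $\sum_{j=1}^{k}(k-j)$.

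This arithmetic is the step I expect to need care with, because of the overlap case $2k>n$, where starting nodes and targets coincide. In that case, however, the pebbles on coinciding nodes still need to shift, and the demand formula handles this automatically. I would simplify the sum by counting edge-by-edge instead. The edge $v_iv_{i+1}$ has $|\demand{v_i}| = \min(i,\,k,\,n-i,\,n-k)$: in every direction it is limited by the pebbles on one side and the targets on the other. Hence $\sum_i \min(i,n-i,k,n-k)$ should equal $k(n-k)$. An alternative cleaner argument avoids the sum altogether: the unique simple-path structure together with the order-preserving matching on a path shows that pebble $v_j$ must travel to $v_{n-k+j}$, a distance of exactly $n-k$. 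This gives $\OPT = k(n-k)$ directly.

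On reflection, I would present that second argument. By \cref{thm:optimality}, $\OPT$ is attained by moves crossing each edge $v_iv_{i+1}$ exactly $|\demand{v_i}|$ times. I would then verify $|\demand{v_i}| = \min(i,n-i,k,n-k)$ and observe that $\sum_{i=1}^{n-1}\min(i,n-i,k,n-k)$ equals $k(n-k)$. For this identity I would count pairs $(j,i)$ with pebble index $j\le k$ and $i$ ranging over the $n-k$ edges between $v_j$ and $v_{n-k+j}$: each edge $i$ lies in exactly $\min(i,n-i,k,n-k)$ such intervals. This double count is the main obstacle, and I would check it against the definition of the intervals $[j, n-k+j-1]$.
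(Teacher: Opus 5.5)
Your argument is correct. It uses the same instances as the paper, a path with the $k$ pebbles packed at one end and the $k$ targets at the other, but you verify tightness differently.

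The paper checks one instance ($n=8$, $k=3$) by reading off the makespan and sum of costs of the plan produced by \directtraffic{}. It notes there are no wait actions, so $S=\OPT$, and asserts that a similar construction works for all $n,k$. Incidentally, the values $4$ and $12$ it states do not equal $n-k=5$ and $k(n-k)=15$ for that eight-node figure. The true values are $5$ and $15$; already $\sum_u|\demand{u}|=1+2+3+3+3+2+1=15$.

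You instead prove $\OPT=k(n-k)$ for every $n,k$, via $|\demand{v_i}|=\min(i,k,n-i,n-k)$ and the double count over the intervals $[j,\,n-k+j-1]$. You then get the other two equalities from the chain $\OPT\le S\le kM\le k(n-k)$. This route buys three things:
\begin{itemize}
\item It avoids tracing the timing data $l(u)$, $s(u)$ of the MAPF algorithm.
\item It explicitly covers the overlapping regime $2k>n$, which the paper's ``similar construction'' leaves implicit.
\item It gives a stronger conclusion. By the edge-crossing lower bound, which applies verbatim to synchronous plans, every feasible plan on these instances makes at least $k(n-k)$ moves. Hence every plan has sum of costs at least $k(n-k)$ and makespan at least $n-k$, so the bounds are tight even for the optimal makespan and sum of costs. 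It also shows that the algorithm's plan has no waits, since $S$ equals its number of moves.
\end{itemize}
Your order-preservation argument is an equally valid and shorter way to get $\OPT\ge k(n-k)$: pebbles on a path cannot overtake one another, so the $j$-th pebble must end at $v_{n-k+j}$.

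There are two small slips to tidy.
\begin{itemize}
\item Your first piecewise formula $\demand{v_i}=-k+(i-(n-k))$ for $i>n-k$ is only right when $i\ge k$. For $n-k<i<k$ the value is $-(n-k)$. Keep only the $\min$ formula you switched to.
\item For $k=0$ the makespan bound is $n-k=n$, not $0$, and it is not attained. So that case is not ``trivial''; the statement should be read with $1\le k\le n$.
\end{itemize}
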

\begin{proof}
    The configuration in Figure \ref{fig:TightUpperExample} has eight nodes and three agents. The agents are denoted by green diamonds and the targets are red rectangles. The MAPF solution will have a makespan of four, so $M = n-k$ so that the bound in Theorem \ref{thm:makespan-upper-bound} is tight. The sum of costs is 12, so that $S = k(n-k)$ and the bound in Corollary \ref{thm:sum-of-costs-upper-bound} is tight. As there are no wait actions, $S = OPT$ and the bound in Corollary \ref{thm:OPT-upper-k-n-k} is tight. Although this example uses $n=8$ and $k=3$, a similar construction shows the bound is tight for all $n, k$.

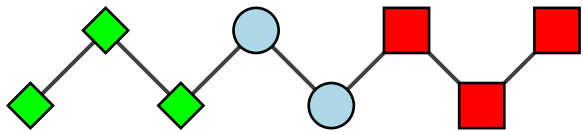
\begin{figure}
    \centering
     \begin{tikzpicture}
        \Vertex[x=0, y=0, color=green, shape = diamond]{A} 
        \Vertex[x=1, y=1, color=green, shape = diamond]{B}
        \Vertex[x=2, y=0, color=green, shape = diamond]{C}
        \Vertex[x=3, y=1]{D}
        \Vertex[x=4, y=0]{E}
        \Vertex[x=5, y=1, color=red, shape = rectangle]{F} 
        \Vertex[x=6, y=0, color=red, shape = rectangle]{G}
        \Vertex[x=7, y=1, color=red, shape = rectangle]{H}
        \Edge(A)(B)
        \Edge(B)(C)
        \Edge(C)(D)
        \Edge(D)(E)
        \Edge(E)(F)
        \Edge(F)(G)
        \Edge(G)(H)
    \end{tikzpicture}
    \caption{Example configuration where the bounds in Theorem \ref{thm:makespan-upper-bound} and corollaries \ref{thm:sum-of-costs-upper-bound} and \ref{thm:OPT-upper-k-n-k} are tight.}
    \label{fig:TightUpperExample}
\end{figure}
\end{proof}

\section{Bounds on average length of an optimal UPMT plan}
Corollary \ref{thm:OPT-upper-k-n-k} states that $OPT \leq k(n-k)$ for all trees and all configurations of agents and targets. Theorem \ref{thm:bounds-tight} states that this bound is tight for all choices of $n$ and $k$. Here, we assess how accurately this bound estimates the optimal plan length. Figure \ref{fig:n-vs-OPT} compares experimental data of $OPT$ with $k(n-k)$ for $1 \leq n \leq 10^6$, while Figure \ref{fig:k-vs-OPT} makes this comparison for $n = 10^6$ and $1 \leq k \leq n-1$. 
\begin{figure}[ht]
    \centering
    \includegraphics[width=\textwidth]{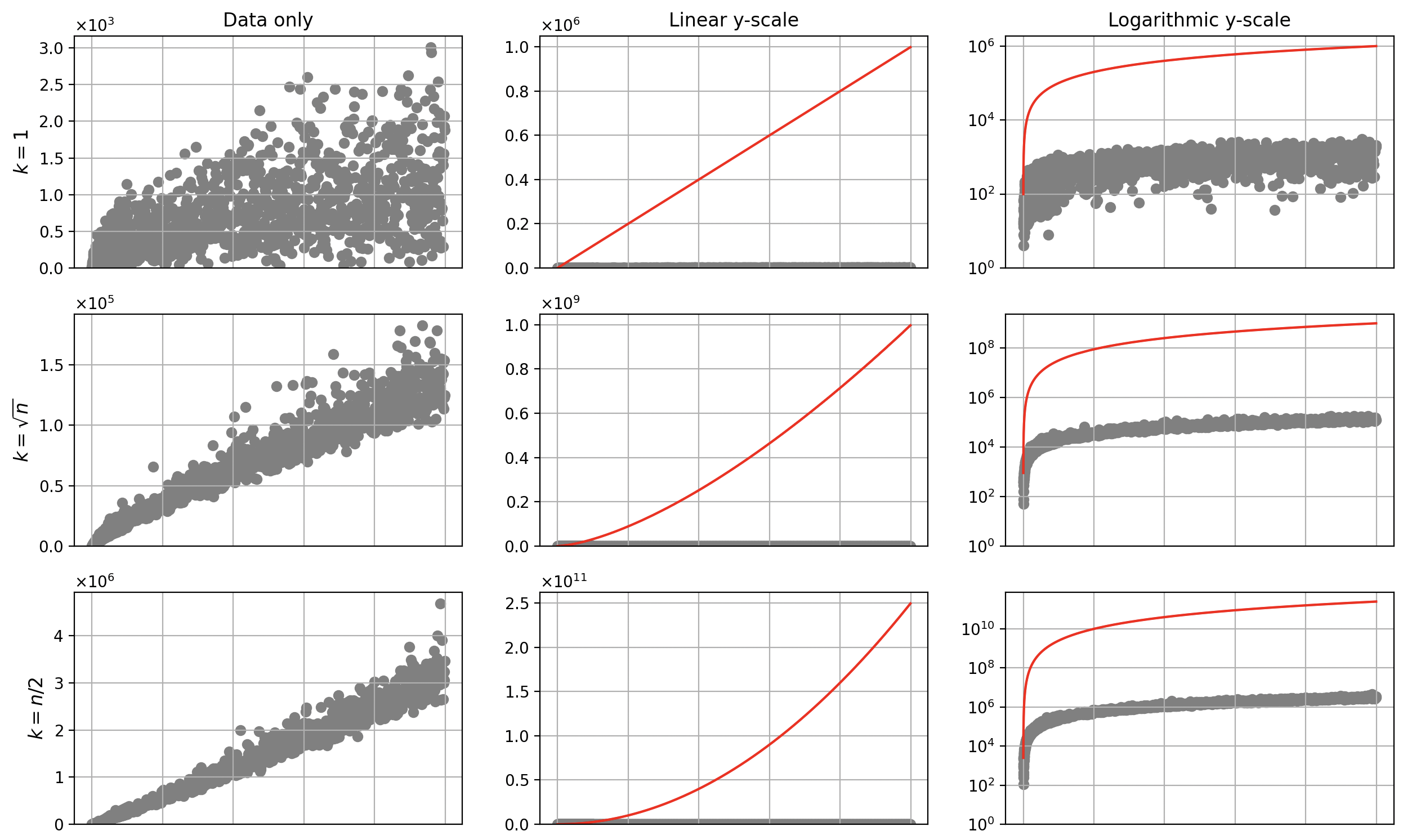}
    \caption{Comparison of $OPT$ (grey dots) and $k(n-k)$ (red line) in random labeled trees with up to $10^6$ nodes.}
    \label{fig:n-vs-OPT}
\end{figure}

\begin{figure}[ht]
    \centering
    \includegraphics[width=\textwidth]{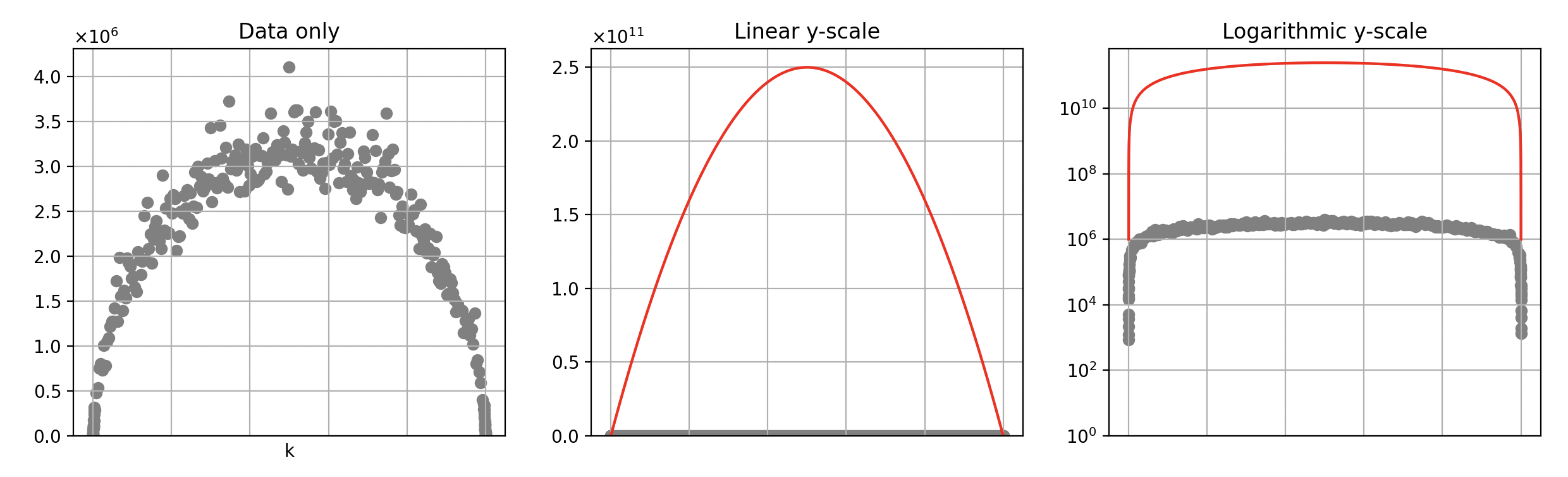}
    \caption{Comparison of $OPT$ (grey dots) and $k(n-k)$ (red line) over $k$ in random labeled trees with $10^6$ nodes.}
    \label{fig:k-vs-OPT}
\end{figure}

This data was obtained by picking a random labeled tree on $n$ nodes, a random sample of $k$ nodes for the initial agent positions, and a (potentially overlapping) random sample of $k$ nodes for the target positions. 

We can see that for these randomly chosen trees, the observed length of an optimal plan is far less than our upper bound $k(n-k)$. The graphs with logarithmic $y$-scale in particular suggest that we might be some \emph{power} off, ie. on average, $OPT = O((k(n-k))^c)$ for some $c < 1$. So far, this is only speculation. The relation $\OPT = \sum_{u \in V} |\demand{u}|$ enables a much more robust estimation of the optimal plan length.

Let $X_n$ be a distribution of rooted trees on $n$ nodes, with the root node labeled $n$ and the non-root nodes labeled $1, \ldots, n-1$. Suppose now that a tree $T$ is selected at random from the distribution $X_n$, and that two (possibly equal) nodes $u, v$ are selected uniformly at random. Let $D(X_n)$ denote the expected value of the distance between $u$ and $v$ in $T$, taken over all $u, v, T$. Then $D$ is \emph{not} a random variable, although its value depends on the distribution $X_n$.

\begin{lemma} \label{lem:D-value} Given a distribution $X_n$ of trees, the expected distance $D$ between two nodes is
    \[\frac{D}{n-1} = \sum_{T \in X_n} \text{Pr}(X_n = T)\sum_{u = 1}^{n-1}\frac{1}{n-1} 2 \frac{|T_u|}{n}\frac{n -|T_u|}{n}.\]
\end{lemma}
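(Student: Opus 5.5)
We prove the formula by conditioning on the tree and decomposing the distance into contributions from individual edges. The distance between $u$ and $v$ in a tree is the number of edges on the unique path joining them. Since the root is labeled $n$, each non-root node $w \in \{1,\dots,n-1\}$ corresponds bijectively to the edge joining $w$ to its parent. The plan is to write
\[
\mathrm{dist}_T(u,v) = \sum_{w=1}^{n-1} \mathbf{1}[\text{edge } (w,\mathit{parent}(w)) \text{ lies on the path from } u \text{ to } v],
\]
and then apply linearity of expectation, first over the random pair $(u,v)$ with $T$ fixed, and then over $T \sim X_n$.

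The key step is the per-edge probability. Removing the edge from $w$ to its parent splits $T$ into $T_w$, which has $|T_w|$ nodes, and its complement, which has $n-|T_w|$ nodes. This uses the fact that a tree path between two nodes is unique. The edge lies on the path from $u$ to $v$ exactly when one endpoint lies in $T_w$ and the other does not. Since $u$ and $v$ are drawn independently and uniformly from the $n$ nodes, with repetition allowed, this event has probability
\[
2\,\frac{|T_w|}{n}\cdot\frac{n-|T_w|}{n}.
\]
The factor $2$ accounts for the two orderings. The case $u=v$ is handled automatically, because it never separates the two sides and contributes distance $0$.

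Summing over $w$ gives
\[
\mathbb{E}[\mathrm{dist}\mid T]=\sum_{w=1}^{n-1} 2\frac{|T_w|}{n}\frac{n-|T_w|}{n}.
\]
Averaging over $T$ with weights $\Pr(X_n=T)$ gives $D$. Dividing both sides by $n-1$, which amounts to inserting the factor $\frac{1}{n-1}$ inside the sum, yields exactly the stated expression. That form presents $D/(n-1)$ as the average, over a uniformly chosen non-root node, of the crossing probability.

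No step is a serious obstacle. The only points needing care are stating precisely that "selected uniformly at random" means independently with replacement, which is what justifies the factor $2|T_w|(n-|T_w|)/n^2$, and confirming that the edge-to-non-root-node bijection covers all $n-1$ edges exactly once.
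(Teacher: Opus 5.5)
Your proposal is correct and follows essentially the same route as the paper. The paper also writes the distance as the number of edges $(u, \mathit{parent}(u))$ separating the two chosen nodes, computes the crossing probability $2\frac{|T_u|}{n}\frac{n-|T_u|}{n}$, and averages over $T$ before dividing by $n-1$.
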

\begin{proof}
    Given a fixed tree $T$ and non-root node $u$, the probability that two randomly chosen nodes are on opposite sides of the edge $(u, parent(u))$ is $\frac{|T_u|}{n}\frac{n - |T_u|}{n} + \frac{n-|T_u|}{n}\frac{|T_u|}{n}$. The average distance between two nodes is equal to the expected number of nodes $u$ for which the two nodes are on opposite sides of this edge, so that
    \[
    D = \sum_{T \in X_n} \text{Pr}(X_n = T)\sum_{u = 1}^{n-1} 2 \frac{|T_u|}{n}\frac{n -|T_u|}{n}.
    \]
    The lemma follows.
\end{proof}

Now, let $\mathbb{E}(OPT(X_n, k))$ denote the average optimal plan length over all configurations made by choosing a tree $T$ from $X_n$, then placing agents on a random set $A$ of $k$ nodes, and placing targets on a random set $B$ of $k$ nodes. Placing an agent and a target at the same node is allowed in this construction.

\begin{theorem} \label{thm:e-opt-bound}
     For a distribution $X_n$ of trees, the expected plan length is bounded by $\mathbb{E}(OPT(X_n, k)) \leq \sqrt{Dk(n-k)}$.
\end{theorem}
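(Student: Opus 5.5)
By \cref{thm:optimality}, $\OPT = \sum_{u \in V}|\demand{u}|$, and since $\demand{r}=0$ we can write $\OPT = \sum_{u=1}^{n-1} |\demand{u}|$ over the non-root nodes. Linearity of expectation then gives $\mathbb{E}(\OPT(X_n,k)) = \sum_T \Pr(X_n=T)\sum_{u=1}^{n-1}\mathbb{E}|\demand{u}|$, where the inner expectation is over the random sets $A$ (agents) and $B$ (targets) with $T$ fixed. The plan is to bound each $\mathbb{E}|\demand{u}|$ by the square root of a variance, and then use Cauchy--Schwarz (concavity of the square root) to pull the whole sum under a single square root. The resulting expression should match the formula for $D$ in \cref{lem:D-value}.

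Fix $T$ and a non-root node $u$, and let $m=|T_u|$. Then $\demand{u} = |B\cap T_u| - |A\cap T_u|$. The sets $A$ and $B$ are independent uniform $k$-subsets, so $|A\cap T_u|$ and $|B\cap T_u|$ are independent hypergeometric variables with mean $km/n$ each. Hence $\mathbb{E}\,\demand{u}=0$ and
\[
\mathrm{Var}(\demand{u}) = 2\,k\frac{m}{n}\frac{n-m}{n}\frac{n-k}{n-1}.
\]
By Jensen, $\mathbb{E}|\demand{u}| \le \sqrt{\mathbb{E}\,\demand{u}^2} = \sqrt{\tfrac{2k(n-k)}{n-1}\cdot\tfrac{m(n-m)}{n^2}}$. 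The step I expect to need the most care is this hypergeometric variance, in particular the finite-population factor $\frac{n-k}{n-1}$. It is exactly this factor that produces $(n-k)$ in the final bound rather than $n$.

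Finally, apply Jensen a second time, now to the uniform average over $u\in\{1,\dots,n-1\}$ and the distribution over $T$ together:
\[
\mathbb{E}(\OPT) \le (n-1)\sum_T \Pr(X_n=T)\sum_{u=1}^{n-1}\frac{1}{n-1}\sqrt{\tfrac{k(n-k)}{n-1}\cdot 2\tfrac{|T_u|}{n}\tfrac{n-|T_u|}{n}}
\le (n-1)\sqrt{\tfrac{k(n-k)}{n-1}\cdot\tfrac{D}{n-1}}.
\]
Here the inner weighted average is precisely $D/(n-1)$ by \cref{lem:D-value}. Simplifying the right-hand side gives $\sqrt{Dk(n-k)}$, which is the claimed bound.
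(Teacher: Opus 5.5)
Your proposal is correct and follows essentially the same route as the paper. You write $\mathbb{E}(\OPT)$ as a weighted sum of $\mathbb{E}|\demand{u}|$ over non-root nodes, bound each term by $\sqrt{2\,\mathrm{Var}}$ of the hypergeometric count (with the $\frac{n-k}{n-1}$ factor), and then use concavity of the square root together with \cref{lem:D-value}. Your explicit use of the independence of $A$ and $B$ to get $\mathbb{E}\,\demand{u}^2 = 2\,\mathrm{Var}$ is slightly more careful than the paper's phrasing.
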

\begin{proof}

We can express $\mathbb{E}(OPT(X_n, k))$ as
\begin{equation} \label{eq:E-opt-initial}
\sum_{T \in X_n} \text{Pr}(X_n = T)\sum_{u = 1}^{n-1}\sum_{\substack{A \subseteq [n] \\ |A| = k}}\frac{1}{{n\choose {k}}}\sum_{\substack{B \subseteq [n] \\ |B| = k}}\frac{1}{{n\choose{k}}}||B \cap T_u| - |A \cap T_u||,
\end{equation}
where we sum over all trees $T$, non-root nodes $u$, sets $A$ of agent positions and sets $B$ of target positions. Note that, for a given construction $T$, $u$, $A$ and $B$, the expression $||B \cap T_u| - |A \cap T_u||$ is equal to $|\demand{u}|$.

Let $\text{Ag}$ denote the random variable associated with $|A \cap T_u|$, for fixed $T_u$ and randomly selected $A$. Similarly, let $\text{Ta}$ denote the random variable associated with $|B \cap T_u|$. We can see that $\text{Ag}$ and $\text{Ta}$ follow the same distribution. In particular,
\[
\text{Ag} \sim \text{Ta} \sim \text{Hypergeometric}(n, |T_u|, k).
\]
By these definitions we have
\begin{equation} \label{eq:E-OPT-replacement-1}
    \sum_{\substack{A \subseteq [n] \\ |A| = k}}\frac{1}{{n\choose {k}}}\sum_{\substack{B \subseteq [n] \\ |B| = k}}\frac{1}{{n\choose{k}}}||B \cap T_u| - |A \cap T_u|| = \mathbb{E}(|\text{Ta} - \text{Ag}|).
\end{equation}
    Using Jensen's inequality, we obtain
\begin{equation} \label{eq:E-OPT-Jensens}
    \mathbb{E}(|\text{Ta} - \text{Ag}|) \leq \sqrt{\mathbb{E}((\text{Ta} - \text{Ag})^2)} = \sqrt{2 \text{Var}(\text{Ag})},
\end{equation}
using the fact $\text{Ag} \sim \text{Ta}$. As $\text{Ag} \sim \text{Hypergeometric}(n, |T_u|, k)$, we have 
\begin{equation} \label{eq:hypergeom-variance}
\text{Var(Ag)} = \frac{k(n-k)}{n-1}\frac{|T_u|}{n}\frac{n -|T_u|}{n},
\end{equation}
so that combining equations (\ref{eq:E-opt-initial}), (\ref{eq:E-OPT-replacement-1}), (\ref{eq:E-OPT-Jensens}) and (\ref{eq:hypergeom-variance}) yields
\begin{align*}
\mathbb{E}(OPT(X_n, k)) &\leq  \sum_{T \in X_n} \text{Pr}(X_n = T)\sum_{u = 1}^{n-1} \sqrt{\frac{k(n-k)}{n-1}}\sqrt{2\frac{|T_u|}{n}\frac{n -|T_u|}{n}}\\
&=  \sqrt{k(n-k)(n-1)} \sum_{T \in X_n} \text{Pr}(X_n = T)\sum_{u = 1}^{n-1} \frac{1}{n-1}\sqrt{2\frac{|T_u|}{n}\frac{n -|T_u|}{n}}\\
&\leq \sqrt{k(n-k)(n-1)} \sqrt{\frac{D}{n-1}}.\\
\end{align*}
Here the last line is obtained by the concavity of $\sqrt{\cdot}$, the fact that $\sum_{T \in X_n}\text{Pr}(X_n = T)\sum_{u = 1}^{n-1} \frac{1}{n-1} = 1$, and Lemma \ref{lem:D-value}. It follows that $\mathbb{E}(OPT(X_n, k)) \leq \sqrt{Dk(n-k)}$.
\end{proof}
When the distribution $X_n$ is uniform, \citet{meirmoon1970} find $D \sim \sqrt{\pi n/2}$. In this case we have
\[
\mathbb{E}(OPT(X_n, k)) \leq \sqrt{k(n-k)\sqrt{\pi n/2}}.
\]
In Figures \ref{fig:n-vs-OPT-expected} and \ref{fig:k-vs-OPT-expected} we plot this bound against the data from Figures \ref{fig:n-vs-OPT} and \ref{fig:k-vs-OPT} respectively.

\begin{figure}[ht]
    \centering
    \includegraphics[width=\textwidth]{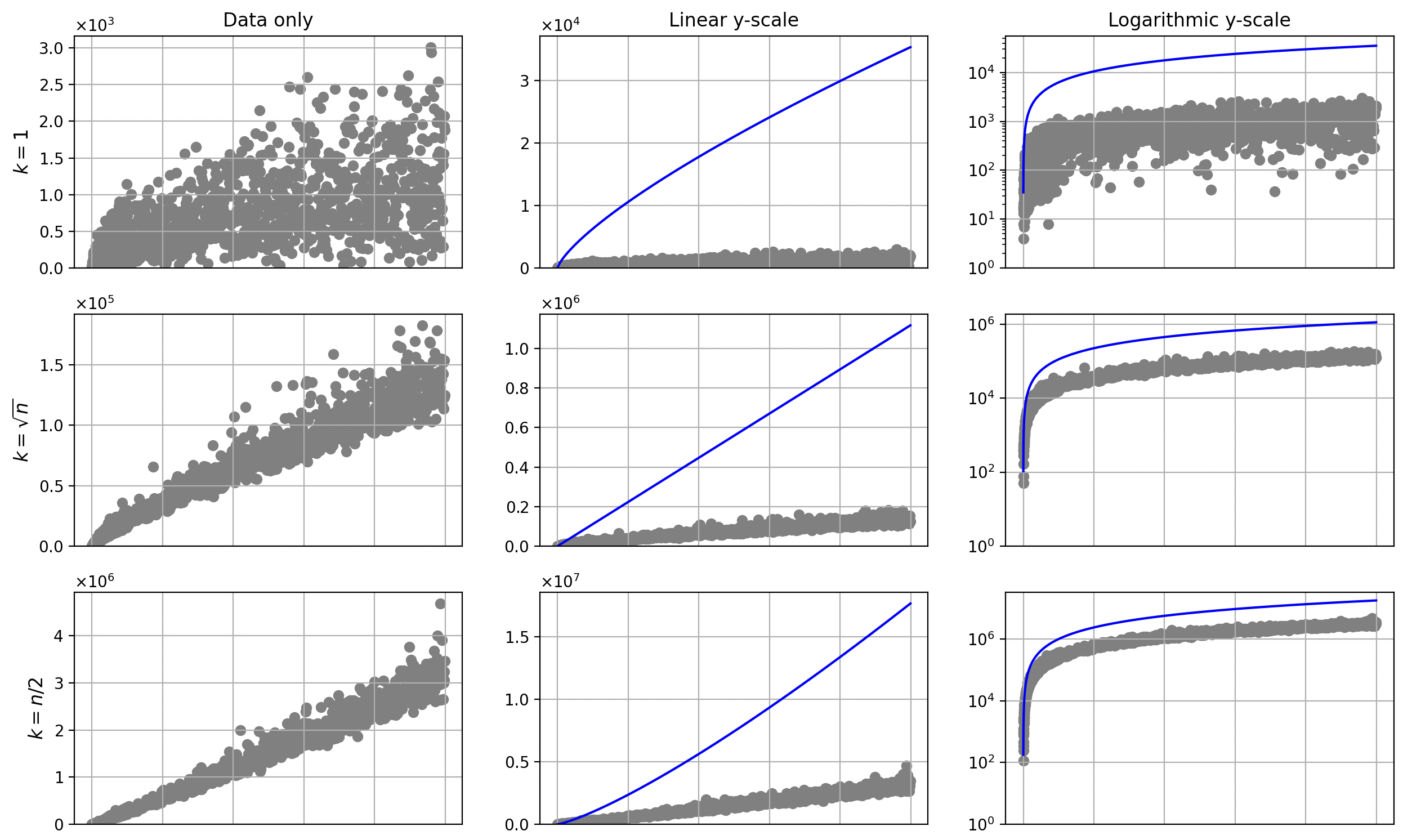}
    \caption{Comparison of $OPT$ (grey dots) and $\sqrt{k(n-k)\sqrt{\pi n/2}}$ (blue line) in random labeled trees with up to $10^6$ nodes.}
    \label{fig:n-vs-OPT-expected}
\end{figure}

\begin{figure}[ht]
    \centering
    \includegraphics[width=\textwidth]{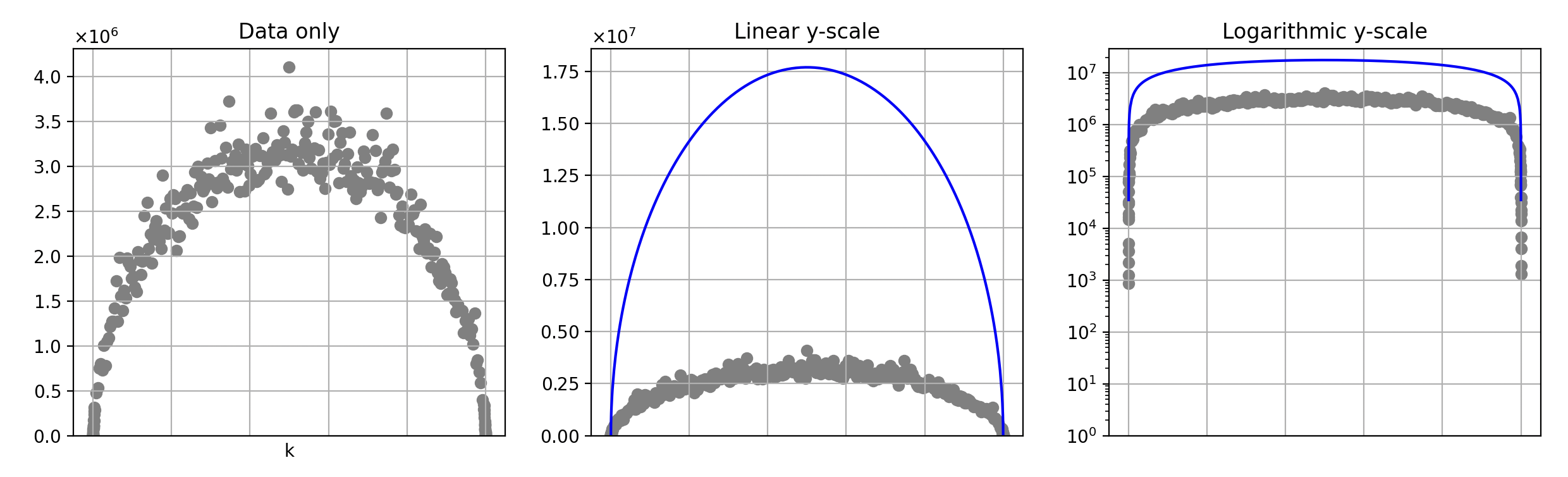}
    \caption{Comparison of $OPT$ (grey dots) and $\sqrt{k(n-k)\sqrt{\pi n/2}}$ (blue line) over $k$ in random labeled trees with $10^6$ nodes.}
    \label{fig:k-vs-OPT-expected}
\end{figure}

This is a significant improvement on the $k(n-k)$ upper bound for estimating solution size, but is still loose. It is particularly loose for fixed small $k$. Indeed, when $k = 1$, we should have $\mathbb{E}(OPT(X_n, k)) = D$ by their respective definitions, but this upper bound gives $\sqrt{D(n-1)}$. Note $D \leq n-1$, but for uniform (and many other) distributions $X_n$ we have $D \ll n-1$.

In the case where $X_n$ is the path on $n$ vertices with probability 1, we have $D = (n-1)/3$ from the mean absolute difference of the uniform distribution. This is a special case where $D = \Theta(n)$, and the bound $D < n-1$ is as close as possible. From Theorem \ref{thm:e-opt-bound} we obtain 
\[
\mathbb{E}(OPT(X_n, k)) \leq \sqrt{k(n-k)(n-1)/3}.
\]
As this bound is loosest for small $k$, and for $k = 1$ we have $\mathbb{E}(OPT(X_n, k)) = \Theta(\sqrt{nk(n-k)})$, we have a heuristic suggesting that
\[
\mathbb{E}(OPT(X_n, k)) = \Theta\left(\sqrt{nk(n-k)}\right)
\]
for all $k$. This is further supported by the data in Figure \ref{fig:path-graph-avg}.

\begin{figure}[ht]
    \centering
    \includegraphics[width=\textwidth]{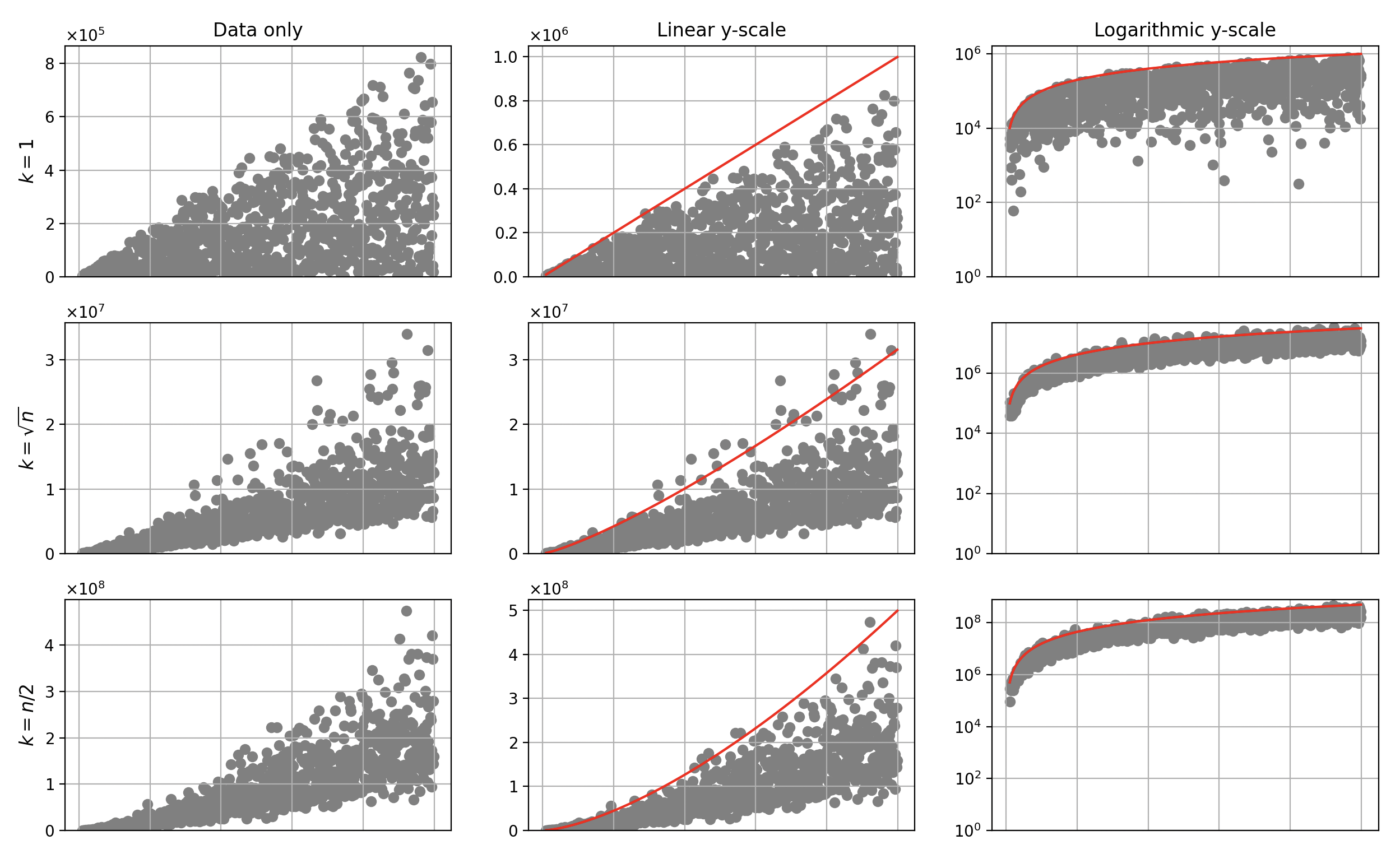}
    \caption{Comparison of $OPT$ (grey dots) and $\sqrt{nk(n-k)}$ (red line) over $k$ in paths with up to $10^6$ nodes.}
    \label{fig:path-graph-avg}
\end{figure}

In particular, if $k$ is some fixed proportion of $n$, the expected number of moves for each individual pebble would be $\Theta(\sqrt{n})$. Generally, for any distribution $X_n$, Theorem \ref{thm:e-opt-bound} gives the bound $O(\sqrt{D})$ for the average distance travelled by an individual pebble.

\section{Conclusions}
\label{sec:conclusion}
Remarkably, the runtime of the UPMT algorithm is equal to the runtime of reading the input and writing the output.
Another interesting property of this algorithm is that for any node $u \in V$ such that $d(u)=0$, UPMT can be solved in $T_u$ independently of the rest of the tree, as no pebble will move from $u$ to its parent (if any). In particular, this means that the for loop at Line \ref{alg:balance_start_for} of \balance{} is ``embarrassingly parallelizable''.

Interestingly, \cref{thm:overall_runtime_UPMT} and \cref{thm:OPT-upper-k-n-k} together show that if plans have maximum length, the worst-case time complexity of our UPMT algorithm is in $O(n^2 \log n)$.
However, \cref{thm:e-opt-bound} on the expected plan length suggests that the average run times would be considerably faster: in $O(n^{1.25} \log n)$ on uniformly distributed trees, for example.

Independent subproblems are also interesting in the context of MAPF, as at least one move per independent subproblem can take place simultaneously without collision. However, we saw that optimal MAPF solutions cannot be found in general by following the demand function.  

Although we do not explicitly give results for general graphs, some of our results carry over by taking any spanning tree of the input graph.
Hence, the upper bound on the plan length (\cref{thm:OPT-upper-k-n-k}) carries over directly, and \cref{thm:overall_runtime_UPMT} provides the time-complexity to find an optimal plan for the chosen spanning tree, which is a feasible plan for the input graph.

Interesting future directions include using this work and so-called ``trans-shipment'' nodes~\citep{ardizzoni2023} to solve Pebble Motion problems on general graphs, as well as automated warehouse planning~\citep[see e.g.][]{kulich2019}. There is also scope for tighter bounds on average plan length, providing improved estimates for plan length in common cases.


\section*{Acknowledgments}

This research is partially supported by the Australian Research Council under the
Discovery Early Career Researcher Award DE240100042.


\bibliographystyle{plainnat}
\bibliography{pebble_tree2}

\end{document}